\setlist{nosep}
\newcommand{\tworows}[2]{\begin{tabular}{c}{#1}\\{#2}\end{tabular}}
\tikzstyle{para}=[rectangle,draw=black,minimum height=.2cm,fill=gray!10,rounded corners=1mm,font=\footnotesize]
\definecolor{ourblue}{RGB}{130,200,250}
\definecolor{ourdarkblue}{RGB}{30,60,140}
\definecolor{ourgreen}{RGB}{0,100,0}
\definecolor{ourred}{RGB}{180,20,30}
\definecolor{lipicsyellow}{RGB}{220,200,0}
\definecolor{ourorange}{RGB}{230,100,0}
\definecolor{ourgray}{RGB}{100,100,100}
\newtheorem{observation}[theorem]{Observation}
\newtheorem{rrule}{Reduction Rule}[section]
\theoremstyle{definition}%
\newtheorem{constr}{Construction}
\crefname{rrule}{Reduction Rule}{Reduction Rules}
\crefname{observation}{Observation}{Observations}
\crefname{lemma}{Lemma}{Lemmata}
\crefname{figure}{Figure}{Figures}
\Crefname{observation}{Obs}{Obs}
\renewcommand\subparagraph{%
 \@startsection {subparagraph}{5}{\z@ }{3.25ex \@plus 1ex
 \@minus .2ex}{-1em}{\normalfont \normalsize \bfseries }}%
\newcommand{\proofpar}[1]{\smallskip\noindent\emph{#1}}
\newcommand{\wx}[1]{\text{W[#1]}\xspace}
\newcommand{\wone}{\wx{1}}
\newcommand{\NP}{\text{NP}\xspace}
\newcommand{\classNP}{\text{NP}\xspace}
\newcommand{\classcoNP}{\text{coNP}\xspace}
\newcommand{\polyadvice}{{\text{poly}}\xspace}
\newcommand{\FPT}{\text{FPT}\xspace}
\newcommand{\NoKernelAssume}{\classNP $\subseteq$ \classcoNP/\polyadvice}
\newcommand{\NNoKernelAssume}{\classNP $\not\subseteq$ \classcoNP/\polyadvice}
\newcommand{\truevalue}{\textrm{true}}
\newcommand{\falsevalue}{\textrm{false}}
\newcommand{\fes}{\fint}
\newcommand{\probDef}[3]{
       \vspace{-4pt}
\begin{center}   
    \fbox{~\begin{minipage}{.95\textwidth}

      \noindent
      \normalsize\textsc{#1}
      
      \vspace{2pt}
      \setlength{\tabcolsep}{3pt}
      \renewcommand{\arraystretch}{1.0}
      \begin{tabularx}{\textwidth}{@{}lX@{}}
	\normalsize\emph{Input:} 	& \normalsize#2 \\
	\normalsize\emph{Question:} 	& \normalsize#3
      \end{tabularx}
    \end{minipage}}
    \end{center}
       \vspace{-4pt}
}
\newcommand{\probExactThreeFourSAT}{\textsc{Exact $(3,4)$-SAT}\xspace}
\newcommand{\probThreeSAT}{\textsc{3-SAT}\xspace}
\newcommand{\probMCClique}{\textsc{Multicolored Clique}\xspace}
\newcommand{\lifetime}{\ensuremath{\ell}}
\newcommand{\TG}{\ensuremath{\mathcal{G}}\xspace}
\newcommand{\TGfull}{\ensuremath{\mathcal{G}=(V, E_1, \ldots, E_\lifetime)}\xspace}
\newcommand{\TGcompact}{\ensuremath{\mathcal{G}=(V, (E_i)_{i\in[\lifetime]})}\xspace}
\newcommand{\UG}{\ensuremath{G_{\downarrow}}\xspace}
\newcommand{\ug}[1]{\ensuremath{G_\downarrow (#1)}\xspace}
\newcommand{\layer}{layer\xspace}
\newcommand{\timestep}{time step\xspace}
\newcommand{\RArrow}{\smallskip$(\Rightarrow)$:\xspace}
\newcommand{\LArrow}{\smallskip$(\Leftarrow)$:\xspace}
\newcommand{\nonstrpath}[1]{temporal~$(#1)$-path}
\newcommand{\nonstrpaths}[1]{temporal~$(#1)$-paths}
\newcommand{\npath}[1]{$(#1)$-path}
\newcommand{\N}{\mathbb{N}}
\newcommand{\I}{\mathcal{I}}
\newcommand{\TPP}{\probSRestlessPath}
\newcommand{\IPP}{\textsc{Independent Path}}
\newcommand{\yes}{\emph{yes}}
\newcommand{\no}{\emph{no}}
\newcommand{\tpath}[1][s,z]{$\wait$-restless temporal $(#1)$-path}
\newcommand{\twalk}[1][s,z]{$\wait$-restless temporal $(#1)$-walk}
\newcommand{\dipath}[1][s,z]{$(#1)$-dipath}
\newcommand{\probRestlessPath}{\textsc{Restless Temporal Path}\xspace}
\newcommand{\probSRestlessPath}{\textsc{Short Restless Temporal Path}\xspace}
\newcommand{\wait}{\ensuremath{\Delta}}
\newcommand{\ownparagraph}[1]{\smallskip\noindent\textbf{#1}\xspace}
\newcommand{\probDefRestlessPath}{
\probDef{\probRestlessPath} 
{A temporal graph \TGcompact, two distinct vertices $s,z\in V$, and an integer $\wait \leq \lifetime$.}
{Is there a $\Delta$-restless temporal $(s,z)$-path in~$\TG$?}
}
\tikzstyle{square}=[regular polygon,regular polygon sides=4]
\tikzstyle{vert2}=[circle,inner sep=1.5,fill=white,draw,minimum size=.2cm]
\tikzstyle{vert3}=[inner sep=1.5,fill=white,draw=black,minimum size=.2cm]
\tikzstyle{vertexset}=[circle,draw,thick,inner sep=3pt,fill=white]
\tikzstyle{evertex}=[circle,draw,fill=black, inner sep= 2pt]
\tikzstyle{evertex2}=[square,draw,fill=black, inner sep= 2pt]
\tikzstyle{edge}=[thick]
\pgfmathsetmacro\sprayRadius{.2pt}
\pgfmathsetmacro\sprayPeriod{.5cm}
\newcommand{\theslope}{0.7}
\title{The Computational Complexity of Finding Temporal Paths under Waiting Time Constraints}
\titlerunning{Finding Temporal Paths under Waiting Time Constraints}%
\newcommand{\tuaddress}{Technische Universit\"at Berlin, Algorithmics and Computational Complexity, Berlin, Germany}
\author{Arnaud Casteigts}{LaBRI, Universit\'e de Bordeaux, CNRS, Bordeaux INP, France}{arnaud.casteigts@labri.fr}{https://orcid.org/0000-0002-7819-7013}{Supported by the ANR, project ESTATE (ANR-16-CE25-0009-03).}
\author{Anne-Sophie Himmel}{\tuaddress}{anne-sophie.himmel@tu-berlin.de}{https://orcid.org/0000-0001-7905-7904}{Supported by the DFG, project FPTinP (NI 369/16).}
\author{Hendrik Molter}{\tuaddress}{h.molter@tu-berlin.de}{https://orcid.org/0000-0002-4590-798X}{Supported by the DFG, project MATE (NI 369/17).}
\author{Philipp~Zschoche}{\tuaddress}{zschoche@tu-berlin.de}{https://orcid.org/0000-0001-9846-0600}{}
\authorrunning{A. Casteigts, A.-S. Himmel, H. Molter, P. Zschoche}%
\keywords{Temporal graphs, 
		disease spreading,
		waiting-time policies, 
		restless temporal paths, 
		timed feedback vertex set,
		NP-hard problems, 
		parameterized algorithms}%
\begin{document}

\maketitle

\begin{abstract}
  Computing a (short) path between two vertices is one of the
  most fundamental primitives in graph algorithmics. 
  In recent years, the study
  of paths in temporal graphs, that is, graphs where the vertex set is fixed
  but the edge set changes over time, gained more and more attention. 
  A path is time-respecting, or \emph{temporal}, if it uses edges with 
  non-decreasing time stamps.

  We investigate a basic constraint for temporal paths, where
  the time spent at each vertex must not exceed a given duration~$\Delta$,
  referred to as $\Delta$-\emph{restless temporal paths}. 
  This constraint arises naturally in the modeling of real-world processes like 
  packet routing in communication networks and
  infection transmission routes 
  of diseases where recovery confers lasting resistance.
  
  While finding temporal paths without waiting time restrictions is known to be
  doable in polynomial time, we show that the ``restless variant'' of
  this problem becomes computationally hard even in very restrictive settings.
  For example, it is \wone-hard when parameterized by the \emph{distance to disjoint path} of the underlying graph, which implies \wone-hardness for many other parameters like feedback vertex number and pathwidth.
  A natural question is thus whether the problem becomes tractable in some
  natural settings. 
  We explore several natural parameterizations, presenting
  \FPT algorithms for three \emph{kinds} of parameters: (1) output-related
  parameters (here, the maximum length of the path), (2) classical parameters
  applied to the underlying graph (e.g., feedback edge number), and (3) a new
  parameter called \emph{timed feedback vertex number}, which captures
  finer-grained temporal features of the input temporal graph, and which may be
  of interest beyond this work.
\end{abstract}

\section{Introduction}

A highly successful strategy to control (or eliminate) outbreaks of 
infectious diseases is contact tracing \cite{eames2003contact}---whenever an individual is diagnosed positively, 
every person who is possibly infected by this individual is put into quarantine.
However, the viral spread can be too fast to be traced manually, 
e.g., if the disease is transmittable in a pre-symptomatic (or asymptomatic) stage, 
then it seems likely that an individual already caused infection chains when
diagnosed positively. 
Hence, large-scale digital systems are recommended
which use physical proximity networks based on location and contact
data~\cite{ferretti2020quantifying}---this allows fast and precise contact tracing while
avoiding the harmful effect of mass quarantines to society~\cite{ferretti2020quantifying}.
Physical proximity networks can be understood as temporal graphs\footnote{Also known as
time-varying graphs, evolving graphs, or link streams.}
\cite{Cas+12,Hol15,HS19,LVM18,Mic16}, that is, graphs where the vertex set (individuals) remains static but the edge set (physical contacts) may change over time.
In this paper, we extend the literature on reachability in 
temporal graphs %
\cite{Akr+19b, %
AF16, %
BF03, %
XFJ03, %
Him+19, %
KKK02, %
MOS19, %
Wu+16} %
by a computational complexity analysis of an important variation of one of the
most fundamental combinational problems arising in the above mentioned scenario: 
given a temporal graph and two individuals~$s$ and~$z$, is a chain of infection from $s$
to $z$ possible, that is, is there a \emph{temporal path} from $s$
to $z$? %
In particular, we use a reachability concept that captures the standard 3-state
\emph{SIR-model} (Susceptible-Infected-Recovered), a canonical spreading model 
for diseases where recovery confers lasting resistance~\cite{Bar16,kermack1927contribution,New18}.

In temporal graphs, the basic concepts of paths and reachability are 
defined in a time-respecting way~\cite{KKK02}: a (strict) temporal path, also called
``journey'', is a path that uses edges with non-decreasing (increasing)
time steps.
To represent infection chains in the SIR-model, 
we restrict the time of ``waiting'' or ``pausing'' at each intermediate vertex
to a prescribed duration. We call these paths \emph{restless} temporal paths. 
They model infection transmission routes of diseases that grant immunity upon recovery~\cite{Hol16}: 
An infected individual can transmit the disease until it is recovered (reflected by bounded waiting time) and it cannot be
infected a second time afterwards since then it is immune (reflected by considering path instead of walk: 
every vertex can only be visited at most once).
Another natural example of restless temporal paths is delay-tolerant networking among mobile
entities, where the routing of a packet is performed over time and space by
storing the packet for a limited time at intermediate nodes.

In the following we give an example to informally describe our problem setting\footnote{We refer to
\cref{sec:path:prelims} for a formal definition.}.
In \cref{fig:simple-example}
we are given the depicted temporal graph, vertices $s$ and $z$, and the time bound $\Delta=2$.
We are asked to decide whether there is a restless temporal path from $s$ to $z$, that is, a path which visits each vertex at most
once and pauses at most $\Delta$ units of time between consecutive hops.
Here, $(s,d,b,z)$ is a feasible solution, but $(s,b,z)$ is not because the waiting time at $b$ exceeds $\Delta$.
The walk $(s,b,c,d,b,z)$ is not a valid solution because it visits vertex~$b$ twice. 
Finally, $(s,a,c,d,b,z)$ is also a feasible solution.

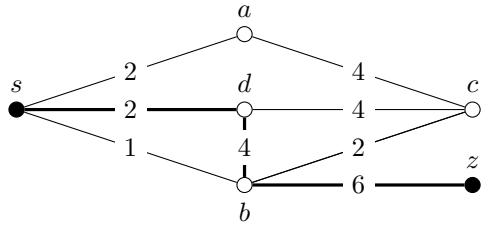
\begin{figure}[t]
  \centering
\begin{tikzpicture}[scale=1.0,]
		\node[vert2,fill=black,label=$s$] (s) at (0,0) {};
		\node[vert2,label=$a$] (a) at (3,1) {};
		\node[vert2,label={below:$b$}] (b) at (3,-1) {};
		\node[vert2,label=$c$] (c) at (6,0) {};
		\node[vert2,label=$d$] (d) at (3,0) {};
		\node[vert2,fill=black,label=$z$] (z) at (6,-1) {};

		\draw (s) -- node[fill=white] {$2$} (a);
		\draw (s) -- node[fill=white] {$1$} (b);
		\draw (a) -- node[fill=white] {$4$} (c);
		\draw (b) -- node[fill=white] {$2$} (c);
		\draw (b) -- node[fill=white] {$2$} (c);
		\draw (c) -- node[fill=white] {$4$} (d);
		\draw[very thick] (d) to node[fill=white] {$4$} (b);
		\draw[very thick] (b) to node[fill=white] {$6$} (z);
		\draw[very thick] (s) to node[fill=white] {$2$} (d);

\end{tikzpicture}
\caption{Example of a temporal graph whose edges are labeled with time stamps.
Bold edges depict a $2$-restless temporal $(s,z)$-path.
(In general, multiple time stamps per edge are possible.)
}
\label{fig:simple-example}
\vspace{-2ex}
\end{figure}

\ownparagraph{Related work.}
Several types of waiting time constraints have been considered in the temporal
graph literature. %
An empirical study by Pan and
Saram\"aki~\cite{PS11} based on phone calls datasets observed a threshold in
the correlation between the duration of pauses between calls and the ratio of
the network reached over a spreading process. 
Casteigts~et~al.~\cite{Cas+15a} showed a dramatic impact of waiting time constraints
to the expressivity of a temporal graph, 
when considering such a graph as an automaton and temporal paths as words.
In the context of temporal flows, Akrida~et~al.~\cite{Akr+19a} considered a
concept of ``vertex buffers'', which however pertains to the quantity of
information that a vertex can store, rather than a duration.
Enright~et~al.~\cite{Enr+19} considered deletion problems for reducing temporal
connectivity. %
More closely related to our work, Himmel~et~al.~\cite{Him+19}
studied a variant of restless temporal paths where several
visits to the same vertex are allowed, i.e., restless temporal
\emph{walks}. 
They showed, among other things, that such walks can be 
computed in polynomial time. %

Many path-related problems have been studied in the temporal setting and the
nature of temporal paths significantly increases the computational complexity of many of them (compared to their
static counterparts). In the temporal setting, reachability is not an
equivalence relation among vertices, which makes many problems more complicated.
For example, finding a maximum temporally connected component is
NP-hard~\cite{BF03}. %
We further have that in a temporal graph,
spanning trees may not exist.
In fact, even the existence of \emph{sparse} spanners (i.e., subgraphs with
$o(n^2)$-many edges ensuring temporal connectivity) is not
guaranteed~\cite{AF16}, unless the underlying graph is complete~\cite{CPS19},
and computing a minimum-cardinality spanner is APX-hard~\cite{Akr+17,MOS19}.
Yet another example is the problem of deciding
whether there are $k$ disjoint temporal paths between two given vertices. 
In a seminal
article, Kempe et al.~\cite{KKK02} showed that this problem,
whose classical analogue is (again) polynomial-time solvable, becomes NP-hard.
They further investigated the related problem of finding temporal separators,
which is also NP-hard~\cite{Flu+19a,KKK02,Zsc+19}. Deciding whether there exists a separator of a given size that cuts all \emph{restless} temporal paths is known to be $\Sigma_2^P$-complete~\cite{Molter20}, that is, the problem is located in the second level of the polynomial time hierarchy.
\ownparagraph{Our contributions.}
We introduce the problem 
\probRestlessPath. %
To get a finer understanding of the computational complexity of this problem, 
we turn our attention to its parametrized complexity. 
In stark contrast to both restless temporal walks and non-restless temporal paths, 
we show that this problem is \NP-hard even in very restricted settings---in particular, even when the lifetime is restricted to only three time steps---and \wone-hard when parameterized by the
(vertex deletion) \emph{distance to disjoint paths} of the underlying
graph, which implies \wone-hardness with respect to many other parameters like feedback vertex number and pathwidth (\cref{sec:path:hardness}).
This is tight in the sense that the problem can be solved in polynomial time when the underlying graph is a forest.
On the positive side, we explore parameters of three different natures. First,
we show that the problem is fixed-parameter tractable (\FPT) for the
\emph{length} (in number of hops) of the temporal path (\cref{sec:fptalg}).
We further show that the problem is \FPT when parameterized by the \emph{feedback edge number} of the underlying graph (\cref{sec:structparam}). 
Additionally, we show that the problem presumably does not admit a polynomial kernel
under the previously mentioned parameterizations where the problem is in \FPT.\todo{pz: gramma ok?}
Our results provide a fine-grained characterization of the tractability boundary of the
computation of restless temporal paths for parameters of the underlying graph,
as illustrated by the vicinity of the corresponding parameters in
\cref{fig:hierarchy}. %
Then, going beyond parameters related to the output and to the underlying graph,
we define a novel temporal version of the classic \emph{feedback vertex
number} called \emph{timed feedback vertex number}. Intuitively, it counts the
number of vertex \emph{appearances} that have to be removed from the temporal
graph such that its underlying graph becomes cycle-free.
We show that finding restless temporal paths is FPT when parameterized by 
this parameter (\cref{sec:tfvn}).
We believe that the latter is an interesting turn of events compared to our hardness results.

\ownparagraph{Strict versus non-strict temporal paths.} 
In this paper, we focus mainly on the case of non-strict temporal paths, i.e., the times along a path are required to be non-decreasing. We expect most of the algorithms and reductions to be extendable to a strict setting, albeit with some change in the results themselves. For instance, a similar NP-hardness reduction as for non-strict temporal paths may apply, but requires more than a constant lifetime to be adapted. In fact, the length of a strict temporal path is trivially bounded by the lifetime itself, thus an FPT algorithm for the \emph{length} parameter implies one for the lifetime parameter as well.

\tikzfading[name=fade bottom,
  top color=transparent!0,
  bottom color=transparent!100]
\tikzfading[name=fade top,
  top color=transparent!100,
  bottom color=transparent!0]
\tikzfading[name=fade left,
  right color=transparent!100,
  left color=transparent!0]
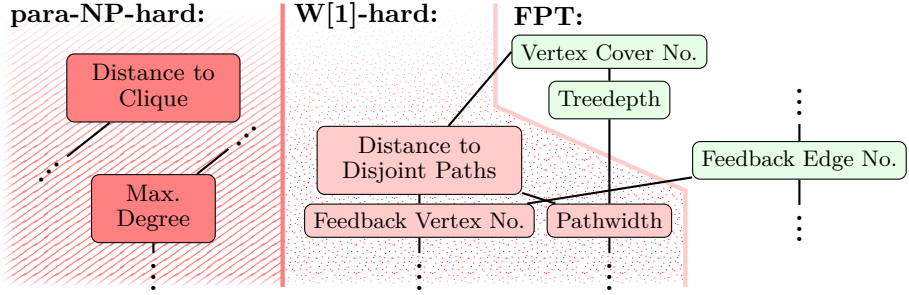
\begin{figure}[t]
		\centering
		\begin{tikzpicture}[yscale=0.8]
		\filldraw[pattern=spray,pattern color=red,draw=white] (0.7,0.6) -- (0.7,-4.1) -- 
				 (6,-4.1) -- (6,-2.5) -- 
				(3.5,-1.1) -- (3.5,0.6) -- cycle;
		\filldraw[pattern=lines,pattern color=red!50!white,draw=white] (-3,0.6) -- (-3,-4.1) -- (0.7,-4.1) --
				(0.7,0.6) --  cycle;

		\fill[white, path fading=fade bottom] (-3, 0.6) rectangle (3.6, -2);
		\fill[white, path fading=fade top] (-3, -3.9) rectangle (6.1, -4.1);
		\fill[white, path fading=fade left] (-3, 0.6) rectangle (-2.8, -4.1);

		\draw[ultra thick,red!50!white] (0.7,0.6) -- (0.7,-4.1);
		\draw[ultra thick,red!20!white] (3.5,0.6) -- (3.5,-1.1) -- (6,-2.5) -- (6,-4.1);

		\node[] at (-1.6,0.4) (wone)  {\textbf{para-NP-hard:}};
		\node[] at (1.8,0.4) (wone)  {\textbf{W[1]-hard:}};
		\node[] at (4.2,0.4) (wone)  {\textbf{FPT:}};

	\node[para,fill=red!50!white] at (-1,-2.8) (md)  {\tworows{Max.}{Degree}};
	\node[para,fill=red!50!white] at (-1,-0.8) (dc) {\tworows{Distance to}{Clique}};
	\node[para,fill=red!20!white] at (2.5,-2) (ddp)  {\tworows{Distance to}{Disjoint Paths}};
	\node[para,fill=red!20!white] at (2.5,-3) (fvs)  {Feedback Vertex No.};
	\node[para,fill=red!20!white] at (5,-3) (pw)  {Pathwidth};
	\draw[thick] (ddp) to (pw);
	\draw[,thick] (ddp) to (fvs);

  	\node[para,fill=green!10!white] at (5,-0.2) (vc) {Vertex Cover No.};
  	\node[para,fill=green!10!white] at (5,-1) (td)  {Treedepth};
  	\node[para,fill=green!10!white] at (7.5,-2) (fes)  {Feedback Edge No.};
	\draw[,thick] (vc.west) to (ddp);
	\draw[,thick] (vc) to (td);
	\draw[,thick] (fes) to (fvs);
	\draw[,thick] (td) to (pw);
	
	\node[rotate=-45] at (0.3,-1.5) (dots1)  {$\boldsymbol\vdots$};
	\draw[,thick] (dots1) to (md);
	\node[rotate=-45,inner sep=-2pt] at (-2.3,-2.1) (dots2)  {$\boldsymbol\vdots$};
	\draw[,thick] (dots2) to (dc);
	\node[inner sep=-2pt] at (-1,-3.8) (dots3)  {$\boldsymbol\vdots$};
	\draw[,thick] (dots3) to (md);
	\node[inner sep=-2pt] at (2.5,-3.8) (dots4)  {$\boldsymbol\vdots$};
	\draw[,thick] (dots4) to (fvs);
	\node[inner sep=-2pt] at (5,-3.8) (dots5)  {$\boldsymbol\vdots$};
	\draw[,thick] (dots5) to (pw);
	\node[inner sep=-2pt] at (7.5,-3) (dots5)  {$\boldsymbol\vdots$};
	\draw[,thick] (dots5) to (fes);
	\node at (7.5,-0.9) (dots6)  {$\boldsymbol\vdots$};
	\draw[,thick] (dots6) to (fes);
\end{tikzpicture}
\caption{Relevant part of the hierarchy among classic parameters of the underlying graph (cf.~Sorge~et~al.~\cite{parameters})
 for our results for \probRestlessPath.}
 \label{fig:hierarchy} 
 \vspace{-2ex}
\end{figure}

\section{Preliminaries}
\label{sec:path:prelims}
Here, we formally introduce the most important concepts related to temporal graphs and paths, 
and give the formal problem definition %
of \textsc{(Short) Restless Temporal $(s,z)$-Path}.

  An \emph{interval} is an ordered set
$[a,b] \coloneqq \{n \mid n \in \N \wedge a \leq n \leq b \},$ where~$a,b \in \N$. Further, let $[a]:=[1,a]$.

\ownparagraph{Static graphs.}
We use standard notation from (static) graph theory~\cite{Die16}. Unless stated otherwise, we assume graphs in this paper to be \emph{undirected} and \emph{simple}. To clearly distinguish them from temporal graphs, they are sometimes referred to as \emph{static} graphs.
Given a (static) graph $G=(V,E)$ with $E\subseteq \binom{V}{2}$, we denote by~$V(G):=V$ and $E(G):=E$ the sets of its
vertices and edges, respectively. %

We call two vertices $u,v\in V$ \emph{adjacent} if $\{u,v\}\in E$. Two edges $e_1,e_2\in E$ are \emph{adjacent} if $e_1\cap e_2\neq \emptyset$.
For a vertex $v \in V$, we denote by $\deg_G(v)$ the \emph{degree} of the vertex, that is, $\deg_G(v) = |\{w \in V \mid \{v,w\} \in E\}|$.
For some vertex subset $V'\subseteq V$, we denote by $G[V']$ the \emph{induced} subgraph of $G$ on the vertex set $V'$, that is, $G[V']=(V',E')$ where $E' = \{\{v,w\}\mid \{v,w\}\in E\wedge v\in V'\wedge w\in V'\}$.
For some vertex subset $V'\subseteq V$, we denote by $G-V'$ the subgraph of $G$ without the vertices in $V'$, that is, $G-V'=G[V\setminus V']$. 
For some edge subset $E'\subseteq E$, we denote by $G-E'$ the subgraph of $G$ without the edges $E'$, that is, $G-E'=(V,E \setminus E')$.

An \emph{$(s,z)$-path} of length $k$ is a sequence~$P=(\{s=v_0,v_1\},\{v_1,v_2\},\ldots,$ $\{v_{k-1},v_{k}=z\})$ of edges such that for all $i\in [k]$ we have that $\{v_{i-1}, v_i\} \in E$ and $v_i \not = v_j$ for all $i,j \in [k]$.
We denote $v_0$ and $v_k$ as the endpoints of~$P$. 
We further denote by~$E(P)$ the set of edges of path $P$, that is, 
$E(P)=\{\{v_0,v_1\}, \{v_1,v_2\},\ldots, \{v_{k-1},v_{k}\}\}$ and 
by $V(P)$ the set of vertices visited by the path, that is, $V(P)=\bigcup_{e\in E(P)} e$. 
If $v_0 = v_k$ and $P$ is of length at least three, then $P$ is a \emph{cycle}.

\ownparagraph{Temporal graphs.}
An (undirected, simple) \emph{temporal graph} is a tuple~$\TGfull$ (or \TGcompact for short), with $E_i\subseteq\binom{V}{2}$ for all $i\in[\lifetime]$.
We call $\lifetime(\TG) := \lifetime$ the \emph{lifetime} of $\TG$. As with
static graphs, we assume all temporal graphs in this paper to be undirected and
simple.
We call the graph $G_i(\TG) = (V, E_i(\TG))$ the \emph{\layer}~$i$ of $\TG$
where $E_i(\TG) := E_i$. If $E_i=\emptyset$, then $G_i$ is a \emph{trivial} layer.
We call layers $G_i$ and $G_{i+1}$ \emph{consecutive}.
We call $i$ a \emph{\timestep}. If an edge $e$ is present at time $i$, that is, $e\in E_i$, we say that $e$ has \emph{time stamp} $i$. We further denote $V(\TG):=V$. %
The \emph{underlying graph}~$\ug{\TG}$ of $\TG$ is defined as~$\ug{\TG} := (V, \bigcup_{i=1}^{\lifetime(\TG)} E_i(\TG))$.
To improve readability, we remove~$(\TG)$ from the introduced notations whenever it is clear from the context.
For every $v\in V$ and every \timestep $t\in [\lifetime]$, we denote the \emph{appearance
of vertex} $v$ \emph{at time}~$t$ by the pair $(v,t)$. For every $t\in [\lifetime]$ and every $e\in E_t$ we call the pair $(e,t)$ a \emph{time edge}.
For a time edge $(\{v,w\},t)$ we call the vertex appearances $(v,t)$ and $(w,t)$ its \emph{endpoints}.
We assume that the \emph{size} (for example when referring to input sizes in running time analyzes) of $\TG$ is $|\TG|:=|V|+\sum_{i=1}^\lifetime \min\{1,|E_i|\}$, that is, we do not assume that we have compact representations of temporal graphs. Finally, we write $n$ for $|V|$.

		\label{def:temppath}
		A 
		\emph{temporal ($s,z$)-walk} (or \emph{temporal walk})  
		of length~$k$ from vertex $s=v_0$ to vertex $z=v_k$ in a temporal graph~$\TGcompact$ is a sequence
$P = \left(\left(v_{i-1},v_i,t_i\right)\right)_{i=1}^k$
of triples that we call \emph{transitions} 
such that for all $i\in[k]$ we have that $\{v_{i-1},v_i\}\in E_{t_i}$ and for
all $i\in [k-1]$ we have that $t_i \leq t_{i+1}$.
Moreover, we call $P$ a 
\emph{temporal ($s,z$)-path} (or \emph{temporal path})  
of length $k$ 
if~$v_i\neq v_j$ for all~$i, j\in \{0,\ldots,k\}$ with $i\neq j$.
Given a temporal path $P=\left(\left(v_{i-1},v_i,t_i\right) \right)_{i=1}^k$, 
we denote the set of vertices of $P$ by $V(P)=\{v_0,v_1,\ldots,v_k\}$.
Moreover, we say that $P$ \emph{visits} the vertex $v_i$ at time $t$ if $t \in [t_i,t_{i+1}]$, where~$i \in [k-1]$.
A \emph{restless} temporal path is not allowed to wait an arbitrary amount of
time in a vertex, but has to leave any vertex it visits within the next
$\Delta$ time steps, for some given value of~$\Delta$. Analogously to the
non-restless case, a restless temporal walk may visit a vertex multiple times.
\begin{definition}%
		\label{def:rtemppath}
		A temporal path (walk) $P = \left(\left(v_{i-1},v_i,t_i\right)\right)_{i=1}^k$
		is \emph{$\Delta$-restless}
if  $t_i \leq t_{i+1} \leq t_i + \wait$, for all $i\in [k-1]$.
We say that $P$ \emph{respects} the waiting time~$\wait$.
\end{definition}
Having this definition at hand, we are ready to define the main decision problem of this work.

\probDefRestlessPath
Note the waiting time at the source vertex $s$ is ignored. 
This is without loss of generality, since one can add an auxiliary degree one source vertex which is only in the first layer adjacent to $s$.
We also consider a variant, where we want to find $\Delta$-restless paths of a certain maximum length.
In the \probSRestlessPath problem, we are additionally given a integer $k \in \N$ and the question 
is whether there is a $\Delta$-restless temporal path of length at most $k$ from
$s$ to $z$ in~$\TG$?
Note that \probRestlessPath is the special case of \probSRestlessPath for
$k=|V|-1$ and that both problems are in \NP. %

\ownparagraph{Parameterized complexity.}
We use standard notation and terminology from parameterized
complexity theory~\cite{Cyg+15}
and give here a brief overview of the most important concepts that are used in this paper.
A \emph{parameterized problem} is a language $L\subseteq \Sigma^* \times \N$, where $\Sigma$ is a finite alphabet. We call the second component
the \emph{parameter} of the problem.
A parameterized problem is \emph{fixed-pa\-ram\-e\-ter tractable} (in the complexity class \FPT{})
if there is an algorithm that solves each instance~$(I, r)$ in~$f(r) \cdot |I|^{O(1)}$ time,
for some computable function $f$. 
A decidable parameterized problem $L$ admits a \emph{polynomial kernel} if there is a polynomial-time algorithm that transforms each instance $(I,r)$ into an instance $(I', r')$ such that $(I,r)\in L$ if and only if $(I',r')\in L$ and $|(I', r')|\in r^{O(1)}$. 
If a parameterized problem is hard for the parameterized complexity class \wone, then it is (presumably) not in~\FPT{}.
The complexity classes \wone\ is closed under parameterized reductions, which may run in \FPT-time and additionally set the new parameter to a value that exclusively depends on the old parameter. 

\ownparagraph{Basic observations.}\label{sec:basicobs}
If there is a $\Delta$-restless \nonstrpath{s,z} 
$\left(\left(v_{i-1},v_i,t_i\right)\right)_{i=1}^k$ %
in a temporal graph $\TG$, then  
$\big( \{v_0,v_1\}, \dots, \{v_{k-1}, v_k\} \big)$ 
is an $(s,z)$-path in the underlying graph~$\UG$. 
The other direction does not necessarily hold.
However, we now show that for any $(s,z)$-path in~$\UG$ we can decide in linear time whether this path is the support of a $\Delta$-restless \nonstrpath{s,z} in~$\TG$.
As a consequence, we can decide \probRestlessPath~in linear time for any temporal graph where there exists a unique $(s,z)$-path in the underlying graph, in particular, if the underlying graph is a forest. %

\begin{lemma}
	\label{lem:restless-path-on-a-path}
	Let $\TGcompact$ be a temporal graph where the underlying graph $\UG$ is an~$(s,z)$-path with $s,z \in V$.
	Then there is an algorithm which computes in $O(|\TG|)$ time
	the set 
	$$\mathcal A =\{ t \mid 
	\text{there is a $\Delta$-restless \nonstrpath{s,z} with arrival time } t \}.$$
\end{lemma}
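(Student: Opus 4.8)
The plan is to process the $(s,z)$-path $v_0 = s, v_1, \dots, v_k = z$ edge by edge, maintaining a compact representation of the set of times at which a $\Delta$-restless temporal path can ``currently sit'' at the vertex reached so far. Concretely, after having processed the prefix edge $\{v_{j-1}, v_j\}$, I want to know the set $A_j$ of time steps $t$ such that there is a $\Delta$-restless temporal $(s, v_j)$-path whose last transition uses time stamp $t$. The key structural observation is that $A_j$ is a union of intervals, and in fact a rather well-behaved one: if $t \in A_j$ is realizable, then (modulo the constraint that the next edge must actually be present at a suitable time) we care only about which time stamps of the edge $\{v_j, v_{j+1}\}$ lie in $[t, t+\Delta]$. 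So I would represent the reachable set not as an arbitrary union of intervals but by a single quantity: the earliest arrival time at $v_j$, or more robustly, I would show it suffices to track a set of ``relevant'' arrival times whose size stays linear in the total number of time edges.

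First I would set $A_0 = \{0\}$ (or formally, treat $s$ as reachable ``at time $0$'', reflecting that waiting at $s$ is unconstrained — indeed, since waiting at the source is ignored, the first edge $\{v_0, v_1\}$ may be used at any of its time stamps). Then, inductively, given $A_{j-1}$ and the sorted list of time stamps $T_j$ of the edge $\{v_{j-1}, v_j\}$, I compute $A_j = \{\, t' \in T_j \mid \exists\, t \in A_{j-1} \text{ with } t \le t' \le t + \Delta \,\}$, except for $j = 1$ where the constraint $t \le t' \le t+\Delta$ is dropped and $A_1 = T_1$. The crucial efficiency point is the following monotonicity/domination argument: if $t_1, t_2 \in A_{j-1}$ with $t_1 \le t_2$, then for the purpose of extending the path, $t_1$ dominates $t_2$ — any edge time stamp usable after arriving at time $t_2$ (i.e.\ in $[t_2, t_2 + \Delta]$) that is $\ge t_2 \ge t_1$ is also in $[t_1, t_1 + \Delta]$ only if it is $\le t_1 + \Delta$; this is *not* automatic, so actually both ``small'' and ``large'' arrival times can matter. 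Hence I would instead keep $A_j$ as the full (but compressed) set of time stamps of edge $\{v_{j-1}, v_j\}$ that are reachable, and argue the total work is $O(|\TG|)$ by charging each time stamp of each edge a constant amount: when advancing from $A_{j-1}$ to $A_j$, I merge the sorted list $A_{j-1}$ with the sorted list $T_j$ via a two-pointer sweep, and each element of $T_j$ is examined $O(1)$ times. Summing $\sum_j |T_j| = O(|\TG|)$ gives the bound. Finally, $\mathcal{A}$ is exactly $A_k$, the set of arrival times at $z = v_k$.

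For correctness I would prove by induction on $j$ that $t \in A_j$ if and only if there is a $\Delta$-restless temporal $(s,v_j)$-path using edges $\{v_0,v_1\},\dots,\{v_{j-1},v_j\}$ in that order with final transition at time $t$; since the underlying graph is exactly this path, every temporal $(s,v_j)$-path must use precisely these edges in this order (no shortcuts, no detours), which is what makes the DP sound and complete. The forward direction is immediate from the recurrence; the backward direction unpacks the definition of $\Delta$-restlessness, namely $t_{i} \le t_{i+1} \le t_i + \Delta$, which is exactly the constraint built into the transition $A_{j-1} \to A_j$.

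The main obstacle I expect is the running-time argument, specifically making the two-pointer merge genuinely linear despite the fact that neither the ``earliest'' nor any single representative arrival time dominates the others for extension purposes — so one cannot collapse $A_{j-1}$ to a single number. The resolution is that, although $A_{j-1}$ may have many elements, each is (for $j \ge 2$) a time stamp of edge $\{v_{j-2}, v_{j-1}\}$, so $|A_{j-1}| \le |T_{j-1}|$, and the merge of the two sorted lists of sizes $|T_{j-1}|$ and $|T_j|$ costs $O(|T_{j-1}| + |T_j|)$; summing telescopes to $O(\sum_j |T_j|) = O(|\TG|)$. A secondary subtlety is the boundary handling at the source — ensuring that ignoring the waiting time at $s$ is correctly modelled by letting the first edge be used at an arbitrary one of its time stamps — but this is dispatched by the $j=1$ base case.
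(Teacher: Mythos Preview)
Your proposal is correct and follows essentially the same approach as the paper: a dynamic program along the path that maintains, for each vertex $v_j$, the set of arrival times of $\Delta$-restless temporal $(s,v_j)$-paths (the paper calls this $T[v_j]$, you call it $A_j$), computed via a two-pointer sweep over sorted time-stamp lists with the same $O(|T_{j-1}| + |T_j|)$ per-step cost and the same telescoping sum to $O(|\mathcal{G}|)$. Your handling of the source (taking $A_1$ to be all time stamps of the first edge) and your observation that no single representative arrival time suffices are exactly the points the paper makes as well.
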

\begin{proof}
Let $V(\UG)=\{s=v_0, \ldots,v_n=z\}$ be the vertices and $E(\UG)=\{e_1=\{v_0,v_1\},\ldots,e_n=\{v_{n-1},v_n\}\}$ be the edges of the underlying path. 
We further define $L_i$ as the set of layers of $\TG$ in which the edge $e_i\in E(\UG)$ exists, that is, $L_{i}:=\{t\mid e_i\in E_t\}$. 

In the following, we construct a dynamic program on the path.
We compute for every vertex~$v_i$ %
the table entry~$T[v_i]$ which is defined as the set of all layers~$t$ such that there exists a $\Delta$-restless \nonstrpath{s,v_i} with arrival time~$t$. 
It holds that~$T[v_1]=L_{1}$. 
Then, for all $i \in [2,\lifetime]$, we compute the table entry $T[v_i]$ by checking for each layer~$t \in L_{i}$ 
whether there exists a $\Delta$-restless \nonstrpath{s,v_{i-1}} that arrives in a layer $t' \in T[v_{i-1}]$ 
such that we can extend the path to the vertex $v_i$ in layer~$t$ without exceeding the maximum waiting time $\Delta$, 
that is, $0\leq t-t' \leq \Delta$.
Formally, we have
$$ T[v_i] := \{t \in L_{i} \mid \text{there is a }t' \in T[v_{i-1}] \text{ with } 0 \leq t-t' \leq \Delta \}.$$
It is easy to verify that $T[v_i]$ contains all layers~$t$ such that there exists a $\Delta$-restless \nonstrpath{s,v_i} with arrival time~$t$.  
After computing the last entry $T[v_n]$, 
this entry contains the set~$\mathcal A$ of all layers~$t$ such that there exists a $\Delta$-restless \nonstrpath{s,z} with arrival time~$t$. 

In order to compute a table entries $T[v_i]$ in linear time, we will need sorted lists of layers for~$L_i$ and~$T[v_{i-1}]$ in ascending order.
The sorted lists~$L_i$ of layers can be computed in~$O(|\TG|)$: 
For every $t \in [\ell]$, we iterate over each $e_i \in E_t$ and add $t$ to~$L_i$.
Now assume that $L_i$ and $T[v_{i-1}]$ are lists of layers both in ascending order, then we can compute the table entry~$T[v_i]$ in~$O(|T[v_{i-1}]| + |L_{i}|)$ time as follows.
Let $T[v_i]$ be initially empty. 
Let $t$ be the first element in~$L_{i}$ and let~$t'$ be the first element in~$T[v_{i-1}]$: 
\begin{enumerate}
\item If $t'> t$, then replace $t$ with the next layer in $L_{i}$ and repeat.
\item If $t-t' \leq \Delta$, then add $t$ to $T[v_i]$, replace $t$ with the next layer in $L_{i}$ and repeat.
\item Else, replace $t'$ with the next layer in $T[v_{i-1}]$ and repeat.
\end{enumerate}
This is done until all elements in one of the lists are processed. 

The resulting list $T[v_i]$ is again sorted. 
Due to this and $T[v_1] (= L_1)$ being sorted, we can assume that $T[v_{i-1}]$ is given as a sorted list of layers when computing $T[v_i]$.
Hence, we can compute each table entry $T[v_i]$ in $O(|T[v_{i-1}]| + |L_{i}|)$ time.
It further holds that $|T[v_{i}]|\leq |L_{i}|$ and $\sum_{i=1}^{n} |L_{i}|=\sum_{i=1}^{\ell} |E_i|$. 
Hence, the dynamic program runs in~$O(|\TG|)$ time. 
\end{proof}

Furthermore, it is easy to observe that computational hardness of \probRestlessPath for some fixed value of $\Delta$ implies hardness for all larger finite values of $\Delta$.
This allows us to construct hardness reductions for small fixed values of $\Delta$ and still obtain general hardness results.

\begin{observation}\label{obs:path:delta}
		Given an instance $I=(\TG, s,z,k,\Delta)$ of \probSRestlessPath,
		we can construct in linear time an instance $I'=(\TG', s,z,k,\Delta+1)$ of \probSRestlessPath
		such that $I$ is a \yes-instance if and only if $I'$ is a \yes-instance.
\end{observation}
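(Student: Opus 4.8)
The plan is to obtain $\TG'$ from $\TG$ by a time-stretching that inserts exactly one trivial (empty) layer between every two consecutive blocks of $\Delta$ time steps. Formally, for $t\in[\lifetime]$ write $t-1 = q\Delta + r$ with $0\le r<\Delta$, and put $f(t) := q(\Delta+1)+r+1$; then $f\colon[\lifetime]\to\N$ is a strictly increasing injection whose image misses exactly the positive multiples of $\Delta+1$. Let $\TG'$ have vertex set $V$, and for each $t\in[\lifetime]$ let the edge set of layer $f(t)$ of $\TG'$ be $E_t$, with all remaining layers trivial; we take the lifetime of $\TG'$ to be $f(\lifetime)$, except that if $\lifetime=\Delta$ (the only case in which $f(\lifetime)<\Delta+1$) we append one more trivial layer so that $\Delta+1\le\lifetime(\TG')$ and $I'=(\TG', s,z,k,\Delta+1)$ is a valid instance. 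This construction runs in linear time, leaves $s,z,k$ untouched, and creates or destroys no edges, so since $f$ is order preserving the temporal paths of $\TG$ and of $\TG'$ correspond bijectively: a temporal path of $\TG$ becomes a temporal path of $\TG'$ by replacing each transition time $t$ with $f(t)$, and conversely, because every time edge of $\TG'$ lies in a layer of the form $f(t)$, a temporal path of $\TG'$ becomes a temporal path of $\TG$ by replacing each transition time with its $f$-preimage. This bijection preserves vertex sequences and lengths, so it maps paths of length at most $k$ to paths of length at most $k$.

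It remains to check that, under this bijection, the $\Delta$-restless paths of $\TG$ are exactly the $(\Delta+1)$-restless paths of $\TG'$. For one direction, take consecutive transition times $t_i\le t_{i+1}\le t_i+\Delta$ of a $\Delta$-restless path of $\TG$: if $t_i$ and $t_{i+1}$ lie in the same block then $f(t_{i+1})-f(t_i)=t_{i+1}-t_i\le\Delta$, and if they lie in consecutive blocks then $t_{i+1}-t_i\le\Delta$ forces the within-block position not to increase, so the single extra layer between the blocks is ``used up'' and $f(t_{i+1})-f(t_i)\le\Delta+1$; hence the image is $(\Delta+1)$-restless in $\TG'$. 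For the converse, suppose the $f$-preimage of some $(\Delta+1)$-restless path of $\TG'$ had two consecutive transition times with $t_{i+1}-t_i\ge\Delta+1$; then $t_i$ and $t_{i+1}$ lie in different blocks, and if in consecutive blocks the hypothesis forces the within-block position to strictly increase so that $f(t_{i+1})-f(t_i)\ge\Delta+2$, while if two or more blocks apart the gaps alone force $f(t_{i+1})-f(t_i)\ge\Delta+3$ -- either way contradicting $(\Delta+1)$-restlessness in $\TG'$. So the preimage is $\Delta$-restless in $\TG$, and the equivalence of the yes-instances follows.

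The one genuinely delicate step, and the only place the exact block length $\Delta$ is needed, is the first implication across a block boundary: one must observe that an original waiting step of length at most $\Delta$ that crosses from one block into the next necessarily lands at an equal or earlier position within that next block, so that the single empty layer separating the blocks adds exactly one unit of waiting time, no more. The rest -- monotonicity of $f$, the linear-time bound, and the cosmetic handling of $\lifetime=\Delta$ -- is routine.
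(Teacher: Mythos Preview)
Your proof is correct and follows exactly the same construction as the paper: insert one trivial layer after every block of $\Delta$ consecutive layers, which is precisely what your map $f$ encodes. The paper's proof is a single sentence stating this construction without the detailed verification; your careful case analysis of the block arithmetic (and the cosmetic handling of $\lifetime=\Delta$ to keep $\Delta+1\le\lifetime(\TG')$) simply fills in the details the paper leaves to the reader.
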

\begin{proof}
	The result immediately follows from the observation that a temporal graph~$\TG$ contains a $\Delta$-restless \nonstrpath{s,z}  
	if and only if 
	the temporal graph $\TG'$ contains a $(\Delta+1)$-restless \nonstrpath{s,z}, where
	$\TG'$ is obtained from $\TG$ by inserting one trivial (edgeless) layer
	after every $\Delta$ consecutive layers. %
\end{proof}
However, for some special values of $\Delta$ we can solve \probRestlessPath in polynomial time.
\begin{observation}
		\label{obs:delta:zero}\label{obs:delta:infinite}
  \probRestlessPath on instances $(\TG, s,z,\Delta)$ can be solved in polynomial time, 
  if $\Delta = 0$ or $\Delta \geq \lifetime$.
\end{observation}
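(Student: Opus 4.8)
The plan is to dispatch the two special values of $\wait$ separately, observing that in each of them the restlessness constraint degenerates so that the problem collapses to a classical, polynomial-time-solvable reachability question.

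For $\wait\geq\lifetime$ I would first argue that the upper bound $t_{i+1}\leq t_i+\wait$ of \cref{def:rtemppath} is vacuous: every time stamp of $\TG$ lies in $[\lifetime]$, so along any temporal path $\left(\left(v_{i-1},v_i,t_i\right)\right)_{i=1}^k$ and any $i\in[k-1]$ we have $t_{i+1}-t_i\leq\lifetime-1<\wait$. Hence $t_i\leq t_{i+1}\leq t_i+\wait$ is equivalent to $t_i\leq t_{i+1}$, and the $\wait$-restless temporal $(s,z)$-paths of $\TG$ are exactly its temporal $(s,z)$-paths. As deciding whether a temporal $(s,z)$-path exists is well known to be polynomial-time solvable~\cite{BF03,Wu+16}, this case follows immediately.

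For $\wait=0$ the constraint $t_i\leq t_{i+1}\leq t_i+0$ forces $t_i=t_{i+1}$ for all $i\in[k-1]$, so every transition of a $0$-restless temporal path uses one and the same time step $t$. I would then show that a $0$-restless temporal $(s,z)$-path exists in $\TG$ if and only if there is a time step $t\in[\lifetime]$ for which $s$ and $z$ lie in the same connected component of the layer $G_t$: any simple $(s,z)$-path inside such a component, traversed entirely at time $t$, is a $0$-restless temporal path, and conversely every $0$-restless temporal path arises in this way. This condition is checked in polynomial time by running, for each $t\in[\lifetime]$, a graph search from $s$ in $G_t$ and testing whether $z$ is reached.

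I do not anticipate a genuine obstacle for either case. The only point that needs to be made carefully is the vacuity argument in the first case, which rests entirely on the fact that all time stamps of $\TG$ lie in $[\lifetime]$; once that is noted, both cases reduce to invoking (or re-deriving) standard reachability algorithms.
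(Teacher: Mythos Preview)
Your proposal is correct and follows essentially the same approach as the paper: for $\wait=0$ both of you reduce to per-layer static reachability, and for $\wait\geq\lifetime$ both of you observe that the waiting-time constraint becomes vacuous so the question collapses to ordinary temporal reachability. The only cosmetic differences are that you spell out the vacuity inequality explicitly and cite~\cite{BF03,Wu+16} where the paper cites~\cite{XFJ03}.
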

\begin{proof}
  Considering $\Delta=0$ implies that the entirety of a path between $s$ and $z$ must be realized in a single layer. 
  Thus, the problem is equivalent to testing if at least one of the layers $G_i$
  contains a (static) path between $s$ and $z$.

  If $\Delta \geq \lifetime$,
  then $\Delta$-restless temporal paths correspond to unrestricted temporal paths, whose computation can be made using any of the (polynomial time) algorithms in~Bui-Xuan, Ferreira, and Jarry~\cite{XFJ03}. 
\end{proof}

\section{Hardness results for restless temporal paths}
\label{sec:path:hardness}

In this section we present a thorough analysis of the computational hardness of \probRestlessPath which also transfers to \probSRestlessPath.

\ownparagraph{NP-hardness for few layers.}\label{sec:nph}
We start by showing that \probRestlessPath is \NP-complete even if the lifetime
of the input temporal graph is constant. The reduction is similar in
spirit to the classic NP-hardness reduction for \textsc{2-Disjoint Paths} in
directed graphs by Fortune et al.~\cite{FortuneHW80}.

\begin{theorem}\label{thm:probRestlessPath:NPh}
\probRestlessPath is \classNP{}-complete for all finite $\wait\ge 1$ and $\lifetime\ge \Delta+2$ even if every edge has only one time stamp.  
\end{theorem}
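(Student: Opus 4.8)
By \cref{obs:path:delta} it suffices to establish the claim for $\Delta=1$ (hence $\lifetime=3$); the general case then follows by inserting trivial layers. Membership in \NP\ is immediate since a $\Delta$-restless temporal path is a polynomial-size certificate verifiable in polynomial time (this is already noted after the problem definition). So the work is the hardness reduction. The plan is to reduce from a classic \NP-hard routing problem in the spirit of Fortune, Hopcroft, and Wyllie~\cite{FortuneHW80} — most naturally from \textsc{2-Disjoint Paths} in directed graphs, where we are given a digraph $D$ and terminal pairs $(s_1,z_1)$, $(s_2,z_2)$ and must decide whether there are vertex-disjoint directed paths $P_1\colon s_1\rightsquigarrow z_1$ and $P_2\colon s_2\rightsquigarrow z_2$. (Equivalently one could reduce from a single $s$–$z$ path that is forced to pass through a prescribed vertex twice in two "phases"; the disjointness / anti-parallel-usage obstruction is the same.)

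The core gadget idea: simulate each directed edge $(u,v)$ of $D$ by a small temporal gadget whose time stamps only permit traversal from the $u$-side to the $v$-side, never the reverse, using the fact that with $\Delta=1$ a restless path entering a vertex at time $t$ must leave by time $t+1$. Concretely, I would make three "time slots" $\{1,2,3\}$ and route the first required path ($s_1\rightsquigarrow z_1$) using edges with time stamps in $\{1,2\}$ and the second ($s_2\rightsquigarrow z_2$) using time stamps in $\{2,3\}$ (or a similar staggering), subdividing each arc with an intermediate vertex so that the two "phases" are glued at a single shared vertex per original vertex of $D$ — this is what forces vertex-disjointness, since in a temporal \emph{path} every vertex is visited at most once. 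The source $s$ of the constructed instance is identified with $s_1$, the sink $z$ with $z_2$, and we add a single time-edge (in the middle layer) linking $z_1$ to $s_2$ so that a restless $(s,z)$-path is exactly a concatenation of a phase-1 path and a vertex-disjoint phase-2 path. One must check the waiting-time constraint is satisfiable along intended paths (subdivision vertices are visited in two consecutive layers, so $\Delta=1$ suffices) and, crucially, that \emph{no unintended} restless $(s,z)$-path exists — in particular that the orientation of each arc gadget cannot be used "backwards" given the time stamps and $\Delta=1$, and that a path cannot shortcut between the two phases except through the designated $z_1$–$s_2$ edge.

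The main obstacle, as usual for such reductions, is the soundness direction: proving that every $\Delta$-restless temporal $(s,z)$-path in the constructed graph decomposes into the intended two disjoint directed paths, i.e.\ that the temporal structure genuinely enforces (i) correct direction of each arc traversal, (ii) the two-phase structure with the forced switch at the $z_1$–$s_2$ edge, and (iii) that sharing a vertex between the two phases is impossible because it would require visiting that vertex twice. I would handle this by a careful case analysis on the layer in which each transition of a hypothetical solution path occurs, showing the time stamps leave no freedom: any transition with stamp $1$ must lie in phase 1, any with stamp $3$ in phase 2, and the unique stamp-$2$ "bridge" edge must be used exactly once, between the two. A secondary (but routine) point is to verify the promised parameter restrictions — $\lifetime=\Delta+2$ for $\Delta=1$ and "every edge has only one time stamp" — which the construction respects by design since each gadget edge is placed in exactly one layer; the extension to arbitrary finite $\Delta\ge 1$ and $\lifetime\ge\Delta+2$ is then a direct invocation of \cref{obs:path:delta}.
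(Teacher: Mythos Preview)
Your high-level architecture --- set $\Delta=1$, $\lifetime=3$, route a first sub-path in an early layer, cross a single bridge edge in layer~2, route a second sub-path in layer~3, and rely on the fact that a temporal \emph{path} cannot revisit a vertex --- is exactly the shape of the paper's reduction. The gap is your choice of source problem. You propose to reduce from directed \textsc{2-Disjoint Paths} and to ``simulate each directed edge $(u,v)$ by a small temporal gadget whose time stamps only permit traversal from the $u$-side to the $v$-side''. With only three layers and $\Delta=1$ such gadgets cannot be chained: if an arc gadget uses time~1 for its first hop and time~2 for its second (so that reverse traversal is blocked), then after arriving at the head at time~2 you can no longer enter the next arc's time-1 edge. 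Encoding orientation along a directed path of unbounded length requires unbounded lifetime. If instead you drop orientation and put all phase-1 edges at time~1 and all phase-2 edges at time~3, what you encode is \emph{undirected} \textsc{2-Disjoint Paths}, which is polynomial-time solvable, so the reduction yields nothing.

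The paper sidesteps this by reducing directly from \probExactThreeFourSAT\ (recall that Fortune~et~al.~\cite{FortuneHW80} themselves reduce \emph{from} SAT \emph{to} directed disjoint paths, not the reverse). All variable-gadget edges live in layer~1, the single edge $\{s_n,s'\}$ lives in layer~2, and all clause-gadget edges live in layer~3; no orientation is ever enforced inside a layer. In layer~1 the path chooses, for each variable $x_i$, one of two parallel four-vertex segments (encoding the assignment: traversing the $x_i$-segment means $x_i=\falsevalue$, traversing the $\bar x_i$-segment means $x_i=\truevalue$). In layer~3 the path must visit the clause vertices $c_1,\dots,c_m$ in order, and between $c_j$ and $c_{j+1}$ it must pass through an \emph{unused} literal vertex of $c_j$ --- available precisely when that literal is true under the chosen assignment. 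The vertex-disjointness of a temporal path, not any per-edge orientation trick, is what makes the construction encode satisfiability.
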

\begin{proof}
We show this result by a reduction from the \classNP{}-complete \probExactThreeFourSAT problem~\cite{Tov84}. The problem \probExactThreeFourSAT asks whether a formula $\phi$ is satisfiable, assuming that it is given in conjunctive normal form, each clause having exactly three literals and each variable appearing in exactly four clauses.

Let $\phi$ be an instance of \probExactThreeFourSAT with $n$ variables and $m$ clauses. We construct a temporal graph $\TGcompact$ with $\lifetime=3$ consisting of a series of variable gadgets followed by dedicated vertices $s_n$ and $s'$ and then a series of clause gadgets. It is constructed in such a way that for $\Delta=1$, any $\Delta$-restless temporal $(s,z)$-path has to visit a vertex $s_n$ and each possible $\Delta$-restless temporal $(s,s_n)$-path represents exactly one variable assignment for the formula $\phi$. Further we show that for any $\Delta$-restless temporal $(s,s_n)$-path it holds that it can be extended to a $\Delta$-restless temporal $(s,z)$-path if and only if the $\Delta$-restless temporal $(s,s_n)$-path represents a satisfying assignment for the formula~$\phi$.

\proofpar{Variable Gadget.} 
We start by adding a vertex $s$ to the vertex set $V$ of $\TG$. For each variable $x_i$ with $i \in [n]$ of $\phi$, we add 9 fresh vertices to $V$: $x_i^{(1)}$, $x_i^{(2)}$, $x_i^{(3)}$, $x_i^{(4)}$, $\bar x_i^{(1)}$, $\bar x_i^{(2)}$, $\bar x_i^{(3)}$, $\bar x_i^{(4)}$, and $s_i$.
Each variable~$x_i$ is represented by a gadget consisting two disjoint path segments of four vertices each. One path segment is formed by $x_i^{(1)}$, $x_i^{(2)}$, $x_i^{(3)}$, and $x_i^{(4)}$ in that order and the second path segment is formed by $\bar x_i^{(1)}$, $\bar x_i^{(2)}$, $\bar x_i^{(3)}$, and~$\bar x_i^{(4)}$ in that order. 
The connecting edges all appear exclusively at time step one, that is, $\{x_i^{(1)}, x_i^{(2)}\}$, $\{x_i^{(2)}, x_i^{(3)}\}$, and $\{x_i^{(3)}, x_i^{(4)}\}$ are added to $E_1$. Analogously for the edges connecting $\bar x_i^{(1)}$, $\bar x_i^{(2)}$, $\bar x_i^{(3)}$, and $\bar x_i^{(4)}$.
 Intuitively, if a $\Delta$-restless temporal $(s,z)$-path passes the first segment, then this corresponds to setting the variable $x_i$ to \falsevalue. If it passes the second segment, then the variable is set to \truevalue.
For all $i \in [n-1]$ we add the edges $\{x_i^{(4)},s_i\}$, $\{\bar x_i^{(4)},s_i\}$, $\{s_i,\bar x_{i+1}^{(1)}\}$, and $\{s_i,\bar x_{i+1}^{(1)}\}$ to $E_1$ and, additionally, we add~$\{s,x_{1}^{(1)}\}$, $\{s,\bar x_{1}^{(1)}\}$, $\{x_{n}^{(4)},s_n\}$, and $\{\bar x_{n}^{(4)},s_n\}$ to $E_1$. 

We can observe that there are exactly $2^n$ different temporal $(s,s_n)$-paths at time step~one. Intuitively, each path represents exactly one variable assignment for the formula~$\phi$. 

\proofpar{Clause Gadget.}
We add a vertex $z$ to $V$. For each clause $c_j$ with $j \in [m]$ we add a fresh vertex $c_j$ to $V$. We further add a vertex $s'$ to $V$ and add the edge $\{s_n,s'\}$ to $E_2$. Let $x_i$ (or $\bar x_i$) be a literal that appears in clause~$c_j$ and let this be the $k$th appearance of variable~$x_i$ in $\phi$. Then, we add the edges~$\{c_j, x_i^{(k)}\}, \{x_i^{(k)},c_{j+1}\}$ (or $\{c_j, \bar x_i^{(k)}\}, \{\bar x_i^{(k)},c_{j+1}\}$) to~$E_3$ (where $c_{m+1} = z$). 
Finally, we add the edge~$\{s',c_1\}$ to $E_3$. 

Hence, there are exactly $3^m$ different temporal $(s',z)$-paths at time step~three. Each path must visit the clause vertices $c_1,\ldots,c_m$ in the given order by construction.

Finally, we set $\wait=1$. This finishes the construction, for a visualization see \cref{fig:pathred1}. It is easy to check that every edge in the constructed temporal graph has only one time step and that the temporal graph can be computed in polynomial time.

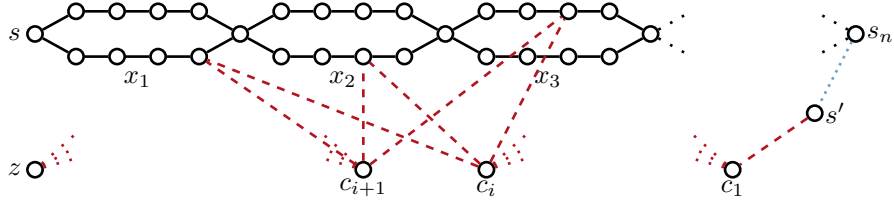
\begin{figure}[t]
\begin{center}
\begin{tikzpicture}[line width=1pt, scale=.3,xscale=0.9]
    \node (Sl) at (-3,1) {$s$};
    \node (X1l) at (3,-1) {$x_1$};
    \node (X2l) at (13,-1) {$x_2$};
    \node (X3l) at (23,-1) {$x_3$};
    \node[vert2] (S) at (-2,1) {};
    \node[vert2] (A1) at (0,0) {}; 
    \node[vert2] (A2) at (2,0) {};
    \node[vert2] (A3) at (4,0) {};
    \node[vert2] (A4) at (6,0) {};
    \node[vert2] (A5) at (0,2) {};
    \node[vert2] (A6) at (2,2) {};
    \node[vert2] (A7) at (4,2) {};
    \node[vert2] (A8) at (6,2) {};
    \node[vert2] (S2) at (8,1) {};
    \node[vert2] (B1) at (10,0) {}; 
    \node[vert2] (B2) at (12,0) {};
    \node[vert2] (B3) at (14,0) {};
    \node[vert2] (B4) at (16,0) {};
    \node[vert2] (B5) at (10,2) {};
    \node[vert2] (B6) at (12,2) {};
    \node[vert2] (B7) at (14,2) {};
    \node[vert2] (B8) at (16,2) {};
    \node[vert2] (S3) at (18,1) {};
    \node[vert2] (C1) at (20,0) {}; 
    \node[vert2] (C2) at (22,0) {};
    \node[vert2] (C3) at (24,0) {};
    \node[vert2] (C4) at (26,0) {};
    \node[vert2] (C5) at (20,2) {};
    \node[vert2] (C6) at (22,2) {};
    \node[vert2] (C7) at (24,2) {};
    \node[vert2] (C8) at (26,2) {};
    \node[vert2] (S4) at (28,1) {};
    \node[vert2] (SN) at (38,1) {};
    \node (SNl) at (39.2,1) {$s_n$};
    
    \node[vert2] (SP) at (36,-2.5) {};
    \node (SPl) at (37,-2.5) {$s'$};

    \node[vert2] (CC1) at (32,-5) {};   
    \node (CC1l) at (32,-5.8) {$c_1$};   
    
    \node[vert2] (CCi) at (20,-5) {};
    \node (CCil) at (20,-5.8) {$c_i$};
    \node[vert2] (CCii) at (14,-5) {};
    \node (CCiil) at (14,-5.8) {$c_{i+1}$};
    
    \node (Zl) at (-3,-5) {$z$};
    \node[vert2] (Z) at (-2,-5) {};
    
    \draw (S) -- (A1);
    \draw (A1) -- (A2);
    \draw (A2) -- (A3);
    \draw (A3) -- (A4);
    \draw (A4) -- (S2);
    \draw (S) -- (A5);
    \draw (A5) -- (A6);
    \draw (A6) -- (A7);
    \draw (A7) -- (A8);
    \draw (A8) -- (S2);
    
    \draw (S2) -- (B1);
    \draw (B1) -- (B2);
    \draw (B2) -- (B3);
    \draw (B3) -- (B4);
    \draw (B4) -- (S3);
    \draw (S2) -- (B5);
    \draw (B5) -- (B6);
    \draw (B6) -- (B7);
    \draw (B7) -- (B8);
    \draw (B8) -- (S3);
    
    \draw (S3) -- (C1);
    \draw (C1) -- (C2);
    \draw (C2) -- (C3);
    \draw (C3) -- (C4);
    \draw (C4) -- (S4);
    \draw (S3) -- (C5);
    \draw (C5) -- (C6);
    \draw (C6) -- (C7);
    \draw (C7) -- (C8);
    \draw (C8) -- (S4);
    
    \draw[loosely dotted] (S4) -- (30,0);
    \draw[loosely dotted] (S4) -- (30,2);
    \draw[loosely dotted] (SN) -- (36,0);
    \draw[loosely dotted] (SN) -- (36,2);
    
    \draw[color=ourblue!80!black, dotted] (SN) -- (SP);
    \draw[color=ourred, dashed] (SP) -- (CC1);

    \draw[color=ourred, dashed] (CCi) -- (C7);
    \draw[color=ourred, dashed] (CCii) -- (C7);
    \draw[color=ourred, dashed] (CCi) -- (B3);
    \draw[color=ourred, dashed] (CCii) -- (B3);
    \draw[color=ourred, dashed] (CCi) -- (A4);
    \draw[color=ourred, dashed] (CCii) -- (A4);
    
    \draw[loosely dotted,color=ourred] (CC1) -- (30,-4.5);
    \draw[loosely dotted,color=ourred] (CC1) -- (30,-4);
    \draw[loosely dotted,color=ourred] (CC1) -- (30,-3.4);

    \draw[loosely dotted,color=ourred] (CCi) -- (22,-4.5);
    \draw[loosely dotted,color=ourred] (CCi) -- (22,-4);
    \draw[loosely dotted,color=ourred] (CCi) -- (22,-3.4);
    
    \draw[loosely dotted,color=ourred] (CCii) -- (12,-4.5);
    \draw[loosely dotted,color=ourred] (CCii) -- (12,-4);
    \draw[loosely dotted,color=ourred] (CCii) -- (12,-3.4);
    
    \draw[loosely dotted,color=ourred] (Z) -- (0,-4.5);
    \draw[loosely dotted,color=ourred] (Z) -- (0,-4);
    \draw[loosely dotted,color=ourred] (Z) -- (0,-3.4);
\end{tikzpicture}
    \end{center}
    \caption{Illustration of the temporal graph constructed by the reduction in
    the proof of \cref{thm:probRestlessPath:NPh}. An excerpt is shown with
    variable gadgets for $x_1$, $x_2$, and $x_3$ and the clause gadget for
    $c_i=(x_1\vee x_2\vee \neg x_3)$, where $x_1$ appears for the fourth time,
    $x_2$ appears for the third time, and~$x_3$ also appears for the third time.
    Black edges appear at time step one, the blue (dotted) edge~$\{s_n,s'\}$
    appears at time step two, and the red (dashed) edges appear at time step
    three.}\label{fig:pathred1}
\end{figure}

\proofpar{Correctness.}
Now we can show that $\phi$ is satisfiable if and only if $\TG$ has a $\Delta$-restless temporal $(s,z)$-path.

\RArrow Let us assume there is a satisfying assignment for formula~$\phi$. 
Then we construct a $\Delta$-restless temporal path from vertex $s$ to $z$ as follows. Starting from $s$, for each variable $x_i$ of $\phi$ the $\Delta$-restless temporal path passes through the variables $x_i^{(1)}$, $x_i^{(2)}$, $x_i^{(3)}$, and $x_i^{(4)}$, if~$x_i$ is set to \falsevalue, and $\bar x_i^{(1)}$, $\bar x_i^{(2)}$, $\bar x_i^{(3)}$, and $\bar x_i^{(4)}$, if~$x_i$ is set to \truevalue, at time step one. 
The $\Delta$-restless temporal path arrives at time step one in the vertex $s_n$. In time step two it goes from $s_n$ to $s'$.

At time step three, the $\Delta$-restless temporal path can be extended to~$c_1$. 
In each clause~$c_j$ for $j \in [m]$ there is at least one literal $x_i$ (or $\bar x_i$) that is evaluated to \truevalue. 
Let~$c_j$ be the $k$th clause in which $x_i$ appears. 
We have that, depending on whether $x_i$ is set to \truevalue\ (or \falsevalue), the vertex $x^{(k)}_i$ (or $\bar x^{(k)}_i$) has not been visited so far. 
Hence, the $\Delta$-restless temporal path can be extended from $c_j$ to $c_{j+1}$ (or to~$z$ for $j = m$) at time step three via~$x^{(k)}_i$ (or $\bar x^{(k)}_i$). 
Thus, there exists a $\Delta$-restless temporal~$(s,z)$-path in $\TG$.     

\LArrow
Let us assume that there exists a $\Delta$-restless temporal $(s,z)$-path in the constructed temporal graph $\TG$. 
Note that any $\Delta$-restless temporal $(s,z)$-path must reach $s_n$ in time step~one because the variable gadget has only edges at time step~one and the waiting time $\wait=1$ prevents the path to enter the clause gadget (which only has edges at time step three) before using the edge $\{s_n, s'\}$ at time step two.

It is easy to see that for the first part of the $\Delta$-restless temporal graph from $s$ to $s_n$ it holds that for each $i\in [n]$, it visits either vertices $x_i^{(1)}$, $x_i^{(2)}$, $x_i^{(3)}$, and $x_i^{(4)}$, or vertices $\bar x_i^{(1)}$, $\bar x_i^{(2)}$, $\bar x_i^{(3)}$, and $\bar x_i^{(4)}$. In the former case we set $x_i$ to \falsevalue\ and in the latter case we set $x_i$ to \truevalue. We claim that this produces a satisfying assignment for $\phi$.

In time step~three, the part of the $\Delta$-restless temporal path from~$s'$ to~$z$ has to pass vertices~$c_1,c_2,\ldots,c_m$ to reach~$z$.  
The $\Delta$-restless temporal path passes exactly one variable vertex~$x_i^{(k)}$ (or $\bar x^{(k)}_i$) when going from $c_j$ to~$c_{j+1}$ (and finally from~$c_m$ to~$z$) that has not been visited so far and that corresponds to a variable that appears in the clause $c_j$ for the $k$th time. 
The fact that the variable vertex was not visited implies that we set the corresponding variable to a truth value that makes it satisfy clause~$c_j$. This holds for all $j \in [m]$.
Hence, each clause is satisfied by the constructed assignment and, consequently, $\phi$ is satisfiable. 
\end{proof}

The reduction used in the proof of \cref{thm:probRestlessPath:NPh} 
also yields a running time lower bound 
assuming the Exponential Time Hypothesis (ETH)~\cite{IP01,IPZ01}.

\begin{corollary}\label{cor:path:eth}
\probRestlessPath\ does not admit a $f(\lifetime)^{o(|\TG|)}$-time algorithm for any computable function~$f$ unless the ETH fails.
\end{corollary}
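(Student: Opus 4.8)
The plan is to reuse the reduction of \cref{thm:probRestlessPath:NPh} essentially as a black box, observing that it is a \emph{linear-size} reduction from a problem to which the standard ETH lower bound applies, and that the lifetime of the produced instances is the constant $\lifetime = 3$.

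First I would recall the usual consequence of the ETH obtained via the Sparsification Lemma~\cite{IPZ01}: there is no $2^{o(n+m)}$-time algorithm for \probThreeSAT, where $n$ is the number of variables and $m$ the number of clauses. I would then transfer this to \probExactThreeFourSAT. Tovey's reduction~\cite{Tov84} from \probThreeSAT to \probExactThreeFourSAT splits each high-occurrence variable into a bounded number of copies tied together by implication clauses (and pads short clauses), so it increases $n+m$ only by a constant factor; hence \probExactThreeFourSAT also admits no $2^{o(n+m)}$-time algorithm. Moreover, in any \probExactThreeFourSAT instance we have $3m = 4n$, so $n+m = \Theta(n)$ and the instance has size $\Theta(n)$; in particular, for this problem a $2^{o(n+m)}$ lower bound is the same as a $2^{o(n)}$ lower bound.

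Next I would verify that the temporal graph $\TG$ built by the reduction of \cref{thm:probRestlessPath:NPh} from a \probExactThreeFourSAT instance $\phi$ with $n$ variables and $m$ clauses has size $|\TG| \in O(n+m)$: each variable gadget contributes $9$ vertices and $O(1)$ edges (at time step one), each clause gadget contributes $O(1)$ vertices and $O(1)$ time edges (at time step three), and only the single time edge $\{s_n, s'\}$ is added at time step two; the lifetime is the fixed constant $\lifetime = 3$, and the whole construction is computable in linear time.

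Finally, suppose toward a contradiction that \probRestlessPath can be solved in $f(\lifetime)^{o(|\TG|)}$ time for some computable $f$. On the instances produced by the reduction, $\lifetime = 3$ is fixed, so $f(\lifetime) = f(3)$ is an absolute constant and the running time becomes $2^{o(|\TG|)} = 2^{o(n+m)}$. Composing with the linear-time reduction would give a $2^{o(n+m)}$-time algorithm for \probExactThreeFourSAT, contradicting the ETH. I do not expect a genuine obstacle; the only points needing a little care are that the ETH lower bound indeed transfers to \probExactThreeFourSAT (so one must invoke the Sparsification Lemma, not just the bare ETH, since the occurrence-bounding step is only linear in $n+m$), and that the construction in \cref{thm:probRestlessPath:NPh} blows up the instance size by at most a constant factor rather than merely polynomially.
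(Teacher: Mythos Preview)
Your proposal is correct and follows essentially the same approach as the paper's proof: both use that Tovey's reduction is linear in size, that the construction of \cref{thm:probRestlessPath:NPh} yields $|\TG| \in O(n+m)$ with constant lifetime $\lifetime = 3$, and then derive a $2^{o(n+m)}$-time algorithm for \probThreeSAT contradicting the ETH. Your write-up is in fact slightly more careful than the paper's in explicitly invoking the Sparsification Lemma to justify the $2^{o(n+m)}$ (rather than merely $2^{o(n)}$) lower bound before composing with the size-linear reductions.
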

\begin{proof}
		First, note that any \probThreeSAT formula with $m$ clauses can be transformed into an equisatisfiable \probExactThreeFourSAT\ formula with $O(m)$ clauses~\cite{Tov84}. 
The reduction presented in the proof of \cref{thm:probRestlessPath:NPh} produces an instance of \probRestlessPath\ with a temporal graph of size $|\TG|= O(m)$ and $\lifetime=3$. 
Hence an algorithm for \probRestlessPath\ with running time $f(\lifetime)^{o(|\TG|)}$ for some computable function~$f$ would imply the existence of an $2^{o(m)}$-time algorithm for \probThreeSAT. 
This is a contradiction to the ETH~\cite{IP01,IPZ01}.
\end{proof}

Furthermore, the reduction behind \cref{thm:probRestlessPath:NPh} can be modified such that it also yields that \probRestlessPath is
\classNP-hard, even if the underlying graph has constant maximum degree or 
the underlying graph is a clique where one edge ($\{s,z\})$ is missing.
Note that in the latter case the underlying graph contains all edges except the one edge which would turn the instance into a trivial \yes-instance.
\begin{corollary}\label{cor:underlying-nph}
\probRestlessPath\ is \NP-hard, even if the underlying graph has all but one edge or maximum degree six.
\end{corollary}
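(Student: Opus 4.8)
I would obtain both claims by revisiting the temporal graph $\TG$ (with $\lifetime=3$ and $\wait=1$) constructed in the proof of \cref{thm:probRestlessPath:NPh}. For the maximum-degree claim I would just inspect $\TG$: the only vertices that can have large degree are the clause vertices, and for $2\le j\le m$ the vertex $c_j$ is adjacent (in layer~$3$, and nowhere else) precisely to the three occurrence vertices of clause $c_j$ and the three occurrence vertices of clause $c_{j-1}$; distinct literal occurrences yield distinct occurrence vertices, so $c_j$ has exactly six neighbours, while $c_1$ and $z=c_{m+1}$ have fewer. Every occurrence vertex $x_i^{(k)}$ (or $\bar x_i^{(k)}$) has two neighbours on its path segment and at most two clause neighbours, hence degree at most four; the vertices $s_i$ have degree at most four; and $s$, $s'$, $s_n$ have degree at most three. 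Thus the underlying graph of $\TG$ already has maximum degree six, which settles the first part.

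For the ``all but one edge'' part I would saturate the underlying graph of $\TG$ with junk edges placed so sparsely in time that no $\wait$-restless temporal path can ever traverse one. Let $F:=\binom{V}{2}\setminus\bigl(E(\UG)\cup\{\{s,z\}\}\bigr)$; since $\{s,z\}\notin E(\UG)$ by construction, $E(\UG)\cup F=\binom{V}{2}\setminus\{\{s,z\}\}$. Enumerating $F=\{f_1,\dots,f_r\}$, I would form $\TG'$ from $\TG$ by setting its lifetime to $3+2r$ and letting $f_q$ be the unique edge of layer $3+2q$ for each $q\in[r]$. Then the underlying graph of $\TG'$ is the complete graph on $V$ minus the edge $\{s,z\}$ --- the unique edge whose presence would make the instance a trivial \yes-instance --- and $\TG'$ is built in polynomial time, since $r=O(n^2)$.

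It remains to show that $\TG'$ has a $\wait$-restless temporal $(s,z)$-path if and only if $\TG$ does; together with \cref{thm:probRestlessPath:NPh}, this yields \NP-hardness. One direction is immediate, since restricting $\TG'$ to its first three layers recovers $\TG$. For the converse, the key point is that the junk layers sit at the times $5,7,9,\dots$, which are pairwise more than $\wait=1$ apart and more than $\wait$ later than every non-trivial layer of $\TG$; hence, in any $\wait$-restless temporal path, a time edge belonging to a junk layer can be neither immediately preceded nor immediately followed by another time edge, so a path using a junk edge consists of that single edge, which would have to be $\{s,z\}\notin F$ --- impossible. Consequently every $\wait$-restless temporal $(s,z)$-path of $\TG'$ uses only the first three layers and is thus a $\wait$-restless temporal $(s,z)$-path of $\TG$. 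I expect this last argument to be the only real work: one must be careful that the added edges are genuinely inert and, in particular, that $s$ and $z$ do not become joined by a two-hop junk detour through a common neighbour --- which they do not, precisely because a junk edge admits no time-feasible extension on either side. The maximum-degree claim, by contrast, is a routine inspection of the original construction.
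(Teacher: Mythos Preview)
Your proposal is correct. The maximum-degree claim is handled exactly as in the paper---by inspecting the construction from \cref{thm:probRestlessPath:NPh}---and your degree count is accurate.

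For the ``all but one edge'' claim your argument is sound but takes a different route from the paper. The paper, instead of spreading the missing edges over $O(|V|^2)$ new isolated layers, adds just \emph{two} dense layers: it shifts the original three layers to positions $2,3,4$ and sets $E'_1=\binom{V\setminus\{s\}}{2}$ and $E'_5=\binom{V\setminus\{z\}}{2}$. The union $E'_1\cup E'_5$ already equals $\binom{V}{2}\setminus\{\{s,z\}\}$, so the underlying graph is complete minus $\{s,z\}$. The correctness argument does not even use the waiting-time bound: any temporal $(s,z)$-path must leave $s$ with its first edge, but no edge of $E'_1$ touches $s$, so the first edge has time $\ge 2$; dually the last edge has time $\le 4$; by monotonicity every edge of the path lies in layers $2$--$4$, which are the original instance.

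Both approaches work. The paper's construction keeps the lifetime at $5$ and its correctness relies only on temporal monotonicity (so it would hold even for non-restless temporal paths), which is slightly slicker. Your construction is also perfectly valid and its ``temporal isolation'' trick generalises cleanly to arbitrary fixed~$\Delta$ by spacing the junk layers $\Delta+1$ apart; the price is a polynomial blow-up in the lifetime, which is harmless for an \NP-hardness reduction.
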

\begin{proof}
		That \probRestlessPath is \classNP-hard, even if the underlying graph has maximum degree six follows directly from the 
		construction used in the proof of \cref{thm:probRestlessPath:NPh}.
		To show that that \probRestlessPath is \classNP-hard, even if the underlying
		graph has all edges except $\{s,z\}$, we reduce from \probRestlessPath.
		Let $I = (\TGcompact,s,z,\Delta)$ be an instance of \probRestlessPath with $\lifetime = 3$.
		We construct an instance $I' := (\TG'=(V,E'_1,E'_2,E'_3,E'_4,E'_5),s,z,\Delta)$ of \probRestlessPath, 
		where $E'_1 = {V\setminus \{s\} \choose2}$, $E'_2:=E_1$, $E'_3 := E_2$, $E'_4 := E_3$, and $E'_5 = {V\setminus \{z\} \choose2}$.
		Observe that none of the edges in $E_1 \cup E_5$ can be used in temporal $(s,z)$-path.
		Hence, $I$ is a \yes-instance if and only if $I'$ is a \yes-instance.
		Furthermore, $E_1\cup E_5$ contain all possible edges except $\{s,z\}$.	
\end{proof}

\ownparagraph{\wone-hardness for distance to disjoint paths.}\label{sec:wone}
In the following, we show that parameterizing \probRestlessPath with structural graph parameters of the underlying graph of the input temporal graph presumably does not yield fixed-parameter tractability for a large number of popular parameters.
In particular, we show that \probRestlessPath parameterized by the distance to disjoint paths of the underlying graph is \wone-hard.
The \emph{distance to disjoint paths} of a graph $G$ is the minimum number of vertices we have to remove from $G$ 
such that the reminder of $G$ is a set of disjoint paths.
Many well-known graph parameters can be upper-bounded in the distance to disjoint paths, e.g.,~pathwidth, treewidth, and feedback vertex number \cite{parameters}.
Hence, the following theorem also implies that \probRestlessPath is \wone-hard when parameterized by the pathwidth or the feedback vertex number of the underlying graph.

\begin{theorem}\label{thm:probRestlessPath:W1hFVS}
\probRestlessPath parameterized by the distance to disjoint path of the underlying graph is \wone{}-hard for all $\wait\ge 1$ even if every edge has only one time stamp.  
\end{theorem}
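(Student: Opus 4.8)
The plan is to give a parameterized reduction from \probMCClique, which is \wone-hard when parameterized by the number $k$ of color classes. By \cref{obs:path:delta} it suffices to prove the statement for $\wait=1$: inserting a trivial layer after every $\wait$ consecutive layers raises $\wait$ to $\wait+1$ without changing the underlying graph (hence without changing its distance to disjoint paths) and without putting a second time stamp on any edge, so hardness for $\wait=1$ propagates to every finite $\wait\ge1$. Thus, given a graph $G$ whose vertex set is partitioned into color classes $V_1,\dots,V_k$, I would build in polynomial time a temporal graph $\TG$ with two vertices $s,z$ and exactly one time stamp per edge such that $\TG$ has a $1$-restless \nonstrpath{s,z} if and only if $G$ has a clique with one vertex in each $V_i$, and such that deleting a set $S$ of $O(k^2)$ vertices from $\UG$ leaves a disjoint union of paths. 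Since $|S|$ depends on $k$ only, this is a valid parameterized reduction.

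The temporal graph is organised along a \emph{backbone} $s=h_0,h_1,\dots,h_M=z$ of $M=O(k^2)$ hub vertices, and we take $S=\{h_0,\dots,h_M\}$. Between two consecutive hubs we place either a \emph{selection gadget} (one per color $i$, whose traversal amounts to choosing a vertex $v_i\in V_i$) or a \emph{verification gadget} (one per unordered pair $\{i,j\}$, whose traversability certifies $\{v_i,v_j\}\in E(G)$), with the selection gadgets coming first. Each gadget sitting between $h_{t-1}$ and $h_t$ is a union of internally vertex-disjoint \emph{branches}, each branch a short path from $h_{t-1}$ to $h_t$; any $(s,z)$-path must cross the backbone from $h_0$ to $h_M$ and therefore chooses exactly one branch per gadget. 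After removing the hubs, every gadget decomposes into a disjoint union of (sub)paths, so $\UG-S$ is a disjoint union of paths. Because a temporal path may visit each vertex only once, we give every $v\in V(G)$ one private copy inside the selection gadget of its color and one private copy inside each verification gadget $\{i,j\}$ with $v\in V_i$, so that the selection phase and the verification phase can both ``touch'' the same vertex of $G$ through distinct copies.

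The heart of the construction is a time-stamp assignment achieving two things. First, it forces the path to traverse the gadgets in the intended left-to-right order: with $\wait=1$ the path can advance time by at most one per intermediate vertex, so a gadget whose edges carry large time stamps is unreachable until the path has caught up, which in particular prevents it from skipping verification gadgets. Second, it synchronises the two phases: the copy of $v_i$ that a restless path uses inside verification gadget $\{i,j\}$ is forced to be the one matching the vertex it selected in the selection gadget of color $i$, because taking a mismatching branch strands the path at a vertex all of whose outgoing time edges are out of reach (too far in the future, or already in the past). Granting this, the two directions are routine: a multicolored clique $\{v_1,\dots,v_k\}$ dictates exactly which branch to take in every gadget, and conversely the branches chosen by a restless $(s,z)$-path read off one vertex per color that the verification gadgets certify to be pairwise adjacent.

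The main obstacle is precisely this synchronisation. A single ``current time'' cannot record all $k$ selections at once while the time stamps stay polynomially bounded, so the consistency between the selection phase and all $\binom{k}{2}$ verification gadgets must be carried partly \emph{structurally}, through which copies are already used, and orchestrated by a delicate \emph{local} timing of the individual branches. Making this robust under the tight $\wait=1$ constraint, while simultaneously guaranteeing that deleting only the $O(k^2)$ hub vertices leaves a disjoint union of paths — i.e., that the copies and branches create no high-degree vertices and no cycles outside $S$ — is the crux of the proof; there is a genuine tension between wanting enough temporal/structural ``bandwidth'' to couple the $k$ choices and wanting the underlying graph to be this path-like.
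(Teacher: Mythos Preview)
Your high-level plan (reduce from \probMCClique, deletion set of size $O(k^2)$, selection followed by verification, lift from $\wait=1$ via \cref{obs:path:delta}) matches the paper's. But you have correctly identified the crux and not resolved it, and the architecture you sketch cannot be completed as stated. With \emph{private} copies of each $v$ in the selection gadget and again in every verification gadget, there is no structural memory: using the selection copy of $v$ does not block any verification copy, so ``which copies are already used'' carries no information across gadgets. That leaves only the current time stamp to transmit the $k$ choices, which---as you note---cannot encode a tuple in $[n]^k$ with polynomially bounded labels. Your fan-of-branches selection gadget also forces the arrival time at the next hub to depend on the chosen branch; propagating $k$ such dependencies through $\binom{k}{2}$ subsequent gadgets is exactly the bandwidth problem you flag.

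The paper breaks this deadlock by inverting the role of selection. The selection gadget for $U_i$ is \emph{one} long path made of $n$ consecutive segments, one per vertex of $U_i$, together with a single ``skip vertex'' $s^{(i)}$ adjacent to the segment boundaries. A restless path must traverse the whole long path \emph{except for one segment}, which it bypasses via $s^{(i)}$; the skipped segment encodes the selected vertex. Crucially, the validation gadget for $\{i,j\}$ does not use fresh copies: to cross from its first half to its second half it must visit two specific vertices that lie \emph{inside} the selection gadgets of $U_i$ and $U_j$. Those vertices are available precisely when they sit in the skipped segments, i.e., when the selected vertices are the endpoints of the edge under test. So the memory is carried by the \emph{unvisited} vertices, not by time; time stamps are only used locally to force the skip to be a single segment and to make each validation gadget internally rigid. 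The deletion set is not a backbone of hubs but the $k$ skip vertices together with three connector vertices $s_1^{(i,j)},s_2^{(i,j)},s_3^{(i,j)}$ per pair, after whose removal every gadget is already a disjoint union of paths. This ``selection by omission, validation by revisiting the omitted part'' is the missing idea in your proposal.
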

\begin{proof}
We present a parameterized reduction from \probMCClique where, given a $k$-partite graph $H=(U_1\uplus U_2\uplus\ldots\uplus U_k, F)$, we are asked to decide whether $H$ contains a clique of size $k$. \probMCClique is known to be \wone-hard when parameterized by the clique size $k$~\cite{Fel+09,Cyg+15}.

Let $(H=(U_1\uplus U_2\uplus\ldots\uplus U_k, F),k)$ be an instance of \probMCClique. For each~$i,j\in[k]$ with $i<j$ let $F_{i,j} = \{\{u,v\}\in F \mid u\in U_i \wedge v\in U_j\}$ be the set of edges between vertices in $U_i$ and $U_j$. We can assume that $k\ge 3$, otherwise we can solve the instance in polynomial time. Without loss of generality, we assume that for all $i,j,i',j'\in[k]$ with $i<j$ and $i'<j'$ we have that $|F_{i,j}|=|F_{i',j'}|=m$ for some $m\in\N$. Note that if this is not the case, we add new vertices and single edges to increase the cardinality of some set $F_{i,j}$ and this does not introduce new cliques since $k\ge 3$. We further assume without loss of generality that $|U_1|=|U_2|=\ldots=|U_k|=n$ for some $n\in\N$. If this is not the case, we can add additional isolated vertices to increase the cardinality of some set $U_i$. We construct a temporal graph $\TGcompact$ with two distinct vertices $s,z\in V$ such that there is a $\Delta$-restless \nonstrpath{s,z}\ in $\TG$ if and only if~$H$ contains a clique of size $k$. 
Furthermore, we show that the underlying graph~$\UG$ of~$\TG$ has a distance to disjoint paths of $O(k^2)$.

\proofpar{Vertex Selection Gadgets.} For each set $U_i$ with $i\in[k]$ of the vertex set of $H$ we create the following gadget. Let $U_i = \{u_1^{(i)}, u_2^{(i)}, \ldots, u_{n}^{(i)}\}$. We create a path of length $k\cdot n + n + 1$ on fresh vertices $w_1^{(i)},v_{1,1}^{(i)}, v_{1,2}^{(i)}, \ldots, v_{1,k}^{(i)},w_2^{(i)}, v_{2,1}^{(i)}, \ldots, v_{n,k}^{(i)},w_{n+1}^{(i)}$. Intuitively, this path contains a segment of length $k$ for each vertex in $U_i$ which are separated by the vertices $w_j^{(i)}$, and the construction will allow a $\Delta$-restless \nonstrpath{s,z} to skip exactly one of these segments, which is going to correspond to selecting this vertex for the clique.

Formally, for each vertex $u_j^{(i)}\in U_i$ we create $k$ vertices $v_{j,1}^{(i)}, v_{j,2}^{(i)}, \ldots, v_{j,k}^{(i)}$, which we call the \emph{segment corresponding to $u_j^{(i)}$}. We further create vertices $w_1^{(i)}, w_2^{(i)}, \ldots, w_{n+1}^{(i)}$. For all $j\in[n]$ and $x\in [k-1]$ we connect vertices~$v_{j,x}^{(i)}$ and~$v_{j,x+1}^{(i)}$ with an edge at time $(i-1)\cdot n + j$ and we connect $w_j^{(i)}$ with $v_{j,1}^{(i)}$ and $w_{j+1}^{(i)}$ with $v_{j,k}^{(i)}$ at time $(i-1)\cdot n + j$ each.

Lastly, we introduce a ``skip vertex'' $s^{(i)}$ that will allow a $\Delta$-restless \nonstrpath{s,z} to skip one path segment of length $k$ that corresponds to one of the vertices in~$U_i$. For each $j\in[n+1]$, we connect vertices~$s^{(i)}$ and~$w_{j}^{(i)}$ with an edge at time $(i-1)\cdot n+j$.

Now we connect the gadgets for all $U_i$'s in sequence, that is, a $\Delta$-restless \nonstrpath{s,z} passes through the gadgets one after another, selecting one vertex of each part $U_i$. Formally, for all $i\in [k-1]$, we connect vertices $w_{n+1}^{(i)}$ and $w_{1}^{(i+1)}$ with an edge at time $i\cdot n + 1$. 
It is easy to check that after the removal of the vertices $\{s^{(1)}, s^{(2)}, \ldots, s^{(k)}\}$,
the vertex selection gadget is a path.
The vertex selection gadget is visualized in \cref{fig:pathred2}.\todo{not ready for black and white printing.}

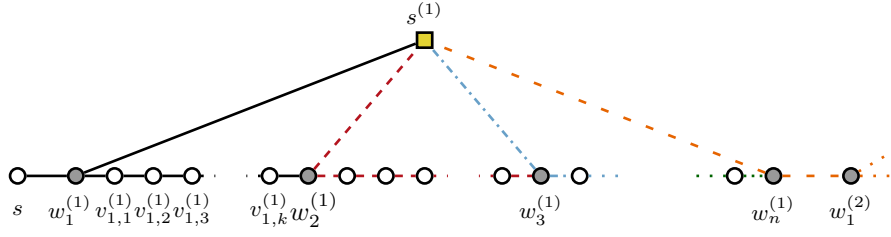
\begin{figure}[t]
\begin{center}
\begin{tikzpicture}[line width=1pt, scale=.3, yscale=1.5,xscale=0.85]

    \node (Sl) at (-3,-1) {\footnotesize $s$}; 
    \node[vert2] (S) at (-3,0) {}; 
    \node[vert3,fill=lipicsyellow!80] (S1) at (18,4) {}; 
    \node (S1l) at (18,4.8) {\footnotesize $s^{(1)}$}; 
    \node[vert2,fill=gray!80] (W1) at (0,0) {}; 
    \node (W1l) at (-.3,-1) {\footnotesize $w_1^{(1)}$}; 
    \node[vert2] (A1) at (2,0) {}; 
    \node (A1l) at (2,-1) {\footnotesize $v_{1,1}^{(1)}$}; 
    \node[vert2] (A2) at (4,0) {}; 
    \node (A2l) at (4,-1) {\footnotesize $v_{1,2}^{(1)}$}; 
    \node[vert2] (A3) at (6,0) {}; 
    \node (A3l) at (6,-1) {\footnotesize $v_{1,3}^{(1)}$}; 
    \node[vert2] (Ak) at (10,0) {}; 
    \node (Akl) at (10,-1) {\footnotesize $v_{1,k}^{(1)}$}; 
    \node[vert2,fill=gray!80] (W2) at (12,0) {}; 
    \node (W2l) at (12.3,-1) {$w_2^{(1)}$}; 
    \node[vert2] (B1) at (14,0) {}; 
    \node[vert2] (B2) at (16,0) {}; 
    \node[vert2] (B3) at (18,0) {}; 
    \node[vert2] (Bk) at (22,0) {}; 
    \node[vert2,fill=gray!80] (W3) at (24,0) {}; 
    \node (W3l) at (24,-1) {\footnotesize $w_3^{(1)}$}; 
    \node[vert2] (C1) at (26,0) {}; 
    \node[vert2] (Dk) at (34,0) {}; 
    \node[vert2,fill=gray!80] (Wn) at (36,0) {}; 
    \node (Wnl) at (36,-1) {\footnotesize $w_n^{(1)}$}; 
    \node[vert2,fill=gray!80] (Wnn) at (40,0) {}; 
    \node (Wnnl) at (40,-1) {\footnotesize $w_1^{(2)}$}; 
    
    \draw (S) -- (W1);
    \draw (W1) -- (A1);
    \draw (A1) -- (A2);
    \draw (A2) -- (A3);
    \draw[loosely dotted] (A3) -- (7.2,0);
    \draw[loosely dotted] (Ak) -- (8.8,0);
    \draw (Ak) -- (W2);
    
    \draw[color=ourred, dashed] (W2) -- (B1);
    \draw[color=ourred, dashed] (B1) -- (B2);
    \draw[color=ourred, dashed] (B2) -- (B3);
    \draw[loosely dotted,color=ourred] (B3) -- (19.2,0);
    \draw[loosely dotted,color=ourred] (Bk) -- (20.8,0);
    \draw[color=ourred, dashed] (Bk) -- (W3);
    
    \draw[color=ourblue!80!black, dashdotted] (W3) -- (C1);
    \draw[loosely dotted,color=ourblue!80!black] (C1) -- (28,0);
    
    \draw[loosely dotted,color=ourgreen] (Dk) -- (32,0);
    \draw[color=ourgreen, dotted] (Dk) -- (Wn);
    
    \draw[color=ourorange, loosely dashed] (Wn) -- (Wnn);
    \draw[loosely dotted,color=ourorange] (Wnn) -- (42,0);
    \draw[loosely dotted,color=ourorange] (Wnn) -- (41.8,0.6);
    
    \draw (W1) -- (S1);
    \draw[color=ourred, dashed] (W2) -- (S1);
    \draw[color=ourblue!80!black, dashdotted] (W3) -- (S1);
    \draw[color=ourorange, loosely dashed] (Wn) -- (S1);
\end{tikzpicture}
    \end{center}
    \caption{Visualization of the vertex selection gadget for $U_1$ from the
    reduction of \cref{thm:probRestlessPath:W1hFVS}. Black edges appear at time
    step one, red edges (densely dashed) at time step two, blue edges
    (dashdotted) at time step three, green edges (dotted) at time step $n-1$, and orange edges
    (loosely dashed) at time step $n$. For the segment corresponding to
    $u_1^{(1)}\in U_1$ all vertex names are presented, for the other segments the names are analogous but omitted. The auxiliary $w_1^{(1)},\ldots,w_n^{(1)},\ldots$ vertices are colored gray.
	The ``skip vertex'' $s^{(1)}$ is depicted as a yellow square. 
	Note that after the removal of $s^{(1)}$ the vertex selection gadget for $U_1$ is a path.}\label{fig:pathred2}

\end{figure}

\proofpar{Validation Gadgets.} A $\Delta$-restless \nonstrpath{s,z} has to pass through the validation gadgets after it passed through the vertex selection gadgets. 
Here, we are forced to choose a point in time where we visit two vertices of two different vertex selection gadgets.
This choice corresponds to the selection of an edge.
Intuitively, this should only be possible if the selected vertices form a clique. We construct the gadget in the following way.

For each $i,j\in[k]$ with $i<j$ let the edges in $F_{i,j}$ be ordered in an arbitrary way, that is, $F_{i,j} = \{e_1^{(i,j)}, e_2^{(i,j)},\ldots, e_{m}^{(i,j)}\}$. We create two paths of length $2m$ on fresh vertices $v_{1,1}^{(i,j)}, v_{1,2}^{(i,j)}, v_{2,1}^{(i,j)},$ $v_{2,2}^{(i,j)}, \ldots, v_{m,2}^{(i,j)}$ and $v_{1,3}^{(i,j)}, v_{1,4}^{(i,j)}, v_{2,3}^{(i,j)}, v_{2,4}^{(i,j)}, \ldots, v_{m,4}^{(i,j)}$, respectively. Intuitively, the \emph{first path} selects an edge from $U_i$ to $U_j$ and the transition to the \emph{second path} should only be possible if the two endpoints of the selected edge are selected in the corresponding vertex selection gadgets.

Formally, for each edge $e_h^{(i,j)}\in F_{i,j}$ we create four vertices $v_{h,1}^{(i,j)}, v_{h,2}^{(i,j)}, v_{h,3}^{(i,j)}, v_{h,4}^{(i,j)}$. Furthermore, we introduce three extra vertices $s_1^{(i,j)},s_2^{(i,j)},s_3^{(i,j)}$. For all $h\in[m]$ we connect vertices~$v_{h,1}^{(i,j)}$ and~$v_{h,2}^{(i,j)}$ with an edge at time $y_{i,j} + 2h-1$, we connect vertices~$v_{h,1}^{(i,j)}$ and~$s_1^{(i,j)}$ with an edge at time $y_{i,j} + 2h-1$, we connect vertices~$v_{h,3}^{(i,j)}$ and~$v_{h,4}^{(i,j)}$ with an edge at time $y_{i,j} + 2h-1$, we connect vertices~$v_{h,3}^{(i,j)}$ and~$s_3^{(i,j)}$ with an edge at time $y_{i,j} + 2h-1$, and if $h<m$, we connect vertices~$v_{h,2}^{(i,j)}$ and~$v_{h+1,1}^{(i,j)}$ with an edge at time $y_{i,j} + 2h$ and we connect vertices~$v_{h,4}^{(i,j)}$ and~$v_{h+1,3}^{(i,j)}$ with an edge at time $y_{i,j} + 2h$, where $y_{i,j}=k\cdot n+2m\cdot (i\cdot j+\frac{1}{2}\cdot i\cdot (i-1)-1)$ (the value of $y_{i,j}$ can be interpreted as a ``time offset'' for the validation gadget for $F_{i,j}$, the value is computed by adding all time steps needed in validation gadget for $F_{i',j'}$ with $i'<j'$, $i'\le i$, $j'\le j$, and $(i',j')\neq (i,j)$). Next, for each edge $e_h^{(i,j)}= \{u_a^{(i)},u_b^{(j)}\}\in F_{i,j}$ we connect vertices $s_1^{(i,j)}$ and $v_{a,j}^{(i)}$  (from the vertex selection gadget for $U_i$) with an edge at time $y_{i,j} + 2h-1$, we connect vertices~$s_2^{(i,j)}$ and~$v_{a,j}^{(i)}$ with an edge at time $y_{i,j} + 2h-1$, we connect vertices~$s_2^{(i,j)}$ and~$v_{b,i}^{(j)}$ (from the vertex selection gadget for $U_j$) with an edge at time $y_{i,j} + 2h-1$, and we connect vertices $s_3^{(i,j)}$ and $v_{b,i}^{(j)}$ with an edge at time $y_{i,j} + 2h-1$.

Intuitively, the time labels on the edges and the waiting time restrictions enforce that when arriving at $s_1^{(i,j)}$ there is only one way to continue to $s_2^{(i,j)}$ for which is it necessary to visit a vertex in the vertex selection gadget that corresponds to an endpoint of the selected edge. Similarly, from $s_2^{(i,j)}$ there is only one way to continue to $s_3^{(i,j)}$ for which it is necessary to visit a vertex in the vertex selection gadget that corresponds to the other endpoint of the selected edge. For a visualization of the validation gadget see \cref{fig:pathred3}, where the red dashed path corresponds to the selection of an edge.

Now we connect the gadgets for all $F_{i,j}$'s in sequence, that is, a $\Delta$-restless \nonstrpath{s,z} passes through the gadgets one after another, selecting one edge of each part $F_{i,j}$ of the edge set $F$. Formally, for each $i,j\in[k]$ with $i<j$, if $i< j-1$, we connect vertices~$v_{m,4}^{(i,j)}$ and~$v_{1,1}^{(i+1,j)}$ with an edge at time $y_{i+1,j}$, and if $i= j-1<k-1$, we connect vertices~$v_{m,4}^{(i,j)}$ and~$v_{1,1}^{(i,j+1)}$ with an edge at time $y_{i,j+1}$. 
It is easy to check that after the removal of $3\cdot \binom{k}{2}$ many vertices $\{s^{(1,2)}_1, s^{(1,2)}_2, s^{(1,2)}_3, s^{(1,3)}_1, \ldots, s^{(1,k)}_3, \ldots s^{(k-1,k)}_3\}$, the validation gadgets are a set of disjoint paths, see \cref{fig:pathred3}.

Finally, we create two new vertices $s$ and $z$, 
we connect vertices $s$ and $w_1^{(1)}$ (the ``first'' vertex of the vertex selection gadgets) with an edge at time one, 
we connect vertices $s$ and $s^{(1)}$ (the ``skip vertex'' of the first vertex selection gadget) with an edge at time one, 
and we connect $z$ and $v_{m,4}^{(k-1,k)}$ (the ``last'' vertex of the validation gadgets) with an edge at time $k\cdot n+m\cdot (3k^2+5k+3)$. 
We further connect vertices~$w_{n+1}^{(k)}$ and~$v_{1,1}^{(1,2)}$ 
(connecting the vertex selection gadgets and the validation gadgets) with an edge at time $k\cdot n$. 
Finally, we set~$\Delta=1$. This completes the construction. 
It is easy to check that $\TG$ can be constructed in polynomial time 
and that the distance to disjoint paths of $\UG$ is at most $k+3\cdot\binom{k}{2}$ and that every edge has only one time stamp. 

\begin{figure}[t]
\begin{center}
\begin{tikzpicture}[line width=1pt, scale=0.35]
    \node[vert2] (A11) at (0,0) {}; 
    \node[vert2] (A12) at (0,-2) {}; 
    \node[vert2] (A21) at (0,-4) {}; 
    \node[vert2] (A22) at (0,-6) {}; 
    \node[vert2] (A31) at (0,-8) {}; 
    \node[vert2] (A32) at (0,-10) {}; 
    \node[vert2] (A41) at (0,-12) {}; 
    \node (A41l) at (-2,-12) {$v_{h,1}^{(i,j)}$}; 
    \node[vert2] (A42) at (0,-14) {}; 
    \node[vert2] (Am1) at (0,-20) {}; 
    \node[vert2] (Am2) at (0,-22) {}; 
    
    \node[vert3,fill=lipicsyellow!80] (S1) at (8,-11) {}; 
    \node[vert3,fill=lipicsyellow!80] (S2) at (14,-11) {}; 
    \node[vert3,fill=lipicsyellow!80] (S3) at (20,-11) {}; 
    \node (S1l) at (9,-12.6) {$s_1^{(i,j)}$}; 
    \node (S2l) at (14,-12.6) {$s_2^{(i,j)}$}; 
    \node (S3l) at (19,-12.6) {$s_3^{(i,j)}$}; 
    
    \node[vert2] (B11) at (28,0) {}; 
    \node[vert2] (B12) at (28,-2) {}; 
    \node[vert2] (B21) at (28,-4) {}; 
    \node[vert2] (B22) at (28,-6) {}; 
    \node[vert2] (B31) at (28,-8) {}; 
    \node[vert2] (B32) at (28,-10) {}; 
    \node[vert2] (B41) at (28,-12) {}; 
    \node (B41l) at (30,-12) {$v_{h,3}^{(i,j)}$}; 
    \node[vert2] (B42) at (28,-14) {}; 
    \node[vert2] (Bm1) at (28,-20) {}; 
    \node[vert2] (Bm2) at (28,-22) {}; 
    
    \draw (A11) -- (A12);
    \draw (A12) -- (A21);
    \draw (A21) -- (A22);
    \draw (A22) -- (A31);
    \draw (A31) -- (A32);
    \draw (A32) -- (A41);
    \draw (A41) -- (A42);
    \draw[loosely dotted] (A42) -- (0,-15.6);
    \draw[loosely dotted] (Am1) -- (0,-18.4);
    \draw (Am1) -- (Am2);
    
    \draw (B11) -- (B12);
    \draw (B12) -- (B21);
    \draw (B21) -- (B22);
    \draw (B22) -- (B31);
    \draw (B31) -- (B32);
    \draw (B32) -- (B41);
    \draw (B41) -- (B42);
    \draw[loosely dotted] (B42) -- (28,-15.6);
    \draw[loosely dotted] (Bm1) -- (28,-18.4);
    \draw (Bm1) -- (Bm2);

    \draw (A11) -- (S1);
    \draw (A21) -- (S1);
    \draw (A31) -- (S1);
    \draw[color=ourred,dashed] (A41) -- (S1);
    \draw (Am1) -- (S1);
    
    \draw (B11) -- (S3);
    \draw (B21) -- (S3);
    \draw (B31) -- (S3);
    \draw[color=ourred,dashed] (B41) -- (S3);
    \draw (Bm1) -- (S3);

   \node[vert2] (Vij) at (9,-2) {};
   \node[vert2] (Vji) at (19,-2) {}; 
   \node (Vijl) at (7.4,-3.6) {$v_{a,j}^{(i)}$};
   \node (Vjil) at (20.6,-3.6) {$v_{b,i}^{(j)}$}; 
   
    \draw[color=ourred, dashed] (Vij) -- (S1);
    \draw[color=ourred, dashed] (Vij) -- (S2);
    \draw[color=ourred, dashed] (Vji) -- (S2);
    \draw[color=ourred, dashed] (Vji) -- (S3);
    
	\node (Uil) at (10,1) {$U_i$};
	\node (Ujl) at (18,1) {$U_j$};    
    
    \draw (Vij) -- (13,-2) -- (10,0) -- (7,-2) -- (Vij);
    \draw (Vji) -- (21,-2) -- (18,0) -- (15,-2) -- (Vji);
    \draw[loosely dotted] (13,-2) -- (15,-2);
    \draw[loosely dotted] (7,-2) -- (6,-2);
    \draw[loosely dotted] (21,-2) -- (22,-2);
    
    \draw[loosely dotted] (A11) -- (-2,0);
    \draw[loosely dotted] (Bm2) -- (30,-22);
\end{tikzpicture}
    \end{center}
	\caption{Visualization of the validation gadget for $F_{i,j}$ from the reduction of \cref{thm:probRestlessPath:W1hFVS}. The ``first path'' of the gadget is depicted vertically on the left, the ``second path'' on the right. The connections to the vertex selection gadgets for the edge $e_h^{(i,j)}=\{u_a^{(i)},u_b^{(j)}\}\in F_{i,j}$ are depicted. The edges in red (dashed) correspond to the path through the gadget if edge $e_h^{(i,j)}$ is ``selected'' and all these edges have the same time stamp. The vertex selection gadgets corresponding to~$U_i$ and $U_j$ are depicted as triangles in the upper center part. 
			The three vertices $s_1^{(i,j)}$, $s_2^{(i,j)}$, and $s_3^{(i,j)}$ are colored yellow (squared). 
	Note that after the removal of $s_1^{(i,j)}$, $s_2^{(i,j)}$, and $s_3^{(i,j)}$, 
	the validation gadget for $F_{i,j}$ is a set of disjoint paths.}\label{fig:pathred3}
\end{figure}
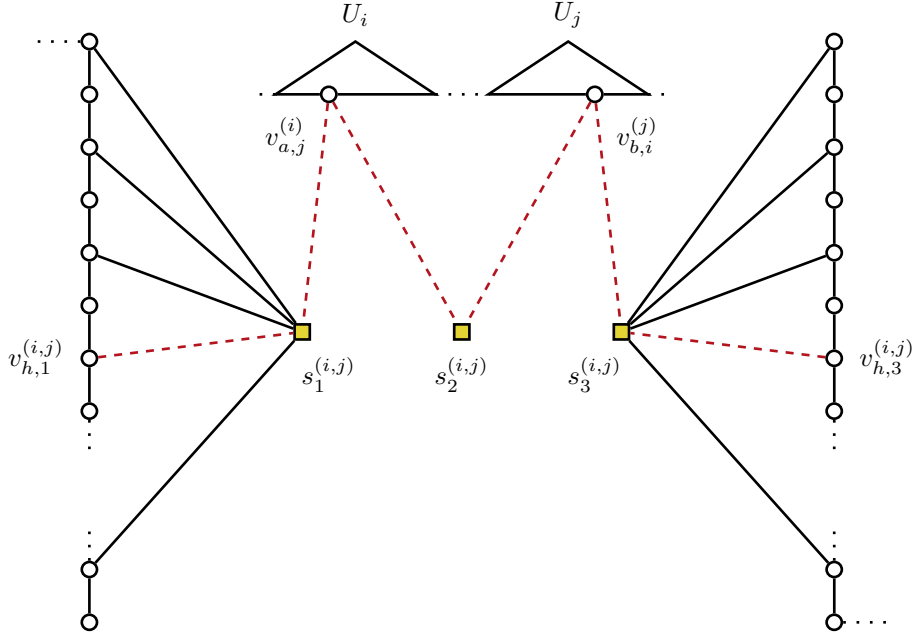

\proofpar{Correctness.} 
Now we show that $H$ contains a clique of size $k$ if and only if there is a $\Delta$-restless temporal path from $s$ to $z$ in~$\TG$.

\RArrow Assume that $H$ contains a clique of size $k$ and let $X\subseteq V(H)$ with $|X| = k$ be the set of vertices that form the clique in $H$. Now we show how to construct a $\Delta$-restless \nonstrpath{s,z} in $\TG$. Note that since $H$ is $k$-partite, we have that~$|U_i\cap X|=1$ for all $i\in [k]$. The temporal path starts at vertex $s$ in $\TG$ and then first passes through the vertex selection gadgets. If $u_j^{(i)}\in X$ for some~$i\in [k]$ and $j\in [n]$, then the temporal path skips the segment corresponding to $u_j^{(i)}$ in the vertex selection gadget for $U_i$. More formally, the temporal path follows the vertices $w_1^{(i)},v_{1,1}^{(i)}, v_{1,2}^{(i)}, \ldots,v_{1,k}^{(i)}, w_2^{(i)},\ldots, v_{j-1,k}^{(i)}, w_{j}^{(i)}, $ $ s^{(i)}, w_{j+1}^{(i)}, v_{j+1,1}^{(i)}, \ldots, v_{n,k}^{(i)}, w_{n+1}^{(i)}$ in that order, 
that is, the path skips vertices $v_{j,1}^{(i)}, v_{j,2}^{(i)}, \ldots, v_{j,k}^{(i)}$. It is easy to check that the time labels of the edges in the vertex selection gadget allow for a restless temporal path as described that respects the waiting time $\wait$.

In the validation gadget for $F_{i,j}$ with $i<j$, the path ``selects'' the edge $(U_i\cap X)\cup(U_j\cap X)\in F_{i,j}$ that connects the vertices from the parts $U_i$ and $U_j$ that are contained in the clique~$X$. Let $(U_i\cap X)\cup(U_j\cap X)=\{u_a^{(i)},u_b^{(j)}\}=e_h^{(i,j)}\in F_{i,j}$. Formally, the path follows vertices $v_{1,1}^{(i,j)}, v_{1,2}^{(i,j)}, v_{2,1}^{(i,j)}, v_{2,2}^{(i,j)}, \ldots, v_{h,1}^{(i,j)}, s_1^{(i,j)},v_{a,j}^{(i)},$ $s_2^{(i,j)},v_{b,i}^{(j)}, s_3^{(i,j)},v_{h,4}^{(i,j)}, v_{h+1,3}^{(i,j)}, v_{h+1,4}^{(i,j)}, \ldots, v_{m,4}^{(i,j)}$ in that order. Note that vertices $v_{a,j}^{(i)}$ and $v_{b,i}^{(j)}$ have not been used by the path in the vertex selection gadgets, since they appear in the segments that were skipped by the temporal path in the corresponding vertex selection gadgets. Furthermore, since the clique in~$H$ only contains one edge that connects vertices from $U_i$ and $U_j$, the vertices $v_{a,j}^{(i)}$ and $v_{b,i}^{(j)}$ have not been used by the temporal path in an earlier validation gadget. It is easy to check that the time labels of the edges in the validation gadget allow for a $\Delta$-restless temporal path as described. After the last validation gadget the path arrives at vertex~$z$. Hence, we have found a $\Delta$-restless \nonstrpath{s,z} in~$\TG$.

\LArrow Assume that we are given a $\Delta$-restless \nonstrpath{s,z} in $\TG$. We now show that $H$ contains a clique of size $k$.

After starting at $s$, the $\Delta$-restless temporal path first passes the vertex selection gadgets. 
Here, we need to make the important observation, that for each $i\in[k]$, any $\Delta$-restless \nonstrpath{s,z} has to ``skip'' at least one segment corresponding to one vertex $u^{(i)}_j\in U_i$ in the vertex selection gadget corresponding to $U_i$, otherwise the temporal path cannot traverse the validation gadgets.
More formally, assume for contradiction that there is a $\Delta$-restless \nonstrpath{s,z} and an $i\in[k]$ such that the temporal path visits all vertices in the vertex selection gadget corresponding to $U_i$. Let $j\in[k]$ with $j\neq i$. Assume that $i<j$ (the other case works analogously). We claim that the temporal path cannot traverse the validation gadget for $F_{i,j}$. 
For the temporal path to go from $s^{(i,j)}_1$ to $s^{(i,j)}_2$ by construction it has to visit at least one vertex from the vertex selection gadget for $U_i$. If all vertices have already been visited, that would mean the $\Delta$-restless \nonstrpath{s,z} visits one vertex twice---a contradiction. 

The waiting time $\wait$ prevents the temporal path from ``skipping'' more than one segment. More formally, any $\Delta$-restless \nonstrpath{s,z} arrives at the ``skip vertex''~$s^{(i)}$ of the vertex selection gadget for $U_i$ at time $(i-1)\cdot n + j$, for some $j\in[k-1]$. By construction this means the path visits $w_j^{(i)}$, then $s^{(i)}$, and then has to continue with $w_{j+1}^{(i)}$ since there is only one time edge the path can use without violating the waiting time $\Delta$. It follows that the temporal path skips exactly the segment corresponding to $u_j^{(i)}\in U_i$.

This implies that any $\Delta$-restless \nonstrpath{s,z} that traverses the vertex selection gadgets leaves exactly one segment of every vertex selection gadget unvisited. 
Let the set $X = \{u^{(i)}_j\in U_i \mid i\in[k] \wedge j\in[n] \wedge v_{j,1} \text{ is an unvisited vertex.}\}$ be the set of vertices corresponding to the segments that are ``skipped'' by the given $\Delta$-restless \nonstrpath{s,z}. It is easy to check that $|X|=k$. We claim that $X$ is a clique in $H$.

Assume for contradiction that it is not. Then there are two vertices $u_{i'}^{(i)}, u_{j'}^{(j)}\in X$ such that the edge $\{u_{i'}^{(i)}, u_{j'}^{(j)}\}$ is not in $F$. Assume that $i<j$. We show that then the $\Delta$-restless \nonstrpath{s,z} is not able to pass through the validation gadget for~$F_{i,j}$. By assumption we have that $\{u_{i'}^{(i)}, u_{j'}^{(j)}\}\notin F_{i,j}$. Note that the validation gadget is designed in a way that the first path ``selects'' an edge from $F_{i,j}$ and then the waiting time of one enforces that a $\Delta$-restless \nonstrpath{s,z} can only move from the first path to the second path of a validation gadget if the two endpoints of the selected edge are vertices whose corresponding segments in the vertex selection gadget were skipped. We have seen that for every $U_i$ with $i\in[k]$, the path segment corresponding to exactly one vertex of that set was skipped. Since $\{u_{i'}^{(i)}, u_{j'}^{(j)}\}\notin F_{i,j}$, we have that for every edge in $F_{i,j}$ that the segment corresponding to at least one of the two endpoints of the edge was \emph{not} skipped. Hence, we have that the $\Delta$-restless temporal path cannot pass through the validation gadget of $F_{i,j}$ and cannot reach~$z$---a contradiction. 
\end{proof}

\section{An \FPT-algorithm for short restless temporal path}\label{sec:fptalg}

In this section, we discuss how to find \emph{short} restless temporal paths.
Recall that in \probSRestlessPath, we are given an additional integer $k$ as
input and are asked whether there exists a $\Delta$-restless \nonstrpath{s,z}
that uses at most $k$ time edges. 
By \cref{thm:probRestlessPath:NPh} this problem is \NP-hard. 
Note that in the contact tracing scenario from the beginning,
we can expect to have a small $k$ and a large temporal graph.
\begin{theorem}
	\label{thm:fpt-length}
	\probSRestlessPath is 
	\begin{enumerate}[\normalfont\bfseries\color{darkgray}(i),leftmargin=0.9cm] %
			\item solvable in $2^k\cdot |\TG|^{O(1)}$~time with a constant one-side error\footnote{The algorithm always outputs \emph{no} if there is no $\Delta$-restless temporal $(s,z)$-path and outputs otherwise \emph{yes} with constant probability.},
			\item deterministically solvable in ${2}^{O(k)}\cdot|\TG|\wait$~time,
	\end{enumerate}
\end{theorem}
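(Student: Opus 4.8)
The plan is to reduce \probSRestlessPath to a colored-path-style search and then apply (i) the algebraic color-coding technique based on multilinear monomial detection over $\mathbb{F}_2$ (random version) and (ii) the derandomized version using representative families or standard color-coding. The key observation is that a $\Delta$-restless temporal $(s,z)$-path of length at most $k$ visits at most $k+1$ distinct vertices, so we can afford to "color" the vertices and look for a colorful solution. The twist compared to the static setting is that the state of a partial restless path is not just "the current vertex and the set of colors used" but also "the time at which we arrived at the current vertex," since the $\Delta$-restlessness constraint couples consecutive transitions. So the dynamic-programming table, or the monomials in the algebraic formulation, must be indexed by $(v, t, S)$ where $v$ is the current vertex, $t$ the arrival time, and $S$ the set of used colors.

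\textbf{Part (ii): deterministic ${2}^{O(k)}\cdot|\TG|\wait$ time.} First I would set up color-coding: pick a family of functions $c \colon V \to [k+1]$ from an $(n, k+1)$-perfect hash family of size $2^{O(k)}\log n$, so that for every potential solution vertex set of size at most $k+1$, some function assigns distinct colors. For a fixed coloring $c$, I define a dynamic program: $R[v, t, S] = \truevalue$ iff there is a $\Delta$-restless temporal walk from $s$ to $v$ arriving exactly at time $t$ that uses precisely the color set $S$ with all colors distinct (i.e., it is a colorful path). The base case is $R[v, t, \{c(v)\}] = \truevalue$ whenever $\{s,v\} \in E_t$. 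The transition extends from $R[u, t', S]$ via a time edge $(\{u,v\}, t)$ with $t' \le t \le t' + \wait$ and $c(v) \notin S$, setting $R[v, t, S \cup \{c(v)\}] = \truevalue$. The answer for this coloring is \yes\ iff $R[z, t, S] = \truevalue$ for some $t$ and some $S$ with $|S| \le k+1$. The number of table cells is $O(n \cdot \lifetime \cdot 2^{k})$ and each extension looks at $O(\wait)$ earlier arrival times, but one should be a little careful: instead of iterating over all $t' \in [t-\wait, t]$ one can maintain, for each color set $S$ and vertex $u$, a sliding-window "is there a reachable arrival time in the last $\wait$ steps" structure, which gives the $|\TG|\wait$ rather than $|\TG|\wait^2$ factor. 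Multiplying by the $2^{O(k)}\log n$ colorings and the $\lifetime$ factor absorbed in $|\TG|$, the total is ${2}^{O(k)}\cdot|\TG|\wait$ as claimed.

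\textbf{Part (i): randomized $2^k \cdot |\TG|^{O(1)}$ time.} Here I would follow the algebraic (Koutis--Williams / Björklund et al.) approach for detecting a multilinear monomial. Assign a fresh variable $x_v$ to each vertex $v \in V$ and work over a polynomial ring of characteristic $2$ (a large enough extension field $\mathbb{F}_{2^\ell}$, with random evaluation points drawn per vertex). Define, for each $v, t$ and each length $j \le k$, a polynomial $P_{v,t,j}$ that is the sum over all $\Delta$-restless temporal walks of length $j$ from $s$ to $v$ arriving at time $t$, of the product $\prod$ of $x_w$ over the vertices $w$ visited. These polynomials satisfy a recurrence over $j$: $P_{v,t,j} = x_v \sum_{(\{u,v\},t)\in \text{time edges}} \sum_{t-\wait \le t' \le t} P_{u,t',j-1}$, with base case $P_{s,\cdot,0}$. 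Then $\sum_{t} P_{z,t,k'}$ for $k' \le k$, restricted to multilinear monomials, is nonzero iff a $\Delta$-restless temporal $(s,z)$-path of length at most $k$ exists — walks that revisit a vertex contribute a monomial with a repeated variable, which over characteristic $2$ we can annihilate using the standard trick of substituting $x_v \mapsto y_v + z_v$ and tracking only multilinear terms, or more cleanly by working in the exterior/Zariski-style algebra $\mathbb{F}_2[y_1,\dots,y_n]/(y_i^2)$ combined with a random multiplication as in the Björklund--Husfeldt--Kaski--Koivisto $k$-path algorithm. Evaluating these polynomials at a random point takes, per $(v,t,j)$, time proportional to the degree (which is $O(k)$) times the number of incident time edges times $\wait$; summed up this is $2^k \cdot |\TG|^{O(1)}$ because the algebra element at each node is a length-$2^k$ vector over the field. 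If the resulting evaluation is nonzero we output \yes; if zero we output \no. Soundness is exact (zero polynomial evaluates to zero), and completeness holds with constant probability by Schwartz--Zippel.

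\textbf{Main obstacle.} The genuinely new difficulty relative to textbook $k$-path algorithms is the arrival-time bookkeeping forced by $\Delta$-restlessness: one must carry the arrival time as part of the state, which naively blows up the running time by a $\lifetime \cdot \wait$ factor, and the non-strict time condition means a partial solution's extendability genuinely depends on \emph{when} it arrived, not merely \emph{that} it arrived. The trick to keep this under control — and the step I expect to need the most care — is the sliding-window aggregation over the $\wait$ relevant previous time steps (for the deterministic DP) and the corresponding careful accounting of the polynomial-evaluation cost in the algebraic algorithm, so that $\wait$ (and in the randomized case, even the $\lifetime$ dependence) enters only polynomially and does not multiply the $2^{O(k)}$ term more than stated. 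A secondary subtlety is verifying that the standard monomial-sieving identity still isolates exactly the \emph{simple} paths once the state space is enlarged by time — but since repetition of a vertex still forces repetition of its variable, the characteristic-$2$ cancellation argument goes through verbatim.
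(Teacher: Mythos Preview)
For Part~(i), your approach is essentially the same as the paper's: the paper constructs an explicit directed ``$\Delta$-$(s,z)$-expansion'' $D$ with one node per relevant vertex appearance $(v,t)$, introduces one variable per \emph{original} vertex $v$ (exactly your $x_v$), builds the standard walk-polynomial circuit of degree $k{+}1$ on $D$, and invokes Williams' multilinear-monomial detection. Your recursion $P_{v,t,j}$ is precisely this circuit with the expansion graph left implicit.

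For Part~(ii) you take a genuinely different route. The paper does \emph{not} use color coding; instead it reduces to an \textsc{Independent Path} problem (find an $(s,z)$-dipath in $D$ whose vertex set is independent in a given linear matroid), solves that via a representative-families DP in the style of Fomin--Lokshtanov--Panolan--Saurabh, and instantiates the matroid as the $(k{+}1)$-truncation of the partition matroid with parts $V'(v)$. The payoff of this extra machinery is the removal of the $\log n$ overhead that is inherent to derandomized color coding: an $(n,k{+}1)$-perfect hash family has size $2^{O(k)}\log n$, so your analysis actually yields $2^{O(k)}\cdot|\TG|\,\Delta\cdot\log n$, not the stated $2^{O(k)}\cdot|\TG|\,\Delta$ --- the $\log n$ is multiplicative and cannot be absorbed into either factor. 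The paper gets rid of it by (a) representing the truncated partition matroid via a Vandermonde matrix over a prime field $\mathbb{F}_p$ with $p=O(n)$, so that one field operation is constant time in the Word-RAM model, and (b) showing the representative-set DP runs in $O(2^{\omega(k+1)}m)$ field operations where $m=|E(D)|=O(|\TG|\,\Delta)$. Your color-coding argument is correct and more elementary, but it proves a bound that is a $\log n$ factor weaker than what the theorem claims; the paper's additional work is exactly what buys that factor.
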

Note that we can solve \probSRestlessPath{} such that the running time is independent 
from the lifetime $\lifetime$ of the temporal graph.
To show \cref{thm:fpt-length}, we first reduce the problem to a specific path problem in directed graphs.
Then, we apply known algebraic tools for multilinear monomials detection.
Here, \cref{thm:fpt-length} (i) %
is based on Williams \cite{W09}. %
To get a deterministic algorithm with a running time almost linear in $|\TG|$, %
we show a different approach based on representative sets \cite{FLPS16} which results in \cref{thm:fpt-length} (ii).

\ownparagraph{Reduction to directed graphs.}
We introduce a so-called \emph{$\wait$-$(s,z)$-expansion} for two vertices $s$ and $z$ of a temporal graph with waiting times.
That is, a time-expanded version of the temporal graph 
which reduces reachability questions to directed graphs.
While similar approaches have been applied several times
\cite{Akr+19a,Ber96,MOS19,Wu+16,Zsc+19}, to the best of our knowledge, 
this is the first time that waiting-times are considered.
In a nutshell, the $\wait$-$(s,z)$-expansion has  
for each vertex $v$ at most $\lifetime$ 
many copies $v^1,\dots, v^{\lifetime}$ %
and if an \dipath{} visits $v^i$, it means that the corresponding \twalk{} visits $v$ at time $i$. 
\begin{definition}[$\wait$-$(s,z)$-Expansion]
	\label{def:expansion}
	Let $\TGcompact$ be a temporal graph with two distinct vertices $s,z \in V$ such that $\{s,z\} \not \in E_t$, for all $t \in [\lifetime]$. Let $\wait\le\lifetime$.
The \emph{$\wait$-$(s,z)$-expansion} of $\TG$ is the directed graph $D=(V',E')$ with
\begin{enumerate}[(i)]%
		\item $V' := \{s,z\} \cup \left\{ v^{t}\ \middle\vert\ v \in e, e \in E_t, v \not \in \{ s, z \} \right\}$,
		\item $E_s := \left\{ (s,v^t) \ \middle\vert\   \{s,v\} \in E_t \right\}$,
		\item $E_z := \left\{(v^i,z) \ \middle\vert\ v^i \in V',  \{v,z\} \in E_t, 0 \leq t-i \leq \wait  \right\}$, and
		\item $E' := E_s \cup E_z \cup \left\{(v^i,w^t) \ \middle\vert\ v^i \in V' \setminus \{ s, z \}, \{v,w\} \in E_t, 0 \leq t-i \leq \wait  \right\}$.
\end{enumerate}
Furthermore, we define $V'(s) := \{ s \}$, $V'(z) := \{ z \}$, and $V'(v) := \{ v^{t} \in V' \mid t \in [\lifetime] \}$, for all~$v \in V \setminus \{s,z\}$.
\end{definition}
Next, we show that a $\wait$-$(s,z)$-expansion of a temporal graph can be computed efficiently.
\begin{lemma}
	\label{lem:exp-runtime}
	Given a temporal graph $\TGcompact$, two distinct vertices $s,z\in V$, and $\wait\le\lifetime$, 
	we can compute its~$\wait$-$(s,z)$-expansion $D$ with $|V(D)| \in O(|\TG|)$ in $O(|\TG|\cdot \wait)$ time.
\end{lemma}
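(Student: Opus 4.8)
The goal is to show that the $\wait$-$(s,z)$-expansion $D=(V',E')$ can be built in $O(|\TG|\cdot\wait)$ time and that $|V(D)|\in O(|\TG|)$. I would proceed in two steps: first bound the number of vertices, then bound the construction time, treating the edge set $E'$ as the bottleneck.

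\textbf{Bounding $|V(D)|$.}
By \cref{def:expansion}(i), $V'$ consists of $s$, $z$, and one copy $v^t$ for every vertex $v\notin\{s,z\}$ that is an endpoint of some edge present at time $t$. Thus $|V'|\le 2 + \sum_{t\in[\lifetime]} 2|E_t|$, since each time edge $(\{v,w\},t)$ contributes at most the two appearances $v^t, w^t$. Recall the size convention $|\TG| = |V| + \sum_{i=1}^\lifetime \min\{1,|E_i|\}$; this is \emph{not} directly an edge count, so I need a small sanity check here. The cleaner route is to observe that in any temporal graph we may assume (by a trivial preprocessing pass that merges/removes redundant structure, or simply by the standard convention) that $\sum_{t}|E_t| \le |\TG|$ up to constant factors — more precisely, I would just carry $M := \sum_{t\in[\lifetime]}|E_t|$ as an auxiliary quantity and note $M \le |\TG|$ when the temporal graph is given in the natural edge-list encoding, so that $|V'| \le 2 + 2M \in O(|\TG|)$. (If the paper's size measure genuinely counts only nontrivial layers rather than edges, then $|\TG|$ already dominates $M$ because each listed edge occurrence is part of the input, so $O(|\TG|)$ still holds.) The key point to state carefully is precisely which input-size measure makes $\sum_t |E_t| = O(|\TG|)$; I expect this to be the one subtle spot.

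\textbf{Bounding the running time.}
Creating the vertex set: one pass over all time edges, allocating $v^t$ the first time the pair $(v,t)$ is seen — $O(M)$ time with a hash table (or $O(M\log M)$ deterministically, absorbed into later terms). For the edges: the sets $E_s$ and $E_z$ have size $O(M)$ and are produced by a single scan. The bulk is the generic edges $\{(v^i,w^t) : \{v,w\}\in E_t,\ 0\le t-i\le\wait\}$. I would iterate over every time edge $(\{v,w\},t)\in E_t$ and, for each of the at most $\wait+1$ source times $i\in\{t-\wait,\dots,t\}$ with $i\ge 1$, check whether the copy $v^i$ exists (i.e.\ whether $\{v,\cdot\}\in E_i$ for some edge, equivalently whether $v^i\in V'$) and if so emit the arc $(v^i,w^t)$, and symmetrically emit $(w^i,v^t)$. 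Each time edge thus spends $O(\wait)$ time, for a total of $O(M\cdot\wait) = O(|\TG|\cdot\wait)$. The membership test ``$v^i\in V'$?'' is $O(1)$ after the vertex-creation pass built a lookup structure; for a fully deterministic bound one sorts the appearance list once in $O(M\log M)$ time, which is dominated by $O(M\wait)$ whenever $\wait\ge\log M$ and otherwise can be folded in. Finally, arcs into $z$ are handled identically, scanning over edges incident to $z$.

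\textbf{Main obstacle.}
The genuine mathematical content is minimal; the one place needing care is reconciling the construction count (which is naturally in terms of $M=\sum_t|E_t|$ and the factor $\wait$) with the stated bound in terms of $|\TG|$, i.e.\ confirming $M\in O(|\TG|)$ under the paper's non-compact representation assumption. Everything else is a routine bookkeeping argument: one pass to build $V'$, then one pass over time edges expanding each into $O(\wait)$ arcs with $O(1)$ existence checks, yielding $|V(D)|\in O(|\TG|)$ and total time $O(|\TG|\cdot\wait)$.
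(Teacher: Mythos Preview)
Your proposal is correct and follows essentially the same approach as the paper: bound $|V'|$ by (at most) two vertex appearances per time edge, then build $E'$ by iterating over all time edges and spending $O(\wait)$ work on each. The one implementation difference is that the paper avoids your hash-table/lookup structure by processing the layers $t=1,\dots,\lifetime$ in ascending order and maintaining, for every vertex $v$, a list $L_v$ of times $i$ with $v^i\in V'$; appending keeps $L_v$ sorted, so for a time edge $(\{v,w\},t)$ one simply walks $L_v$ backwards from the end, emitting $(v^i,w^t)$ until $t-i>\wait$, which is deterministic $O(\wait)$ per time edge without any membership test. Your explicit concern about reconciling $M=\sum_t|E_t|$ with the paper's $|\TG|$ is well placed---the paper's own proof handles it only by assuming (after discarding vertices isolated in every layer) that $|V|\le\sum_i|E_i|$ and then treating $\sum_i|E_i|$ as $O(|\TG|)$.
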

\begin{proof}
	Let $V' := \{s,z\}$ and $E'$ be empty in the beginning.
	We will fill up $V'$ and $E'$ simultaneously.
	In order to do that efficiently, we will maintain for each vertex $v\in V$ a ordered list $L_v$ 
	such that $t \in L_v$ if and only if $v^t \in V'$.
	We assume that $|V| \leq \sum_{i=1}^\lifetime|E_i|$, because vertices which are isolated in every layer are irrelevant 
	for the $\wait$-$(s,z)$-expansion and can be erased in linear time.
	
	We proceed as follows.
	For each~$t \in \{1,\dots,\lifetime\}$ (in ascending order), we iterate over~$E_t$.
	For each~$\{v,w\} \in E_t$, we distinguish three cases.
	\begin{description}
	\item[($w=s$):] We add $v^t$ to $V'$, $(s,v^t)$ to $E'$, and add $t$ to $L_v$.
		This can be done in constant time.
	\item[($w=z$):] We add $v^t$ to $V'$, and add $t$ to $L_v$.	
		Now we iterate over all~$i \in L_v$ (in descending order)
		and add $(v^i,z)$ to $E'$
		until $ t - i > \wait$.
		This can be done in $O(\wait)$ time.
	\item[($\{s,z\} \cap \{v,w\} = \emptyset$):] We add $v^t,w^t$ to $V'$, and add $t$ to $L_v$ and $L_w$.
		Now we iterate over~$i\in L_v$ (in descending order) 
		and add $(v^i,w^t)$ to $E'$
		until $ t - i > \wait$.
		Afterwards, we iterate over~$i\in L_w$ (in descending order) 
		and add $(w^i,v^t)$ to $E'$
		until $ t - i > \wait$.
		This can be done in $O(\wait)$ time.
	\end{description}
	
	Observe that after this procedure the digraph $D=(V',E')$ is the $\wait$-$(s,z)$-expansion of $\TG$ and
	that we added at most $2$ vertices for each time-edge in $\TG$.
	Hence, $V' \leq |\TG|$.
	This gives a overall running time of $O(|\TG|\cdot\wait)$.
\end{proof}
It is easy to see that there is a \twalk{} in the temporal graph 
if and only if there is an \dipath{} in the $\wait$-$(s,z)$-expansion.
Next, we identify the necessary side constraint to identify \tpath{}s in the $\wait$-$(s,z)$-expansion.
\begin{lemma}\label{lem:exp-correct}
	Let $\TGcompact$ be a temporal graph, $s,z \in V$ two distinct vertices, $\wait\le\lifetime$, and $D=(V',E')$ the $\wait$-$(s,z)$-expansion of $\TG$.
	There is a \tpath{} in $\TG$ of length~$k$ 
	if and only if there is an \dipath{} $P'$ in $D$ of length $k$ 
	such that for all $v \in V$ it holds that~$|V'(v) \cap V(P')| \leq 1$.
\end{lemma}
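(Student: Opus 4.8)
The statement is a clean two-way translation between restless temporal $(s,z)$-paths in $\TG$ and $(s,z)$-dipaths in $D$ that meet at most one copy of each original vertex, so I would prove both directions by writing down the obvious correspondence and checking that every constraint of one object matches a constraint of the other. Throughout I use that, by \cref{def:expansion}, the expansion is only defined when $\{s,z\}\notin E_t$ for all $t$, and that in $D$ the vertex $s$ has in-degree $0$ and $z$ has out-degree $0$ (immediate from the three arc sets $E_s,E_z,E'$, whose only arcs touching $s$ leave $s$ and whose only arcs touching $z$ enter $z$); consequently any internal vertex of an $(s,z)$-dipath is a copy $v^t$ with $v\in V\setminus\{s,z\}$, and the degenerate case $k=1$ is vacuous on both sides (no arc $(s,z)$ in $D$, no length-$1$ temporal $(s,z)$-path since $\{s,z\}\notin E_t$).

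For the forward direction, let $P=((v_{i-1},v_i,t_i))_{i=1}^k$ be a $\wait$-restless temporal $(s,z)$-path. I would map it to the vertex sequence $s=u_0,u_1,\dots,u_{k-1},u_k=z$ in $D$ with $u_i:=v_i^{t_i}$ for $1\le i\le k-1$ (the copy of $v_i$ at its arrival time). Then I check that consecutive pairs are arcs: the first arc $(s,v_1^{t_1})\in E_s$ since $\{s,v_1\}\in E_{t_1}$; the last arc $(v_{k-1}^{t_{k-1}},z)\in E_z$ since $\{v_{k-1},z\}\in E_{t_k}$ and $0\le t_k-t_{k-1}\le\wait$ by \cref{def:rtemppath}; and each middle arc $(v_{i-1}^{t_{i-1}},v_i^{t_i})$ (for $2\le i\le k-1$) lies in the last arc set of $E'$ since $\{v_{i-1},v_i\}\in E_{t_i}$ and $0\le t_i-t_{i-1}\le\wait$, again by the restless condition (each $v_j^{t_j}$ is indeed in $V'$ because $v_j$ is incident to a transition edge at time $t_j$ and $v_j\notin\{s,z\}$). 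Since $v_0,\dots,v_k$ are pairwise distinct and $v_1,\dots,v_{k-1}\notin\{s,z\}$, the $u_i$ are pairwise distinct, so $P'$ is a dipath of length $k$; and $|V'(v)\cap V(P')|\le 1$ holds for every $v$ because each base vertex occurs at most once among $v_1,\dots,v_{k-1}$ (and $s,z$ occur exactly once each).

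For the backward direction, let $P'=(s=u_0,u_1,\dots,u_k=z)$ be an $(s,z)$-dipath in $D$ of length $k$ with $|V'(v)\cap V(P')|\le1$ for all $v$. By the degree remarks above, for $0<i<k$ we have $u_i=w_i^{\tau_i}$ with $w_i\in V\setminus\{s,z\}$. I set $v_0:=s$, $v_i:=w_i$ for $1\le i\le k-1$, $v_k:=z$, and read off transition times $t_i:=\tau_i$ for $1\le i\le k-1$, while $t_k$ is a time witnessing $(w_{k-1}^{\tau_{k-1}},z)\in E_z$, i.e.\ $\{w_{k-1},z\}\in E_{t_k}$ with $0\le t_k-\tau_{k-1}\le\wait$. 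The arc $(u_0,u_1)\in E_s$ gives $\{s,w_1\}\in E_{t_1}$; each internal arc $(u_{i-1},u_i)$ gives $\{w_{i-1},w_i\}\in E_{t_i}$ and $0\le\tau_i-\tau_{i-1}\le\wait$. Hence $((v_{i-1},v_i,t_i))_{i=1}^k$ is a temporal walk with non-decreasing times, and the restless condition $t_i\le t_{i+1}\le t_i+\wait$ holds for all $i\in[k-1]$ (for $i\le k-2$ from the internal-arc conditions, for $i=k-1$ from the $E_z$ condition). It is a path: if a base vertex $v$ repeated among $v_1,\dots,v_{k-1}$, the corresponding $u_i,u_j$ are two distinct elements (distinct since $P'$ is simple) of $V'(v)\cap V(P')$, contradicting the hypothesis; and $s,z$ cannot repeat since the $w_i$ avoid $\{s,z\}$.

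The only mildly delicate points are the asymmetric bookkeeping at the two ends — the source contributes no waiting-time constraint (its waiting time is ignored) so the first arc carries no "$0\le t-i$" condition, and the arrival time $t_k$ at $z$ is not encoded as a superscript in $D$ but only as the witness of an $E_z$-arc — together with the trivial boundary cases $k=1$ and $k=2$ (where the range of "internal arcs" is empty). I expect all of this to be routine; the entire conceptual content is the exact alignment of the expansion's arc condition $0\le t-i\le\wait$ with the restless condition $t_i\le t_{i+1}\le t_i+\wait$, plus using the "$\le1$ copy per vertex" hypothesis to upgrade a walk to a path.
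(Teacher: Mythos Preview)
Your proposal is correct and follows essentially the same approach as the paper: both directions translate explicitly between a restless temporal $(s,z)$-path $((v_{i-1},v_i,t_i))_{i=1}^k$ and the dipath $s,v_1^{t_1},\dots,v_{k-1}^{t_{k-1}},z$, checking that the arc conditions in the expansion coincide with the restless waiting-time constraints and that the ``at most one copy per vertex'' condition matches the simplicity of the path. The paper presents the forward direction via a short induction on the path prefix while you write the correspondence down directly, but this is purely stylistic; your handling of the endpoint asymmetries (no superscript for $s,z$, the $E_z$-witness for $t_k$) and of the degenerate cases is if anything more explicit than the paper's.
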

\begin{proof}
	\RArrow
	Let $P = \big( ((s,v_1, t_1), (v_1, v_2, t_2), \dots, (v_{k'-1},z, t_{k'} ) \big)$ 
	be a \tpath{} in~$\TG$ of length $k$.
	We can inductively construct an \dipath{} $P'$ in $D$.
	Observe that $P_1' := ((s,v_1^{t_1}))$ is an \dipath[s,v_1^{t_1}]{} of length $1$ in $D$, 
	because the arc $(s,v_1^{t_1})$ is in $E_s$ of $D$.
	Now let $i \in [k'-2]$ and $P'_i$ be an \dipath[s,v_i^{t_i}]{} of length $i$ such that 
	\begin{enumerate}[(i)]
		\item for all $j \in [i]$, we have that $|V'(v_j) \cap V(P'_i)| = 1$, and
		\item for all $v \in V \setminus \{s,v_1,\dots,v_i\}$, we have that $|V'(v) \cap V(P'_i)| = 0$.
	\end{enumerate}
	In order to get an \dipath[s,v_{i+1}^{t_{i+1}}]{} $P'_{i+1}$ of length $i+1$, we extend $P'_{i}$ by the arc $(v_i^{t_i},v_{i+1}^{t_{i+1}})$.
	Observe, that $v_{i+1}^{t_{i+1}} \in V'$ because of the time-edge $(\{v_i,v_{i+1}\},t_{i+1})$ in $\TG$
	and that the arc $(v_i^{t_i},v_{i+1}^{t_{i+1}}) \in E'$, because we have $0 \leq t_{i+1} - t_i \leq \Delta$.
	Observe that 
	\begin{enumerate}[(i)]
		\item for all $j \in [i+1]$, we have that $|V'(v_j) \cap V(P'_{i+1})| = 1$, and 
		\item for all $v \in V \setminus \{s,v_1,\dots,v_{i+1}\}$, we have that $|V'(v) \cap V(P'_{i+1})| = 0$.
	\end{enumerate}
	
	Hence, we have an \dipath[s,v_{k'-1}^{t_{k'-1}}]{} $P'_{k-1}$ of length $k-1$ satisfying (i) and (ii) 
	which can be extended (in a similar way) to an \dipath{} of length $k$
	such that for all $v \in V$ it holds that $|V'(v) \cap V(P')| \leq 1$.

	\LArrow
	Let $P'$ be a \dipath{} in $D$ of length $k$ 
	such that for all $v \in V$ it holds that $|V'(v) \cap V(P')| \leq 1$.
	Let $V(P') = \{ s,v_1^{t_1},\dots,v_{k-1}^{t_{k-1}},z\}$.
	Observe that an arc from $s$ to $v_1^{t_1}$ in $D$ implies that there is a time-edge $(\{s,v_1\},t_1)$ in $\TG$.
	Similarly, an arc from $v_i^{t_i}$ to $v_{i+1}^{t_{i+1}}$ implies that there is a time-edge $(\{v_i,v_{i+1}\},t_{i+1})$ in $\TG$
	and that $0 \leq t_{i+1} - t_i \leq \Delta$, for all $i \in [k-2]$.
	Moreover, an arc from $v_{k-1}^{t_{k-1}}$ to $z$ implies that there is some $t_{k}$ such that 
	there is a time-edge $(\{v_k,z\},t_{k})$ in~$\TG$ with $0 \leq t_{k} - t_{k-1} \leq \Delta$.
	Hence, $P = \big( (s,v_1, t_1), (v_1, v_2, t_2), \dots, (v_{k'-1},z, t_{k'} ) \big)$ is a \twalk{} of length $k$ in $\TG$.
	Finally, $|V'(v) \cap V(P')| \leq 1$, for all $v \in V$, implies that $v_i \not = v_j$ for all $i,j \in \{0,\dots,k\}$ with $i \not = j$.
	Thus, $P$ is a \tpath{} of length $k$.
\end{proof}

\ownparagraph{Obtaining \cref{thm:fpt-length} (i).}
We now adapt the algorithm of Williams \cite{W09} to our
specific needs.
To this end, we introduce some standard notation from algebraic theory.

An \emph{arithmetic circuit} $C$ over a commutative ring $R$ is a simple labelled directed acyclic graph 
with its internal nodes labeled by $+$ (sum gates) or $\times$ (product gates) and 
its nodes of in-degree zero (input gates) labeled with elements from $R\cup X$, 
where $X$ is a set of variables. 
There is one node of out-degree zero, called the output gate.
The size of $C$ is the number of vertices in the graph.
An arithmetic circuit $C$ over $R$ computes a polynomial $P(X)$ over $R$ in the natural way:
an input gate represents the polynomial it is labeled by. 
A sum (product) gate represents the sum (product) of the polynomials represented by its in-degree neighbors.
We say $C$ represents $P(X)$ if the polynomial of the output gate of $C$ is equivalent to $P(X)$.

\begin{lemma} \label{lem:the-polynomial}
		Let $k \in \N$ and $D=(V,A)$ be a directed graph with partition $V=\biguplus_{i=0}^n V_i$, where $V_0 = \{s\}$ and $V_n=\{z\}$.
		Then,
		there is an arithmetic circuit $C$ representing a polynomial $Q(X)$ of degree
		at most $k+1$ such that $Q(X)$ has a multilinear\footnote{No variable occurs to a power of two or higher.} monomial of degree at most $k+1$ if and only if there is an \npath{s,z} $P$ of length at most $k$ in $D$ where $|V(P) \cap V_i| \leq 1$ for all $i \in [n]$.
		Moreover, $|X|=n+1$, $C$ is of size $O(k(n+|A|))$, has no scalar multiplication, and all product gates in have in-degree two.
\end{lemma}
The idea of the polynomial is similar to the one of Williams \cite{W09}, 
but here instead of having one variable for each vertex we just have one variable for all vertices in one part of the partition of $V$.
\begin{proof}
We define the polynomial recursively
as $Q(X) = x_\bot Q_z^k$ over variables $X= \{ x_\bot, x_0 \} \cup \{x_i \mid i \in [n] \}$, where
\begin{align}
		\label{eq:polynomial}
		& Q_s^0 := x_0,\nonumber\\ 
		\forall v \in V \setminus \{s\} &\colon Q_v^0 := x_\bot, &\text{and}\\
  \forall v \in V,\forall j \in [k] &\colon 
  Q_v^j := \sum_{(u,v)\in A} Q_u^{j-1} x_i, &\text{ where } v \in V_i. \nonumber
\end{align}
Note that we can, by simply following  \eqref{eq:polynomial}, construct an
arithmetic circuit $C$ which represents $Q(X)$ in $O(k(n+|A|))$ time such that
each product gate has an in-degree of two.
Furthermore, observe that $Q(X)$ has no scalar multiplication and is of degree
at most $k+1$.

The following induction completes the proof:
We claim that for all $v\in V$ and $j \in [k] \cup \{0\}$,
$Q(X)$ has a multilinear monomial $M$ of degree at most $j+1$
if and only if
there is an \npath{s,v} $P$ of length at most $j$ in $D$ where $|V(P) \cap V_i| \leq 1$ for all $i \in [n]$.
Moreover, $M$ contains the variable $x_i$ if and only if $|V(P) \cap V_i| = 1$, for all $i \in [n]$.
Is is easy to verify that the claim is true for $j=0$.

Now assume as induction hypothesis that for all $u \in V$ and all $j'< j \in [k]$, 
the polynomial $x_\bot Q_u^{j'}$ has a multilinear monomial $M$ of degree at most $j'+1$ 
if and only if
there is an \npath{s,u} $P$ of length at most $j'$ in $D$ where $|V(P) \cap V_i|
\leq 1$, for all $i \in [n]$.
Moreover, $M$ contains the variable $x_i$ if and only if $|V(P) \cap V_i| = 1$, for all $i \in [n]$.
Let $v \in V_p$.

\RArrow
Assume there is a multilinear monomial $M$ of degree at most $j+1$ in $x_\bot Q_v^j$.
Since $x_\bot Q_v^j = \sum_{(u,v)\in A} x_p (x_\bot Q_u^{j-1})$, 
we know that $M$ contains $x_p$ and there is a $(u,v) \in A$ such that $x_\bot Q_u^{j-1}$ contains a multilinear monomial $M'$ which does not contain $x_p$.
By induction hypothesis, there is an \npath{s,u} $P'$ of length at most $j-1$
such that $|V(P') \cap V_i| = 1$ if and only if $M'$ contains $x_i$ for all $i \in [n]$.
Hence, there is an \npath{s,v} $P$ (obtained by extending $P'$ with $v$) such
that $|V(P) \cap V_i| \leq 1$ for all $i \in [n]$.
Furthermore, we have that $|V(P) \cap V_i|=1$ if and only if $M$ contains $x_i$ for all $i \in [n]$.

\LArrow
Assume there is an \npath{s,v} $P$ of length at most $j$ in $D$ where $|V(P) \cap V_i| \leq 1$ for all $i \in [n]$.
Let $P'$ be the \npath{s,u} obtained by removing $v$ from $P$.
Hence, $P'$ is of length at most $j-1$, and $|V(P) \cap V_i| \leq 1$ for all $i \in [n]$.
By induction hypothesis $x_\bot Q_u^{j-1}$ contains a multilinear monomial $M$ of degree at most $j$
which does not contain $x_p$.
Since $x_\bot Q_v^j = \sum_{(u,v)\in A} x_p (x_\bot Q_u^{j-1})$, we know that $x_\bot Q_v^j$ contains a 
$M$ multiplied by $x_p$ as monomial.
Thus, $x_\bot Q_v^j$ has a multilinear monomial of degree at most $j+1$ which contains variable $x_i$ if and only if 
$|V(P') \cap V_i| = 1$, for all $i \in [n]$.
\end{proof}
Now we can apply the following result of Williams \cite{W09}.
\begin{theorem}[\cite{W09}]
		\label{thm:rand-blackbox}
		Let $Q(X)$ be a polynomial of degree at most $k$, 
		represented by an arithmetic circuit of size $n$ with no scalar multiplications and
		where all product gates have in-degree two.
		There is a randomized algorithm that  
		runs in $2^k n^{O(1)}$ time,
		outputs \yes{} with high probability ($\geq \nicefrac{1}{5}$) if there is a multilinear term
		in the sum-product expansion of $Q$, 
		and always outputs \no{} if there is no multilinear term.
\end{theorem}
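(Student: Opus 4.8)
The plan is to prove this by the group-algebra (Koutis--Williams) technique for multilinear monomial detection. First I would fix a field $\mathbb{F}$ of characteristic two, say $\mathbb{F} = \mathrm{GF}(2^\ell)$ with $\ell = \Theta(\log k)$ chosen large enough for the probability bound below, and work inside the commutative group algebra $\mathbb{F}[\mathbb{Z}_2^k]$, whose elements are formal $\mathbb{F}$-combinations of the $2^k$ group elements $e_w$ ($w \in \mathbb{Z}_2^k$), with multiplication $e_w \cdot e_{w'} = e_{w+w'}$ and $\mathbf{1} = e_0$. The algorithm draws, independently for each variable $x_i$, a uniformly random vector $v_i \in \mathbb{Z}_2^k$ and a uniformly random scalar $z_i \in \mathbb{F}$, performs the substitution $x_i \mapsto z_i(\mathbf{1} + e_{v_i})$, evaluates the circuit gate by gate over $\mathbb{F}[\mathbb{Z}_2^k]$, and outputs \yes{} if and only if the resulting group-algebra element is nonzero.

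\emph{Running time and soundness.} Each group-algebra element is stored as a length-$2^k$ vector over $\mathbb{F}$; a sum gate costs $O(2^k)$ field operations, and a fan-in-two product gate costs $O(k\,2^k)$ field operations via the fast Walsh--Hadamard transform for the $\mathbb{Z}_2^k$-convolution. Since field operations in $\mathrm{GF}(2^\ell)$ take time polynomial in $\ell = O(\log k)$ and the circuit has size $n$, the total running time is $2^k \cdot n^{O(1)}$; the absence of scalar multiplications and the fan-in-two restriction on product gates are exactly what keep each gate at this cost. For soundness (no false positives) I would observe that in characteristic two $(\mathbf{1} + e_{v_i})^2 = \mathbf{1} + e_{2 v_i} = \mathbf{1} + e_0 = \mathbf{1} + \mathbf{1} = 0$, since $2v_i = 0$ in $\mathbb{Z}_2^k$. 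Hence every monomial of the sum-product expansion that is \emph{not} multilinear maps to $0$ deterministically, so if $Q$ has no multilinear term the evaluation is identically $0$ and the algorithm always reports \no{}.

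\emph{Completeness.} This is the key step. Writing the evaluation as $Q(\hat{x}) = \sum_{T} \bar c_T \left(\prod_{i \in T} z_i\right) \pi_T$, where the sum ranges over the variable sets $T$ of the surviving multilinear monomials, $\bar c_T$ is the coefficient reduced into $\mathbb{F}$, and $\pi_T = \prod_{i \in T}(\mathbf{1} + e_{v_i})$, I would show this is nonzero with probability at least $1/5$ whenever a multilinear term is present. Two facts drive this. First, for a fixed witness set $T^\ast$ one checks that $\pi_{T^\ast} \neq 0$ exactly when the vectors $\{v_i : i \in T^\ast\}$ are linearly independent over $\mathbb{Z}_2$: expanding gives $\sum_{S \subseteq T^\ast} e_{\sum_{i \in S} v_i}$, and dependence makes the map $S \mapsto \sum_{i \in S} v_i$ have even fibers so that all basis terms cancel in characteristic two, whereas independence yields $2^{|T^\ast|}$ distinct basis elements. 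For $|T^\ast| \le k$ uniform vectors this happens with probability $\ge \prod_{j \ge 1}(1 - 2^{-j}) > 0.28$. Second, viewing $Q(\hat x)$ for fixed $v_i$ as a polynomial in the $z_i$, the distinct sets $T$ give distinct multilinear $z$-monomials $\prod_{i \in T} z_i$, so the witness term cannot be cancelled by the others; applying the DeMillo--Lipton--Schwartz--Zippel lemma coordinatewise (the group algebra has zero divisors, but each of its $2^k$ coordinates is an honest degree-$\le k$ polynomial over the field $\mathbb{F}$) keeps the coordinate carrying $\prod_{i \in T^\ast} z_i$ nonzero with probability $\ge 1 - k/|\mathbb{F}|$. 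Combining the two and taking $|\mathbb{F}| = 2^\ell \ge 4k$ yields success probability $> 0.28\,(1 - k/|\mathbb{F}|) > 1/5$.

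\emph{Main obstacle.} The delicate point is completeness, namely ruling out that the genuinely present multilinear contributions annihilate one another inside $\mathbb{F}[\mathbb{Z}_2^k]$. The obstacle is twofold: because the group algebra is not an integral domain, Schwartz--Zippel must be invoked coordinatewise rather than globally; and one must secure a witness set $T^\ast$ whose reduced coefficient $\bar c_{T^\ast}$ does not vanish over $\mathbb{F}$, i.e.\ a multilinear monomial surviving the characteristic-two reduction. The first is handled by the coordinatewise argument above; the second is why the statement is read with coefficients over a field of characteristic two, and in our application (\cref{lem:the-polynomial}) the construction supplies such a surviving witness, so the two independent randomizations over the $v_i$ and the $z_i$ suffice, with the error probability driven below any constant by independent repetition.
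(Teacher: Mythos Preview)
The paper does not prove this theorem at all; it is quoted verbatim from Williams~\cite{W09} and used as a black box, so there is no paper-proof to compare against. Your sketch captures the core mechanism (evaluation over $\mathbb{F}[\mathbb{Z}_2^k]$ with $\mathrm{char}\,\mathbb{F}=2$, so that $(\mathbf{1}+e_{v})^2=0$ kills non-multilinear terms, and fast Walsh--Hadamard for the product gates), and the soundness and running-time analyses are fine.

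There is, however, a genuine gap in completeness. Your only auxiliary randomization is one scalar $z_i$ per \emph{variable}, so two distinct proof trees of the circuit that produce the \emph{same} multilinear monomial $\prod_{i\in T}x_i$ contribute to the same $z$-monomial $\prod_{i\in T}z_i$ with coefficient equal to the integer count of such proof trees. If that count is even, it vanishes in characteristic two and your Schwartz--Zippel step has nothing to detect. Concretely, the circuit for $x_1x_2+x_1x_2$ has two multilinear terms in its sum-product expansion, yet your algorithm always outputs \no. Your escape clause (``the statement is read over characteristic two'' and ``the application supplies a surviving witness'') does not rescue this: the theorem is stated for arbitrary circuits, and even in the application of \cref{lem:the-polynomial} two different $(s,z)$-paths in the expansion that visit the same parts of the partition (same original vertices, different time stamps) yield the same monomial, so even coefficients genuinely occur.

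Williams' actual argument closes this gap by attaching an \emph{additional} fresh random scalar to every input occurrence (equivalently, every leaf or every edge) of the circuit before evaluating. Then distinct proof trees carry distinct products of these extra scalars, so the coefficient of each multilinear $x$-monomial in the expansion is itself a nonzero multilinear polynomial in the new variables, and Schwartz--Zippel applies coordinatewise as you describe. Adding this single ingredient to your sketch makes the argument complete.
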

\cref{thm:fpt-length} (i) follows from
\cref{lem:exp-runtime,lem:exp-correct,lem:the-polynomial,thm:rand-blackbox}.
This can be derandomized by Theorem 5.2 of Fomin et al.~\cite{FLPS17} resulting in $O({3.841}^{k}\cdot(|\TG|\wait)^2|V|\log|V|)$~time algorithm.
We now show how to improve the polynomial part of a deterministic algorithm.

\ownparagraph{Obtaining \cref{thm:fpt-length} (ii).}
To show \cref{thm:fpt-length} (ii), we first note that in the $(s,z)$-expansion of an $(s,z)$-path $P$ in the directed graph describes a \tpath{}
exactly when $V(P)$ is an \emph{independent set of some specific matroid}.
We then show an algorithm to find such a path $P$ (if there is one).
To this end, we introduce a problem, \IPP, and some standard terminology from matroid theory~\cite{Oxl92}.
A pair $(U,\I)$, where $U$~is the \emph{ground set}
and $\I\subseteq 2^U$ is a family of \emph{independent sets},
is a \emph{matroid} if the following holds:
$\emptyset \in \I$;
if $A' \subseteq A$ and $A \in \I$, then $A' \in \I$; and
if $A,B \in \I$ and $|A| < |B|$, then there is an $x \in B \setminus A$ such that $A \cup \{x\} \in \I$.
An inclusion-wise maximal independent set~$A\in \I$
of a matroid~$M=(U,I)$ is a \emph{basis}.
The cardinality of the bases of~$M$
is called the \emph{rank} of~$M$.
The \emph{uniform matroid of rank~$r$} on $U$
is the matroid~$(U,\I)$
with $\I=\{S\subseteq U\mid |S|\leq r\}$.
A matroid~$(U,\I)$ is \emph{linear} or \emph{representable over a field $\mathbb F$}
if there is a matrix~$A$ with entries in $\mathbb F$ and the columns labeled by the elements of~$U$
such that $S \in \I$ if and only if the columns of~$A$ 
with labels in~$S$ are linearly independent over~$\mathbb F$. 
Such a matrix $A$ is called a \textit{representation} of $(U,\I)$.
Now we are ready to state the \IPP\ problem. 

 \probDef{\IPP}
 { A digraph $D=(V,E)$, two distinct vertices~$s,z\in V$, a representation $A_M$ of a matroid $M = (V,\I)$ of rank $r$ over a finite field $\mathbb F$.} %
 {Is there an $(s,z)$-dipath $P$ of length at most $k$ in $D$ such that $V(P) \in \I$?}

For the remainder of this section,
whenever we speak about \emph{independent sets}, these are independent sets of a 
matroid and not a set of vertices which induce an edgeless graph.

Agrawal~et~al.~\cite{AJKS19} studied, independently from us, a similar problem 
where the edges of the path shall be an independent set of a matroid.
To show \cref{thm:fpt-length} (ii), 
we need a single-exponential algorithm which has only a linear dependency on the input size.
To this end, we show the following, based on representative families.
\begin{theorem}
	\label{thm:fpt-ipp}
	An instance $(D,s,z,A_M)$ of \IPP{} 
	can be solved in time of $O(2^{\omega r}m)$ 
	operations over the field $\mathbb F$, 
	where   
	$\mathbb F$ is the field of $A_M$,
	$r$ is rank of $M$, 
	$m$ is the number of edges in $D$, and
	$2<\omega<2.373$ is an upper-bound for the matrix multiplication exponent\footnote{Note that we require $2 < \omega$ even though this might be not true. We do this to upper-bound the polynomial part in $r$.
			The bound $\omega < 2.373$ is known \cite{alman2021refined}.
	}.
\end{theorem}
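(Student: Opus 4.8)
The plan is to solve \IPP{} with a dynamic program over partial dipaths in which, after every layer, the list of candidate vertex sets is shrunk to a \emph{$q$-representative family} in the matroid $M$, using the fast representative-set computation of Fomin et~al.~\cite{FLPS16}. First I would reduce to the case $k\le r-1$: an $(s,z)$-dipath of length more than $r-1$ has more than $r$ vertices and hence a dependent vertex set, so it can never be a solution; thus we set $k:=\min\{k,r-1\}$, answer \no{} immediately if $r=0$, and restrict $D$ to the vertices reachable from $s$, so that $|V|\le m+1$. For $v\in V$ and $j\in\{0,\dots,k\}$ let $\mathcal S_v^j$ be the family of all sets $V(P)$, where $P$ ranges over $(s,v)$-dipaths of length exactly $j$ with $V(P)\in\I$. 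Then $\mathcal S_s^0=\{\{s\}\}$ if $\{s\}\in\I$ and $\emptyset$ otherwise, $\mathcal S_v^0=\emptyset$ for $v\neq s$, and for $j\ge 1$ we have $\mathcal S_v^j=\bigcup_{(u,v)\in E}\{\,X\cup\{v\}\mid X\in\mathcal S_u^{j-1},\ v\notin X,\ X\cup\{v\}\in\I\,\}$, since every such $P$ arises from an $(s,u)$-dipath of length $j-1$ by appending $v$. Finally, there is an independent $(s,z)$-dipath of length at most $k$ in $D$ if and only if $\mathcal S_z^j\neq\emptyset$ for some $j\le k$.

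The point of the algorithm is that we never store $\mathcal S_v^j$ but only a $q_j$-representative subfamily $\widehat{\mathcal S}_v^j\subseteq\mathcal S_v^j$ with $q_j:=r-(j+1)\ge 0$; by the size bound for representative families this has $|\widehat{\mathcal S}_v^j|\le\binom{r}{j+1}\le 2^{r}$. We put $\widehat{\mathcal S}_v^0:=\mathcal S_v^0$ and, for $j\ge 1$, first form $\mathcal L_v^j:=\bigcup_{(u,v)\in E}\{\,X\cup\{v\}\mid X\in\widehat{\mathcal S}_u^{j-1},\ v\notin X,\ X\cup\{v\}\in\I\,\}$ and then let $\widehat{\mathcal S}_v^j$ be a $q_j$-representative family of $\mathcal L_v^j$, computed from the representation $A_M$ by the algorithm of~\cite{FLPS16}. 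The choice $q_j=r-(j+1)$ is exactly right: a solution dipath has at most $r$ vertices, of which $j+1$ lie in the prefix ending at $v$, leaving at most $r-(j+1)$ vertices for the suffix, which is the ``$Y$'' in the definition of representativeness. Correctness is an induction on $j$ using three standard closure properties: (a)~if $\widehat{\mathcal A}$ is $q$-representative of $\mathcal A$, then $\{A\cup\{x\}\mid A\in\widehat{\mathcal A},\,x\notin A,\,A\cup\{x\}\in\I\}$ is $(q-1)$-representative of the analogous family built from $\mathcal A$ (for a fixed $x$); (b)~a union of $q$-representative families is $q$-representative of the union; and (c)~$q$-representativeness is transitive. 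Since $q_{j-1}=q_j+1$, applying (a) to each in-neighbour and then (b) shows that $\mathcal L_v^j$ is a $q_j$-representative family of $\mathcal S_v^j$, and then (c) together with the reduction step shows that $\widehat{\mathcal S}_v^j$ is a $q_j$-representative family of $\mathcal S_v^j$. Because $q_j\ge 0$ and a $q$-representative family of a nonempty family is nonempty (use the empty test set), $\widehat{\mathcal S}_z^j\neq\emptyset$ if and only if $\mathcal S_z^j\neq\emptyset$, which decides the instance.

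For the running time, there are at most $r$ values of $j$. For a fixed $j$, building all the $\mathcal L_v^j$ scans each arc once and performs, for each of at most $\sum_{(u,v)\in E}|\widehat{\mathcal S}_u^{j-1}|\le 2^{r}m$ resulting sets, one independence test (a rank computation over $\mathbb F$ using $r^{O(1)}$ operations). Computing the $q_j$-representative family of $\mathcal L_v^j$ costs, by~\cite{FLPS16}, $|\mathcal L_v^j|\cdot\binom{r}{j+1}^{\omega-1}\cdot r^{O(1)}$ operations over $\mathbb F$ plus lower-order terms, which is at most $|\mathcal L_v^j|\cdot 2^{(\omega-1)r}\cdot r^{O(1)}$; summing over all $v$ gives $2^{r}m\cdot 2^{(\omega-1)r}\cdot r^{O(1)}=2^{\omega r}m\cdot r^{O(1)}$ for this value of $j$, hence $2^{\omega r}m\cdot r^{O(1)}$ in total. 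As explained in the footnote of the statement, the remaining polynomial factors in $r$ (from the rank computations, the routine of~\cite{FLPS16}, and the factor $r$ counting the layers) are absorbed into $2^{\omega r}$ by using $\omega>2$, giving $O(2^{\omega r}m)$ operations over $\mathbb F$.

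I expect the main obstacle to be the exact bookkeeping of the representativeness budget along the layered recurrence—checking that the unit lost when appending a vertex in property~(a) is precisely the unit carried over by the previous layer having budget $q_{j-1}=q_j+1$, and that $q_j$ stays nonnegative through $j=k\le r-1$—together with invoking the fast representative-set computation with parameters $p=j+1$ and $q=q_j$ so that the dependence on the input is genuinely $O(m)$ (after restricting $D$ to the vertices reachable from $s$) rather than $O(nm)$. The matroid facts (a)–(c) and the size and running-time bounds for representative families are standard and would be cited from~\cite{FLPS16}.
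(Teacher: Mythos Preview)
Your proposal is correct and follows essentially the same approach as the paper: a layered dynamic program that, for each vertex $v$ and each length $j$, maintains an $(r-j-1)$-representative family (in the matroid $M$) of the vertex sets of independent $(s,v)$-dipaths of length $j$, shrinking after every layer via the Fomin--Lokshtanov--Panolan--Saurabh routine; the paper's correctness proof is the same induction you sketch via properties (a)--(c), and its running-time analysis likewise sums the $\binom{r}{p}tp^{\omega}+t\binom{r}{q}^{\omega-1}$ bound over all arcs and layers and then uses $\omega>2$ to absorb the residual $r^{O(1)}$ factors into $O(2^{\omega r}m)$. The only cosmetic differences are that the paper hard-codes the loop bound $i\le r-1$ (your reduction $k\le r-1$) and handles the $|V|$-versus-$m$ issue via tries and sparse sets rather than by restricting to vertices reachable from~$s$.
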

\label{sec:fptforipp}

In this section, we provide a fixed-parameter algorithm for \IPP{} parameterized by rank $r$ of the matroid.
Since the rank $r$ is at most $|V(D)|$, this algorithm is asymptotically optimal, see \cref{cor:ipp-hard}.
To show \cref{thm:fpt-ipp}, we provide an algorithm (\cref{alg:fpt-ipp}),
show its correctness (\cref{lem:fpt-ipp-correct}), 
and prove the running time upper-bound (\cref{lem:fpt-ipp-time}).
The idea of our algorithm is based on the algorithm 
of Fomin~et~al.~\cite{FLPS16} for \textsc{$k$-Path}
and independently from us Agrawal~et~al.~\cite{AJKS19} 
showed an algorithm which runs in $2^{O(r)} n^{O(1)}$ time for \IPP{}
and Lokshtanov~et~al.~\cite{LMPSZ18} provided a dynamic program, 
running in $5.18^{r} n^{O(1)}$ time,
for the special case of \IPP{} when the matroid given in the input is 
a transversal matroid.
However, in contrast to Agrawal~et~al.~\cite{AJKS19} and Lokshtanov~et~al.~\cite{LMPSZ18},
we pay attention to the detail that the algorithm behind \cref{thm:fpt-ipp} 
runs in linear time, if we can perform one field operation in constant time.

The main tool of our algorithm are representative families of independent sets.
\begin{definition}[Representative family]
	\label{def:qrep}
  Given a matroid $(U,\I)$,
  and a~family~$\mathcal S \subseteq 2^U$,
  we say that a subfamily~$\widehat{\mathcal S} \subseteq \mathcal S$
  is a \emph{$q$-representative
    for $\mathcal S$}
  if,
  for each set~$Y \subseteq U$ of size at most~$q$,
  it holds that:
  \begin{itemize}
  \item if there is a set~$X \in \mathcal S$  with
    $X \uplus Y \in \I$, 
  \item then there is a
	  set~$\widehat X \in \widehat{\mathcal S}$ with
  $\widehat X \uplus Y \in \I$.
  \end{itemize}
\end{definition}

A $p$-family is a family $\mathcal F$ such that each set $S \in \mathcal F$ is of size exactly $p$.
For linear matroids, we can compute small representative families efficiently. Formally, the following is known.
\begin{theorem}[{Fomin et al.~\cite[Theorem~1.1]{FLPS16}}]
	\label{thm:eff-rep-fam}
  Let $M = (U,\I)$~be a linear matroid
  of rank~$r=p + q$ given together
  with its representation $A_M$ over field $\mathbb F$.
  Let $\mathcal S$ be a $p$-family 
  of independents of $M$.
  Then a $q$-representative family $\widehat{\mathcal S} \subseteq \mathcal S$
  of size at most $r \choose p$ can be found in
  $O\left({r \choose p}tp^\omega + t{r \choose q}^{\omega-1}\right)$ 
  operations over $\mathbb F$, where $\omega < 2.373$ is the matrix multiplication exponent. 
\end{theorem}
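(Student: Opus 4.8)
The plan is to prove this via the exterior-algebra (wedge-product) technique, with $t:=|\mathcal S|$. Since $M$ has rank $r$, I would first bring the representation $A_M$ into an $r\times|U|$ matrix of full row rank, so that each element $u\in U$ becomes a column vector $v_u\in\mathbb{F}^r$ and a set is independent in $M$ exactly when its columns are linearly independent over $\mathbb{F}$. To each independent $p$-set $S=\{u_1,\dots,u_p\}\in\mathcal S$ I associate the wedge product $\lambda_S := v_{u_1}\wedge\cdots\wedge v_{u_p}$, an element of the exterior power $\bigwedge^{p}\mathbb{F}^r$. This space has dimension $\binom{r}{p}$, and in its standard basis $\{e_{i_1}\wedge\cdots\wedge e_{i_p}\}$ the coordinates of $\lambda_S$ are precisely the $p\times p$ minors of the $r\times p$ submatrix of $A_M$ indexed by $S$.

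The conceptual core is the bilinear pairing $\bigwedge^{p}\mathbb{F}^r\times\bigwedge^{q}\mathbb{F}^r\to\bigwedge^{r}\mathbb{F}^r\cong\mathbb{F}$, $(\alpha,\beta)\mapsto\alpha\wedge\beta$. The key algebraic fact I would establish is that for a $p$-set $X$ and a $q$-set $Y$, one has $X\uplus Y\in\I$ if and only if $\lambda_X\wedge\lambda_Y\neq 0$ (both sides assert that the $r=p+q$ columns form a basis). From this I would derive the decisive claim: if $\widehat{\mathcal S}\subseteq\mathcal S$ is chosen so that $\{\lambda_{\widehat S}:\widehat S\in\widehat{\mathcal S}\}$ spans the same subspace $W\subseteq\bigwedge^{p}\mathbb{F}^r$ as $\{\lambda_S:S\in\mathcal S\}$, then $\widehat{\mathcal S}$ is $q$-representative. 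Indeed, given $X\in\mathcal S$ and $Y$ with $|Y|\le q$ and $X\uplus Y\in\I$, I would first extend $Y$ by matroid augmentation to a $q$-set $Y'\supseteq Y$ with $X\uplus Y'\in\I$ (possible since $X\uplus Y$ is independent of size $<r=p+q$, and any added element lies outside $X$); then the linear functional $\alpha\mapsto\alpha\wedge\lambda_{Y'}$ is nonzero at $\lambda_X\in W$, hence not identically zero on $W$, so it is nonzero on some $\lambda_{\widehat S}$, which yields $\widehat S\uplus Y'\in\I$ and therefore $\widehat S\uplus Y\in\I$.

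The algorithm then simply takes $\widehat{\mathcal S}$ to be a maximal subfamily of $\mathcal S$ whose wedge vectors are linearly independent; its cardinality equals $\dim W\le\binom{r}{p}$, giving the claimed size bound. For the running time I would account for two phases. Computing all vectors $\lambda_S$ means evaluating, for each of the $t$ sets, its $\binom{r}{p}$ coordinates, each a $p\times p$ determinant over $\mathbb{F}$ computable in $O(p^{\omega})$ field operations; this costs $O\!\left(t\binom{r}{p}p^{\omega}\right)$ in total. Extracting a maximal linearly independent subset amounts to computing a row basis of the $t\times\binom{r}{p}$ matrix whose rows are the $\lambda_S$, which via fast (rectangular) Gaussian elimination takes $O\!\left(t\binom{r}{p}^{\omega-1}\right)=O\!\left(t\binom{r}{q}^{\omega-1}\right)$ operations, using $\binom{r}{p}=\binom{r}{q}$. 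Summing the two phases yields the stated bound.

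I expect the main obstacle to be twofold. First, on the correctness side, setting up the exterior-algebra pairing and rigorously proving the equivalence $\lambda_X\wedge\lambda_Y\neq0\iff X\uplus Y\in\I$ together with the augmentation step that reduces the case $|Y|<q$ to $|Y|=q$; this is exactly where representativeness comes from, and it is the step that is easy to state but subtle to justify. Second, on the algorithmic side, realizing the row-basis extraction within the tight $O\!\left(t\binom{r}{q}^{\omega-1}\right)$ bound rather than a naive cubic cost, which requires invoking fast matrix multiplication for the rank and row basis of a rectangular matrix and handling the regimes $t\le\binom{r}{p}$ and $t>\binom{r}{p}$ uniformly.
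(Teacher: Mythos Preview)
The paper does not prove this theorem at all; it is quoted verbatim as a black-box result from Fomin et al.~\cite[Theorem~1.1]{FLPS16} and used as a subroutine in \cref{alg:fpt-ipp}. So there is no ``paper's own proof'' to compare against.

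That said, your proposal is a faithful reconstruction of the actual argument in the cited source: map each independent $p$-set to a vector in $\bigwedge^{p}\mathbb{F}^r$ via wedge products, use the pairing $\bigwedge^{p}\mathbb{F}^r\times\bigwedge^{q}\mathbb{F}^r\to\mathbb{F}$ together with the equivalence $\lambda_X\wedge\lambda_Y\neq 0\iff X\uplus Y\in\mathcal I$ to reduce $q$-representation to a linear-algebraic spanning condition, and then extract a maximal linearly independent subfamily via fast Gaussian elimination. Your treatment of the $|Y|<q$ case by matroid augmentation, the size bound $\binom{r}{p}$ from the dimension of the exterior power, and the two-phase running-time accounting are all correct and match the original. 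There is no gap here; you have simply supplied the proof that the present paper chose to import by citation.
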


We are now ready to give the pseudo-code of the algorithm behind \cref{thm:fpt-ipp} (see \cref{alg:fpt-ipp}).

In \cref{alg:fpt-ipp}, $\mathcal A \bullet_{M} \mathcal B$ is defined 
as $\{ A \cup B \mid A\in\mathcal A, B \in\mathcal B, A \cap B = \emptyset, A \cup B \in \mathcal I \}$ 
for families $\mathcal A,\mathcal B \subseteq \mathcal I$ and matroid $M = (U,\mathcal I)$.

\begin{algorithm2e}[t]
	\KwIn{An instance $(D=(V,E),s,z,A_M)$ of \IPP{}, where $A_M$ is a representation of matroid $M=(V(D),\I)$ over field $\mathbb F$ and of rank $r$.}
	\KwOut{Determines whether $(D=(V,E),s,z,A_M)$ is a \yes- or \no-instance.}

	\medskip
	$T[v,i] \gets \emptyset$, for all $v \in V$ and $i \in [r-1]$.\label{line:init-start}\;
	$T[s,0] \gets \{s\}$.\label{line:init-end}\;
	\For(\label{line:for-r}){$i\gets1$ \KwTo $r-1$}{ 
		\ForEach{$w \in V$}{ 
			$\mathcal N_{w,i} \gets \emptyset$.\label{line:N-start}\;
			\ForEach{$(v,w) \in E$ with $T[v,i-1] \not= \emptyset$ and $\{w\} \in \mathcal I$}{ 
				$\mathcal N_{w,i} \gets \mathcal N_{w,i} \cup \left(T[v,i-1] \bullet_M \left\{\left\{w\right\}\right\}\right)$.\label{line:N-end}\;
			}
			$T[w,i]\gets (r-i-1)$-representative of $\mathcal N_{w,i}$.\label{line:compress} \hfill(using \cref{thm:eff-rep-fam})\;
		}
		\lIf(\label{line:yes}){$T[z,i] \not=\emptyset$}{
			\KwRet{$(D=(V,E),s,z,A_M)$ is a \yes-instance.}
		}
	}
	\KwRet{$(D=(V,E),s,z,A_M)$ is a \no-instance.}
	\caption{\IPP{} parameterized by the rank $r$.}
\label[algorithm]{alg:fpt-ipp}  
\end{algorithm2e}
\begin{lemma}\label{lem:fpt-ipp-correct}
	\cref{alg:fpt-ipp} is correct.
\end{lemma}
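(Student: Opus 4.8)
The plan is to prove, by induction on $i$, the invariant that for every $v\in V$ and every $i\in\{0,1,\dots,r-1\}$ the table entry $T[v,i]$ is a subfamily of
\[
\mathcal P_{v,i}\coloneqq\{\,V(P)\mid P\text{ is an }(s,v)\text{-dipath in }D\text{ of length exactly }i\text{ with }V(P)\in\I\,\}
\]
and, moreover, is an $(r-i-1)$-representative family for $\mathcal P_{v,i}$; note $\mathcal P_{v,i}$ is an $(i+1)$-family. From this invariant correctness follows quickly. Every $(s,z)$-dipath $P$ with $V(P)\in\I$ satisfies $|V(P)|\le r$, hence has length at most $r-1$, so restricting the search to $i\in[r-1]$ as in line~\ref{line:for-r} loses nothing (for the \IPP{} bound $k$ this means we may assume $k\ge r-1$; otherwise one first replaces $M$ by its truncation to rank $k+1$). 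Since $r-i-1\ge0$ for $i\in[r-1]$, and since—by the invariant—$T[z,i]$ is a subfamily of $\mathcal P_{z,i}$ that is an $(r-i-1)$-representative of it, we have $T[z,i]\neq\emptyset$ if and only if $\mathcal P_{z,i}\neq\emptyset$ (the nontrivial direction by \cref{def:qrep} with $Y=\emptyset$, using that members of $\mathcal P_{z,i}$ are independent). Hence the test in line~\ref{line:yes} succeeds for some $i$ exactly when $D$ has an $(s,z)$-dipath whose vertex set is independent in $M$. (We assume $\{s\},\{z\}\in\I$; otherwise every $(s,z)$-dipath has a dependent vertex set, so the instance is a trivial no-instance, which is checked beforehand.)

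For the base case $i=0$, lines~\ref{line:init-start}--\ref{line:init-end} set $T[s,0]=\{\{s\}\}=\mathcal P_{s,0}$ and $T[v,0]=\emptyset=\mathcal P_{v,0}$ for $v\neq s$, and every family represents itself. For the inductive step I would fix $i\ge1$, assume the invariant for $i-1$, and fix $w\in V$, and then show that the auxiliary family $\mathcal N_{w,i}$ assembled in lines~\ref{line:N-start}--\ref{line:N-end} is both contained in $\mathcal P_{w,i}$ and an $(r-i-1)$-representative of it. Granting this, $\mathcal N_{w,i}$ is an $(i+1)$-family, so \cref{thm:eff-rep-fam} with $p=i+1$ and $q=r-i-1$ (note $p+q=r$, the rank of $M$) produces in line~\ref{line:compress} an $(r-i-1)$-representative $T[w,i]\subseteq\mathcal N_{w,i}$; and an $(r-i-1)$-representative of an $(r-i-1)$-representative of $\mathcal P_{w,i}$ is again an $(r-i-1)$-representative of $\mathcal P_{w,i}$ (immediate from \cref{def:qrep}), so the invariant carries over to $i$.

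The remaining step is the analysis of $\mathcal N_{w,i}$. Containment $\mathcal N_{w,i}\subseteq\mathcal P_{w,i}$ is direct: a member of $\mathcal N_{w,i}$ has the form $A\cup\{w\}$ for some arc $(v,w)\in E$ and some $A\in T[v,i-1]\subseteq\mathcal P_{v,i-1}$ with $w\notin A$ and $A\cup\{w\}\in\I$; writing $A=V(P')$ for an $(s,v)$-dipath $P'$ of length $i-1$ and appending $w$ gives an $(s,w)$-dipath of length $i$ with vertex set $A\cup\{w\}\in\I$. For the representative property, given $Y\subseteq V$ with $|Y|\le r-i-1$ and $X\in\mathcal P_{w,i}$ with $X\uplus Y\in\I$, let $P$ realize $X$, let $v$ be the predecessor of $w$ on $P$, and set $X'\coloneqq X\setminus\{w\}\in\mathcal P_{v,i-1}$ (using that $X'\subseteq X\in\I$). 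Then $\{w\}\in\I$ (as $\{w\}\subseteq X\in\I$), and the invariant for $(v,i-1)$ with $Y=\emptyset$ gives $T[v,i-1]\neq\emptyset$, so the arc $(v,w)$ passes the condition in line~\ref{line:N-end} and $T[v,i-1]\bullet_M\{\{w\}\}\subseteq\mathcal N_{w,i}$. Since $X'\uplus(Y\cup\{w\})=X\uplus Y\in\I$ and $|Y\cup\{w\}|\le r-i$, the invariant for $(v,i-1)$ (which makes $T[v,i-1]$ an $(r-i)$-representative) yields $\widehat X'\in T[v,i-1]$ with $\widehat X'\uplus(Y\cup\{w\})\in\I$, whence $\widehat X\coloneqq\widehat X'\cup\{w\}\in T[v,i-1]\bullet_M\{\{w\}\}\subseteq\mathcal N_{w,i}$ with $\widehat X\uplus Y\in\I$, as needed.

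I expect the crux to be keeping the two parameters of the representative families synchronized along the recursion: a length-$i$ prefix is stored as a $p$-family with $p=i+1$ and must carry a $q=r-i-1$ guarantee, precisely so that (a) appending one further vertex enlarges the pending test set $Y$ to $Y\cup\{w\}$ of size at most $r-i$, which is still covered by the $(r-i)$-representative $T[v,i-1]$, and (b) the invocation of \cref{thm:eff-rep-fam} in line~\ref{line:compress} is legitimate, needing $p+q=r$. Everything else—the correspondence between dipaths ending at $w$ and members of $\mathcal P_{w,i}$, the self-representation and transitivity properties of representative families, and the size bound $|V(P)|\le r$ for independent vertex sets—is routine.
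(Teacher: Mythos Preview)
Your proposal is correct and follows essentially the same approach as the paper: define the target families $\mathcal P_{w,i}$ of vertex sets of length-$i$ independent $(s,w)$-dipaths, prove by induction on $i$ that $T[w,i]$ is a representative subfamily of $\mathcal P_{w,i}$, and derive correctness from the test $T[z,i]\neq\emptyset$ with $Y=\emptyset$. Your bookkeeping is in fact slightly cleaner than the paper's (you consistently use the $(r-i-1)$-representative index that line~\ref{line:compress} actually computes, and you make the transitivity of representatives and the edge cases $\{s\},\{z\}\in\I$ explicit), but the argument is the same.
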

\begin{proof}
	Let $\mathcal P_{w,i} := \{ X \in \I \mid $ there is an \dipath[s,w] $P$ of length $i$ such that $V(P)=X \}$, for all $w \in V$ and $i \in [r-1]$.
	Observe that $\mathcal P_{w,i}$ is an $(i+1)$-family of independent sets.
	We show by induction that after iteration $i$ of the for-loop in Line \eqref{line:for-r} 
	the entry $T[w,i]$ is an $(r-i)$-representative of $\mathcal P_{w,i}$, for all $w \in V$ and $i \in [r-1]$.
	Then the correctness follows, since we check after each of these iterations whether $T[w,i]$ is non-empty (Line \eqref{line:yes}).
	Observe that $\mathcal P_{s,0} = \{ s \}$ and $\mathcal P_{v,0} = \emptyset$ for all $v \in V \setminus \{s\}$.
	Hence, the entries of $T$ computed in Lines \eqref{line:init-start} and \eqref{line:init-end} fulfill our induction hypothesis.

	Now let $i \in  [r-1]$ be the current iteration of the for-loop in Line \eqref{line:for-r} and assume that for all $j < i$ 
	we have that $T[w,j]$ is an $(r-j)$-representative of $\mathcal P_{w,j}$, for all $w \in V$.
	Fix a vertex $w \in V$.
	We first show that if there is an $X \in T[i,w]$, 
	then there is an \dipath[s,w] $P_w$ of length $i$ such that $X=V(P_w) \in \I$.
	Observe that in Lines \eqref{line:N-start}--\eqref{line:N-end} we look at each possible predecessor $v\in V$ of $w$ 
	in an \dipath[s,w]{} of length $i$, take each set $X' \in T[v,i-1]$ and 
	check whether $X' \cup \{w\}$ is an independent set of size $i+1$.
	If this is the case, we add it to $\mathcal N_{w,i}$.
	After Line \eqref{line:compress}, we have that $T[w,i] \subseteq \mathcal N_{w,i}$.
	Since $X' \in T[v,i-1]$, we know that there is an \dipath[s,v]{} $P_v$ of length $i-1$ with $X' = V(P)$.
	Thus, if there is an $X \in T[i,w]$, then there is an \dipath[s,w] $P_w$ of length $i$ such that $X=V(P_w) \in \I$

	Now let $X \in \mathcal P_{w,i}$
	and $Y \subseteq V(D)$ be a set of vertices of size at most $r - i -1$ such that $X \cap Y = \emptyset$ and $X \uplus Y \in \I$.
	Hence, there is an \dipath[s,w]{} $P$ of length $i$ such that $V(P) = X$.
	Let $v$ be the predecessor of $w$ in $P$.
	Let $P_v$ be the \dipath[s,v]{} of length $i-1$ induced by $P$ without $w$.
	Hence, $V(P_v) \in \mathcal P_{v,i-1}$.
	Moreover, $V(P_v) \cap (Y \cup \{ v \}) = \emptyset$ and $V(P_v) \uplus (Y \cup \{ w \}) \in \I$.
	Since $T[v,i-1]$ is an $(r-i+1)$-representative family of $\mathcal P_{v,i-1}$,
	we know that there is an $\widehat{X} \in T[v,i-1]$ 
	such that $\widehat{X} \cap (Y \cup \{ w \}) = \emptyset$
	and $\widehat{X} \cup (Y \cup \{ w \}) \in \I$.
	In Lines \eqref{line:N-start}--\eqref{line:N-end} we add $\widehat{X} \cup \{w \}$ to $\mathcal N_{w,i}$.
	Let $X' := \widehat{X} \cup \{w \}$ and 
	note that $X' \cap Y = \emptyset$ 
	and  $X' \uplus Y \in \I$.
	Since $T[w,i]$ is an $(r-i)$-representative family of $\mathcal N_{w,i}$, we know that
	there is an $\widehat{X'} \in T[w,i]$
	such that $\widehat{X'} \cap Y = \emptyset$ 
	and $X' \uplus Y \in \I$.
	Thus, $T[w,i]$ is an $(r-i)$-representative of $\mathcal P_{w,i}$.
\end{proof}
Next, we show that \cref{alg:fpt-ipp} is actually a fixed-parameter algorithm parameterized by the length of a shortest $\Delta$-restless temporal $(s,z)$-path.
\begin{lemma} \label{lem:fpt-ipp-time}
	\cref{alg:fpt-ipp} runs in time of $O(2^{\omega r}\cdot m)$ operations over $\mathbb F$, where
	$\mathbb F$ is the field of $A_M$,
	$r$ is the rank of the matroid, 
	$m$ is the number of edges, and
	$2<\omega<2.373$ is an upper-bound for the matrix multiplication exponent.
\end{lemma}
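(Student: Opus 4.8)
The plan is to account for the number of field operations performed by \cref{alg:fpt-ipp} by charging work to each iteration $(w,i)$ of the two nested loops, and then to collapse all in-degree factors into a single factor $m$ at the end. The starting point is the size bound on the table entries coming from \cref{thm:eff-rep-fam}: a trivial induction (mirroring the one in \cref{lem:fpt-ipp-correct}) shows that every non-empty entry $T[\cdot,j]$ is a $(j+1)$-family, so the family $\mathcal N_{w,i}$ assembled in Lines~\eqref{line:N-start}--\eqref{line:N-end} is an $(i+1)$-family; since $T[w,i]$ is obtained from it by one call of \cref{thm:eff-rep-fam} with $p=i+1$ and $q=r-i-1$ (hence $p+q=r$), we get $|T[w,i]|\le\binom{r}{i+1}$ and, after shifting the index, $|T[v,i-1]|\le\binom{r}{i}$ for every $v\in V$. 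As $\mathcal N_{w,i}$ is the union over the in-neighbours $v$ of $w$ of the sets $T[v,i-1]\bullet_M\{\{w\}\}$, each of size at most $\binom{r}{i}$, this yields $|\mathcal N_{w,i}|\le\deg^-(w)\cdot\binom{r}{i}$, where $\deg^-(w)$ denotes the in-degree of $w$ in $D$.

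Next I would bound the cost of a single iteration $(w,i)$. Assembling $\mathcal N_{w,i}$ needs, for each of the at most $\deg^-(w)\binom{r}{i}$ candidate sets $X'\cup\{w\}$, one matroid-independence test, i.e.\ a rank computation on an $r\times(i+1)$ submatrix of $A_M$ over $\mathbb F$, costing $r^{O(1)}$ field operations. Computing $T[w,i]$ in Line~\eqref{line:compress} via \cref{thm:eff-rep-fam} with $t:=|\mathcal N_{w,i}|$, $p=i+1$, $q=r-i-1$ costs $O\!\left(\binom{r}{i+1}\,t\,(i+1)^{\omega}+t\,\binom{r}{i+1}^{\omega-1}\right)$ field operations (using $\binom{r}{r-i-1}=\binom{r}{i+1}$). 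Substituting $t\le\deg^-(w)\binom{r}{i}$ and $i+1\le r$, the iteration $(w,i)$ costs $O\!\left(\deg^-(w)\cdot\left[\binom{r}{i+1}\binom{r}{i}r^{\omega}+\binom{r}{i}\binom{r}{i+1}^{\omega-1}+\binom{r}{i}r^{O(1)}\right]\right)$ field operations, a bound whose bracket no longer depends on $w$.

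Finally I would sum. Summing over $w\in V$ replaces each $\deg^-(w)$ by $\sum_{w}\deg^-(w)=m$; assuming w.l.o.g.\ that every vertex lies on some $(s,z)$-dipath (others are deleted in $O(m)$ time), $n\le 2m$, so the $O(nr)$ initialisation in Lines~\eqref{line:init-start}--\eqref{line:init-end} is absorbed. Summing the $w$-free bracket over $i\in[r-1]$ uses $\sum_i\binom{r}{i}\le 2^r$, $\sum_i\binom{r}{i+1}\binom{r}{i}\le\big(\sum_i\binom{r}{i}\big)^2=4^r$ (diagonal terms of a product of sums), and $\sum_i\binom{r}{i}\binom{r}{i+1}^{\omega-1}\le\binom{r}{\lfloor r/2\rfloor}^{\omega-1}\sum_i\binom{r}{i}\le 2^{(\omega-1)r}\cdot 2^r=2^{\omega r}$, which gives a total of $O\!\left(m\,\big(r^{\omega}4^r+2^{\omega r}+r^{O(1)}2^r\big)\right)$ field operations. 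Since we have deliberately fixed $\omega>2$, we have $r^{\omega}4^r=2^{2r+O(\log r)}=O(2^{\omega r})$ and likewise $r^{O(1)}2^r=O(2^{\omega r})$, so the bound collapses to the claimed $O(2^{\omega r}\cdot m)$; this is exactly the role of the assumption $2<\omega$ flagged in the footnote of \cref{thm:fpt-ipp}.

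The step I expect to be the main obstacle is keeping the dependence on $m$ \emph{linear}. Each family $\mathcal N_{w,i}$ can be as large as $\deg^-(w)\binom{r}{i}$, so the naive route — first summing $|\mathcal N_{w,i}|$ over $w$ to get $m\binom{r}{i}$ and only then feeding this into \cref{thm:eff-rep-fam} — would multiply the binomial factors by $m$ a second time and yield an $m^2$-type bound. The fix, reflected in the order of summation above, is that the running time of \cref{thm:eff-rep-fam} is linear in the input family size $t$, so that $\deg^-(w)$ — and hence $m$ — enters exactly once; all that remains is to absorb the leftover $r^{O(1)}$ overhead into $2^{\omega r}$ via the binomial estimates and the $\omega>2$ trick.
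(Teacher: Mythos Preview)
Your proposal is correct and follows essentially the same approach as the paper: bound $|T[v,i-1]|\le\binom{r}{i}$ via \cref{thm:eff-rep-fam}, charge the assembly of $\mathcal N_{w,i}$ and the representative-family computation to $\deg^-(w)$ times a bracket depending only on $i$, sum over $w$ to turn $\deg^-(w)$ into $m$, then sum over $i$ using $\sum_i\binom{r}{i}\le 2^r$ and absorb the residual $r^{O(1)}$ factor into $2^{\omega r}$ via the assumption $\omega>2$. The paper additionally spells out the data structures (tries with sparse-set initialization) used to manipulate the size-$\le r$ sets in $O(r)$ time, but this is subsumed by your $r^{O(1)}$ overhead and does not change the argument.
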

\begin{proof}
	Without loss of generality we assume to have a total ordering on $V$. 
	We represent a subset of $V$ as a sorted string.
	Hence, union and intersection of two sets of size at most $r$ takes $O(r)$ time.
	We can thus look up and store sets of size at most $r$ 
	in a trie (or radix tree) 
	in $O(r)$ time \cite{cormen09}. %
	Note that we do not have the time to completely initialize the arrays of size $|V|$ in each trie node.
	Instead, we will initialize each array cell of a trie node at its first access.
	To keep track of the already initialized cells, 
	we use \emph{sparse sets} over $V$ which allows membership test, 
	insertion, and deletion of elements in constant time \cite{BL93}.

	We denote the \emph{in-neighborhood} of a vertex $w$ by $N^-(w) := \{ v \in V \mid (v,w) \in E \}$.
	Furthermore, let $H_{i,w}$ be the running time of Lines \eqref{line:N-start}--\eqref{line:N-end} 
	in iteration $i$ of the for-loop in Line \eqref{line:for-r}, 
	and~$R_{i,w}$ be the number of operations over $\mathbb F$ of Line \eqref{line:compress}
	in iteration $i$ of the for-loop in Line \eqref{line:for-r}.
	Then we can run \cref{alg:fpt-ipp} in time of $O\left( \sum_{i=1}^{r-1} \sum_{w \in V} H_{i,w} + \sum_{i=1}^{r-1} \sum_{w \in V} R_{i,w} \right)$
	operations over $\mathbb F$---that is, the running time respecting the time needed for operations over $\mathbb F$.
	Let $i \in [r-1]$ and $w \in V$.
	In the $i$-th iteration of the for-loop in Line \eqref{line:for-r},
	$|T[v,j]| \leq {r \choose j+1}$ for all $j < i$ and $v \in V$, 
	since we used \cref{thm:eff-rep-fam}
	in prior iterations.
	Let $2 < \omega < 2.373$ be an upper-bound for the matrix multiplication exponent.
	Hence, $|\mathcal N_{w,i}| \leq {r \choose i+1} |N^-(w)|$ and
	$H_{i,w} \in O({r \choose i+1} |N^-(w)| \cdot r^\omega)$, 
	because the independence test can be done via matrix multiplication.
	Thus,
	\begin{align*}
		&O\left( 
			\sum_{i=1}^{r-1} \sum_{w \in V} H_{i,w}
		\right)
		\subseteq O\left( 
			\sum_{i=1}^{r-1} \sum_{w \in V} 
			|N^-(w)|{r \choose i+1} \cdot r^\omega
		\right)
		\subseteq O\left( 
		 	2^{r+o(r)}m 
		\right).
	\end{align*}
	Moreover, by \cref{thm:eff-rep-fam}, 
	we have 
	\begin{align*}
		 O\left(	\sum_{i=1}^{r-1} \sum_{w \in V} R_{i,w} \right)
			 &\subseteq O\left(	\sum_{i=1}^{r-1}  
		m {r \choose i}{r \choose i+1}(i+1)^\omega +
		\sum_{i=1}^{r-1} 
		m {r \choose i}{r \choose r-i-1}^{\omega-1}
		\right)\\
				&\subseteq	O\left( 2^r m( 2^{r}r^{\omega} + 2^{r(\omega-1)}) \right)
				\subseteq	O\left( m( 2^{2r+\log_2(r)\omega} + 2^{r\omega}) \right)
				\subseteq	O\left( 2^{\omega r} m \right),
		\end{align*}
		where the last inclusion is true because we assume $2<\omega$.

Thus, we can run \cref{alg:fpt-ipp} in time 
of $O\left(2^{\omega r}m\right)$ operations 
over $\mathbb F$.
\end{proof}

Now \cref{thm:fpt-ipp} follows from \cref{lem:fpt-ipp-time,lem:fpt-ipp-correct}.

Observe that by \cref{lem:exp-correct}, there is a \tpath{} in the temporal graph $\TG$ 
if and only if 
there is an $(s,z)$-path $P$ in the $\Delta$-$(s,z)$-expansion $D = (V',E')$ of $\TG$
such that $V(P)$ is an independent set in the 
\emph{partition matroid}\footnote{Partition matroids are linear \cite{Mar09}.} $M = (V',\{ X \subseteq V' \mid \forall v\in V \colon |X \cap V'(v)| \leq 1 \})$.
Note that $M$ is of rank $|V|$ and hence too large to show \cref{thm:fpt-length} with \cref{thm:fpt-ipp}.

A \emph{$k$-truncation} of a matroid $(U,\I)$ 
is a matroid $(U,\{ X \in \I \mid |X| \leq k \})$ 
such that all independent sets are of size at most $k$.
The $k$-truncation of a linear matroid is also a linear matroid \cite{Mar09}.
In our reduction from \TPP{} to \IPP{} we 
use a $(k+1)$-truncation of matroid $M$.
Two general approaches are known to compute a representation 
for a $k$-truncation of a linear matroid---one 
is randomized \cite{Mar09} 
and one is deterministic \cite{LMPS18}.\footnote{For both algorithms, a representation of the original matroid must be given.}
Both approaches require a large field 
implying that one operation over that field is rather slow.
However, for our specific matroid we 
employ the Vandermonde matrix 
to compute 
a representation over a 
small finite field.
Note that we would not get a running time
linear in the input size by applying the algorithm of Lokshtanov~et~al.~\cite{LMPS18} or Marx~\cite{Mar09} on $M$.

\begin{lemma}\label{lem:rep}
	Given a universe $U$ of size $n$, 
	a partition $P_1 \uplus \dots \uplus P_q = U$,
	and an integer $k \in \N$,
	we can compute in $O(kn)$ time a representation $A_M$ for 
	the matroid 
	$M = \Big(U, \Big\{ X \subseteq U \ \Big\vert\ |X| \leq k 
		\text{ and } \forall i \in [q] \colon |X \cap P_i| \leq 1  \Big\}\Big)$,
		where $A_M$ is defined over a finite field $\mathbb F$ 
		and one operation over $\mathbb F$ takes constant time.
\end{lemma}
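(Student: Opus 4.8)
The plan is to exhibit $A_M$ as the standard representation of the underlying partition matroid ``composed with'' a Vandermonde (Reed--Solomon) matrix: making all elements of a part $P_i$ share a single column encodes the constraint ``at most one element per part'', while the truncation to rank $k$ comes for free from the fact that any at most $k$ columns of a Vandermonde matrix are linearly independent. The point is that this only needs a field with at least $q$ elements, so we may work over a prime field of size $O(n)$ and keep every field operation constant-time, in contrast to the generic truncation constructions of Marx~\cite{Mar09} and Lokshtanov~et~al.~\cite{LMPS18}, which inflate the field.

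Concretely, first I would fix a prime $p$ with $q \le p \le 2q$ (such a prime exists by Bertrand's postulate, and since $q\le n$ it can be located within the claimed time bound; its elements fit in $O(1)$ machine words, so each arithmetic operation over $\mathbb{F}:=\mathbb{F}_p$ is constant-time). Let $k':=\min\{k,q\}$, let $c\colon U\to[q]$ send each element to the index of its part, and fix pairwise distinct $\alpha_1,\dots,\alpha_q\in\mathbb{F}$ (e.g.\ $\alpha_i:=i-1$). I would then define $A_M$ to be the $k'\times n$ matrix over $\mathbb{F}$ whose column labelled $u\in U$ is $\big(1,\ \alpha_{c(u)},\ \alpha_{c(u)}^{2},\ \dots,\ \alpha_{c(u)}^{k'-1}\big)^{\!\top}$; equivalently $A_M=V\cdot A_N$, where $A_N$ has column $e_{c(u)}$ for $u\in U$ (the standard representation of the partition matroid $N$) and $V$ is the $k'\times q$ Vandermonde matrix with $V_{a,i}=\alpha_i^{a-1}$.

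For correctness I would show that, for every $X\subseteq U$, the columns of $A_M$ indexed by $X$ are linearly independent over $\mathbb{F}$ if and only if $X\in\I$, which also certifies that $\I$ is a matroid. If $|X\cap P_i|\le 1$ for all $i$ and $|X|=\ell\le k$, then $\ell\le k'$, the $\ell$ selected columns are columns of $V$ for $\ell$ distinct values $\alpha_{i_1},\dots,\alpha_{i_\ell}$, and the submatrix of their first $\ell$ rows is a Vandermonde matrix with determinant $\prod_{1\le a<b\le\ell}(\alpha_{i_b}-\alpha_{i_a})\neq 0$; hence the columns are independent. Conversely, if some part contains two elements of $X$ then $X$ selects one column of $V$ twice, and if $|X|>k\ (\ge k')$ then $X$ selects more than $k'$ vectors of $\mathbb{F}^{k'}$; in both cases the columns are dependent. (When $k\ge q$ this argument degenerates to the statement that $V$ is an invertible $q\times q$ matrix and $M=N$, consistently with the definition of $M$, since ``$\forall i\colon|X\cap P_i|\le 1$'' already forces $|X|\le q\le k$.)

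Finally, fixing $\mathbb{F}$ and the $\alpha_i$ costs $O(n)$ (plus the lower-order, one-time cost of finding $p$), and the $k'n=O(kn)$ entries of $A_M$ can be written in $O(kn)$ time---either by building each column's power sequence incrementally ($O(k')$ multiplications per column) or by computing one column per part in $O(k'q)$ total and copying it to its $\le n$ elements. The only points I expect to need care are the ``only if'' direction of the correctness equivalence in its degenerate cases (repeated columns, too many columns, $k\ge q$), and the small amount of bookkeeping ensuring the chosen field is simultaneously large enough to hold $q$ distinct elements and small enough ($O(n)$) for constant-time arithmetic; choosing a Reed--Solomon matrix over $\mathbb{F}_p$ rather than invoking a generic truncation is precisely what reconciles these two requirements.
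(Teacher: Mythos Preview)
Your proposal is correct and follows essentially the same approach as the paper: pick a prime $p$ with $q\le p\le 2q$ via Bertrand's postulate, work over $\mathbb{F}_p$ so that field operations are constant-time, and represent each $u\in P_i$ by the Vandermonde column $(1,\alpha_i,\dots,\alpha_i^{k-1})^\top$ for distinct $\alpha_i\in\mathbb{F}_p$. Your added refinement $k':=\min\{k,q\}$ and the factorization $A_M=V\cdot A_N$ are nice touches the paper leaves implicit, but the construction and the correctness argument (duplicate columns if two elements share a part; too many columns if $|X|>k$; Vandermonde minor otherwise) are the same.
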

\begin{proof}
For this running time analysis
we assume the \emph{Word RAM model of computation}, introduced by \cite{FW90}, which is similar 
to the \emph{RAM model of computation} 
but one memory cell can store only $O(\log n)$ many bits, where $n$ is the input size.
This avoids abuse of the unit cost random access machine 
by for example multiplying very large numbers in constant time. 

	Without loss of generality we assume that $q \leq n$.
	Let $p$ be a prime number with $q \leq p \leq 2q$.
	Such a prime exists by the folklore Betrand-Chebyshev Theorem~\cite{proofsfromthebook} 
	and can be computed in $O(n)$ time using Lagarias-Odlyzko Method~\cite{TCH12}.
	To perform one operation on the prime field $\mathbb F_p$, one can first perform the primitive operation in $\mathbb Z$
	and them take the result modulo~$p$.
	Since $p \leq 2q \in O(n)$,
	each element of $\mathbb F_p$ fits into one cell of the
	\emph{Word RAM model of computation}.
	Thus, we can perform one operation over $\mathbb F_p$ in constant time.

	Let $x_1,\dots,x_q$ be pair-wise distinct elements from $\mathbb F_p$.
	To compute an $(k \times n)$-matrix~$A_M$ as representation for $M$ over $\mathbb F_p$, we
	compose (column-wise) for each element $u \in P_i$ the 
	vector $\mathbf v_i := \begin{pmatrix}x_i^0 & x_i^1 & \dots & x_i^{k-1} \end{pmatrix}^T$, where $i \in [q]$.
	That gives a running time of $O(k \cdot n)$ operations over $\mathbb F_p$, 
	since we can compute $\mathbf v_i$ in $O(k)$ operations over $\mathbb F_p$.

	It remains to show that $A_M$ is a representation of $M$.
	Let $X \subseteq U$. 
	If there is an $i \in [p]$ such that~$|X \cap P_i| > 1$, 
	then the corresponding columns of $A_M$ are linearly dependent, because we have the vector~$\mathbf v_i$ twice.
	Now we assume that for all $i \in [q]$ we have $|X \cap P_i| \leq 1$.
	Furthermore, if $|X| > k$, 
	then we know that the corresponding columns of $A_M$ are linearly dependent, 
	because $A_M$ is an $(k \times n)$-matrix.
	We can observe that if $|X| = k$, then the corresponding columns in $A_M$ 
	form a Vandermonde matrix, whose determinate is known to be non-zero.
	Hence, if $|X| \leq k$, then the corresponding columns in $A_M$ are linearly independent.
	Thus, $A_M$ is a representation of~$M$.
 \end{proof}
We now show a reduction from \TPP{} to \IPP{} using \cref{lem:exp-runtime,lem:exp-correct,lem:rep}

 \begin{lemma}%
		\label{lem:tpp-to-ipp}
	Given an instance 
	$(\TG,s,z,k,\wait)$ of \TPP{},
	we can compute in $O(\max\{k,\wait\}\cdot |\TG|)$ time
	an instance $(D,s,z,A_M)$ of \IPP{} such that 
				$M$ has rank $k+1$, and
			$(\TG,s,z,k,\wait)$ is a \yes-instance  
			if and only if
			$(D,s,z,A_M)$ is a \yes-instance, %
	where one operation over the finite field of $A_M$ takes constant time.
\end{lemma}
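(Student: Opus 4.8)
The construction composes the $\wait$-$(s,z)$-expansion of \cref{lem:exp-runtime,lem:exp-correct} with the Vandermonde representation of \cref{lem:rep}. I would first dispose of degenerate inputs. Since every $(s,z)$-path visits pairwise distinct vertices, no $\wait$-restless temporal $(s,z)$-path is longer than $|V|-1$; after the preprocessing of \cref{lem:exp-runtime} (erasing vertices isolated in all layers) we may therefore replace $k$ by $\min\{k,|V|-1\}$ without affecting the answer. If now $k\le 0$ then the instance is a \no-instance (as $s\neq z$), and if $\{s,z\}\in E_t$ for some $t$ then it is a \yes-instance via the length-one path; in both cases we output a fixed \IPP-instance with the matching answer whose matroid is uniform of rank $k+1$ (representable over a small field with constant-time arithmetic, e.g.\ by a Vandermonde matrix as in \cref{lem:rep}). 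Otherwise $k\ge 1$ and $\{s,z\}\notin E_t$ for all $t$, so \cref{def:expansion} applies.

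Next I would call \cref{lem:exp-runtime} to compute the $\wait$-$(s,z)$-expansion $D=(V',E')$ of $\TG$ in $O(|\TG|\cdot\wait)$ time with $|V'|\in O(|\TG|)$. The $q:=|V|$ sets $V'(v)$, $v\in V$, partition $V'$ into non-empty parts (with $V'(s)$ and $V'(z)$ singletons). Feeding the universe $V'$, this partition, and the integer $k+1$ into \cref{lem:rep} gives, in $O\bigl((k+1)\cdot|V'|\bigr)=O(k\cdot|\TG|)$ time, a representation $A_M$ over a finite field $\mathbb F$ with constant-time arithmetic of the matroid
$$M=\Bigl(V',\ \bigl\{X\subseteq V'\ \bigm|\ |X|\le k+1\text{ and }\forall v\in V\colon|X\cap V'(v)|\le1\bigr\}\Bigr),$$
which is the $(k+1)$-truncation of the rank-$q$ partition matroid given by the $V'(v)$'s; since $k+1\le|V|=q$ after clamping, the rank of $M$ is exactly $k+1$. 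We output $(D,s,z,A_M)$. Adding the two running times yields $O(|\TG|\cdot\wait+k\cdot|\TG|)=O(\max\{k,\wait\}\cdot|\TG|)$.

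Finally, for the equivalence: an $(s,z)$-dipath $P$ in $D$ of length at most $k$ has $|V(P)|\le k+1$, so the cardinality condition in $M$ is automatic and $V(P)\in\I$ is equivalent to $|V(P)\cap V'(v)|\le1$ for all $v\in V$. Hence $(D,s,z,A_M)$ is a \yes-instance of \IPP{} iff for some $\ell\le k$ there is an $(s,z)$-dipath of length $\ell$ in $D$ with $|V'(v)\cap V(P)|\le1$ for all $v$, which by \cref{lem:exp-correct} (applied with length $\ell$) holds iff $\TG$ contains a $\wait$-restless temporal $(s,z)$-path of length $\ell\le k$, i.e.\ iff $(\TG,s,z,k,\wait)$ is a \yes-instance of \TPP{}. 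I do not anticipate a genuine obstacle: the statement is an assembly of \cref{lem:exp-runtime,lem:exp-correct} and \cref{lem:rep}. The two points that need care are (i) passing from the fixed-length formulation of \cref{lem:exp-correct} to the ``length at most $k$'' query of \IPP{}, which works because the size bound in $M$ is vacuous for short dipaths, and (ii) obtaining the truncated matroid as a representation over a \emph{small} field with constant-time arithmetic (so the reduction stays near-linear) -- exactly what \cref{lem:rep} supplies, in place of the generic matroid-truncation representations, which are defined over large fields and would inflate the cost of a field operation.
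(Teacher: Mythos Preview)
Your proposal is correct and follows essentially the same route as the paper: compute the $\wait$-$(s,z)$-expansion via \cref{lem:exp-runtime}, build the $(k{+}1)$-truncated partition matroid on the parts $V'(v)$ via \cref{lem:rep}, and invoke \cref{lem:exp-correct} (for each length $\ell\le k$) for the equivalence. Your explicit handling of the edge cases---clamping $k$ so that $k{+}1\le q$ and the rank is exactly $k{+}1$, and disposing of the $\{s,z\}\in E_t$ precondition of \cref{def:expansion}---is more careful than the paper, which glosses over these points.
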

\begin{proof}
	Let $(\TGcompact,s,z,k,\wait)$ be an instance of \TPP{}.
	We construct an instance $(D,s,z,A_M,k)$ of \IPP{} in the following way.
	Let digraph $D = (V',E')$ be the $\wait$-$(s,z)$-expansion of $\TG$ which can be computed, by \cref{lem:exp-runtime}, in $O(|\TG|\cdot\wait)$ time
	such that $V' \in O(|\TG|)$.
	Observe that $\bigcup_{v \in V} V'(v)$ is a partition of $V'$.
	Now, we construct a representation $A_M$ (over a finite field where we can perform one operation in constant time) of the matroid 
	\begin{align*}
		M =  \Big(V', \big\{ X \subseteq V' \ \big\vert\ |X| \leq k+1 \text{ and } \forall v \in V \colon |X \cap V'(v)| \leq 1 \big\}   \Big)
	\end{align*}
	in $O(k\cdot|\TG|)$ time by \cref{lem:rep}.
	Note that $M$ is an $(k+1)$-truncated partition matroid and hence has rank $k+1$.
	This completes the construction and gives us an overall running time of $O(\max\{k,\wait\}\cdot |\TG|)$.

	We now claim $(\TG,s,z,k,\wait)$ is a \yes-instance  
			if and only if
			$(D,s,z,A_M)$ is a \yes-instance and contains an independent \dipath\ of length at most $k$.
	
	\LArrow Let $P$ be a \tpath{} of length $k' \leq k$ in $\TG$.
	Then, by \cref{lem:exp-correct} there is an \dipath{} $P'$ of length $k'$ such that 
	for all $v \in V$ it holds that~$|V'(v) \cap V(P')| \leq 1$.
	Since $|V(P')| = k'+1 \leq k+1$, we know that $V(P')$ is an independent set of $M$.	
	Thus, $P'$ is a witness of length at most $k$ for $(D,s,z,A_M)$ being a \yes-instance.

	\RArrow Let $P'$ be an \dipath{} of length $k' \leq k$ in $D$ 
	such that $V(P')$ is an independent set of $M$.
	Clearly, for $v \in V$ it holds that~$|V'(v) \cap V(P')| \leq 1$.
	Then, by \cref{lem:exp-correct}, there is a \tpath{} of length $k'$ in $\TG$.
\end{proof}
		\begin{proof}[Proof of \cref{thm:fpt-length} (ii)]
	Let $I = (\TGcompact,s,z,k,\wait)$ be an instance of \TPP{}.
	To decide whether there is a  witness of length $k$ of $I$ being a \yes-instance,
	we first use \cref{lem:tpp-to-ipp} to compute 
	an instance $I' = (D,s,z,A_M)$ of \IPP{} in $O(|\TG|\cdot\max\{\wait,k\})$ time, 
	where we can compute one operation over the field $\mathbb F$ of $A_M$ in constant time 
	and the matroid $M$ which is represented by $A_M$ is of rank $k+1$.
	Note that $I'$ is a \yes-instance if and only if there is witness of length $k$ for $I$ being a \yes-instance.
	Second, we solve~$I'$ by \cref{thm:fpt-ipp} in $O(2^{\omega (k+1)}\cdot |\TG|\cdot\wait)$ time.

	Thus, we have an overall running time of~$2^{O(k)} \cdot |\TG|\cdot\wait$.
\end{proof}

Moreover, from \cref{lem:tpp-to-ipp} it is intermediately clear 
that the lower-bounds of \cref{cor:path:eth,thm:probRestlessPath:NPh} 
translate to \IPP{}.
\begin{corollary}
	\label{cor:ipp-hard}
	\IPP{} is NP-hard and unless the ETH fails there is
	no $2^{o(n)}$-time algorithm for it,
	where $n$ is the number of vertices.
\end{corollary}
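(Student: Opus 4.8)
The plan is to transfer both lower bounds from \probRestlessPath to \IPP{} through the reduction of \cref{lem:tpp-to-ipp}, the main point being to keep track of the \emph{number of vertices} of the produced \IPP{} instance (rather than, say, its rank).

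For \textbf{NP-hardness}, recall that \probRestlessPath is exactly the special case of \probSRestlessPath ($=$\TPP{}) with $k=|V|-1$, and that it is \NP-hard already for $\wait=1$ and $\lifetime=3$ by \cref{thm:probRestlessPath:NPh}. Given such an instance $(\TGcompact,s,z,\wait)$, setting $k:=|V|-1$ turns it into an equivalent \TPP{} instance, and \cref{lem:tpp-to-ipp} then produces, in $O(\max\{k,\wait\}\cdot|\TG|)=|\TG|^{O(1)}$ time, an equivalent \IPP{} instance. Composing the two reductions yields a polynomial-time many-one reduction from \probRestlessPath to \IPP{}, so \IPP{} is \NP-hard.

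For the \textbf{ETH lower bound}, I would start from the chain of reductions behind \cref{cor:path:eth}: a \probThreeSAT formula with $m$ clauses is first turned into an equisatisfiable \probExactThreeFourSAT formula with $O(m)$ clauses, and then (via the proof of \cref{thm:probRestlessPath:NPh}) into a \probRestlessPath instance with $\lifetime=3$, $\wait=1$, and $|\TG|=O(m)$. Again setting $k:=|V|-1=O(m)$ and applying \cref{lem:tpp-to-ipp}, the $\wait$-$(s,z)$-expansion $D$ has $|V(D)|\in O(|\TG|)=O(m)$ vertices by \cref{lem:exp-runtime}; hence the resulting \IPP{} instance has $n=O(m)$ vertices and is computed in $O(\max\{k,\wait\}\cdot|\TG|)=O(m^2)$ time. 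A hypothetical $2^{o(n)}$-time algorithm for \IPP{} would therefore decide the original \probThreeSAT formula in $2^{o(m)}+O(m^2)=2^{o(m)}$ time, contradicting the ETH.

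The only genuinely delicate point is the bookkeeping: one must take $k=|V|-1$ so that \probRestlessPath is actually captured, verify that $\max\{k,\wait\}\cdot|\TG|$ remains polynomial (here $O(m^2)$), and invoke $|V(D)|\in O(|\TG|)$ from \cref{lem:exp-runtime} to conclude that the vertex count of the \IPP{} instance stays linear in $|\TG|$, and hence linear in $m$. With that in place, both claims follow at once, and the bound is tight against the $O(2^{\omega r}m)$-operation algorithm of \cref{thm:fpt-ipp} since $r\le n$.
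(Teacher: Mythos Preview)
Your proposal is correct and follows exactly the approach the paper intends: the paper merely states that the lower bounds of \cref{thm:probRestlessPath:NPh} and \cref{cor:path:eth} transfer to \IPP{} via \cref{lem:tpp-to-ipp}, and you have spelled out this transfer carefully, including the crucial use of \cref{lem:exp-runtime} to bound $|V(D)|\in O(|\TG|)=O(m)$. The additional remark on tightness against \cref{thm:fpt-ipp} is a nice touch that the paper does not make explicit.
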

Note that from \cref{thm:probRestlessPath:NPh} we can further deduce that
there is not much hope for \emph{fast} or \emph{early} restless temporal
paths, that is, restless temporal path that have a small duration or an early
arrival time. The instance constructed in the
reduction has lifetime~$\lifetime = 3$ and hence the duration as well as the arrival time of any restless
temporal path in this instance is at most three. This implies that we presumably
cannot find fast or early restless temporal paths efficiently.

\section{Computational complexity landscape for the underlying graph }
\label{sec:structparam}

In this section we investigate the parameterized computational complexity of \probRestlessPath 
when parameterized by structural parameters of the underlying graph. 
We start by observing that whenever a parametrization forbids path of unbounded length, 
then we can use \cref{thm:fpt-length} to show fixed-parameter tractability.
For example, if we consider the vertex cover number~$\text{vc}_\downarrow$ of the underlying graph, 
then we can deduce that any path in the underlying graph and 
hence any restless temporal path can have length at most~$2\text{vc}_\downarrow+1$. 
Thus, by \cref{thm:fpt-length}, we get fixed-parameter tractability of \probRestlessPath{} 
when parameterized by the vertex cover number of the underlying graph.
\begin{observation}
		\label{obs:vertexcover}
\probRestlessPath parameterized by the vertex cover number $\text{vc}_\downarrow$ of the underlying graph
is fixed-parameter tractable.
\end{observation}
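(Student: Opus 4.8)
The plan is to obtain this as a direct corollary of \cref{thm:fpt-length} by observing that a small vertex cover of the underlying graph forces every $\Delta$-restless \nonstrpath{s,z} to be short.

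First I would recall from \cref{sec:basicobs} that every $\Delta$-restless \nonstrpath{s,z} in $\TG$ projects onto an $(s,z)$-path in the underlying graph $\UG$. Then I would bound the length of any $(s,z)$-path in $\UG$ in terms of $\text{vc}_\downarrow$: fix a minimum vertex cover $C$ of $\UG$ with $|C| = \text{vc}_\downarrow$. Since $V(\UG) \setminus C$ is an independent set, no two consecutive vertices of a path can both lie outside $C$ (their connecting edge would be uncovered). Hence a path on $\ell + 1$ vertices uses at most $\text{vc}_\downarrow$ vertices of $C$, and these split the remaining positions into at most $\text{vc}_\downarrow + 1$ "slots" each holding at most one vertex outside $C$; therefore it has at most $2\text{vc}_\downarrow + 1$ vertices, i.e., length at most $2\text{vc}_\downarrow$. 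Consequently every $\Delta$-restless \nonstrpath{s,z} in $\TG$ has length at most $2\text{vc}_\downarrow$.

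Next I would note that \probRestlessPath on $(\TG, s, z, \Delta)$ is exactly the special case of \probSRestlessPath with $k = |V| - 1$, and that by the length bound above we may equivalently take $k := 2\text{vc}_\downarrow$ (or any $k \ge 2\text{vc}_\downarrow$) without changing the answer: a $\Delta$-restless \nonstrpath{s,z} exists if and only if one of length at most $2\text{vc}_\downarrow$ exists. A vertex cover of size at most $2\text{vc}_\downarrow$ can be computed in polynomial time from a maximal matching (an exact minimum one is also available in FPT time), so such a bound on $k$ depending only on the parameter is obtainable. Feeding the resulting instance of \probSRestlessPath into \cref{thm:fpt-length} then decides the instance in $2^{O(\text{vc}_\downarrow)} \cdot |\TG| \cdot \wait$ time (or $2^{O(\text{vc}_\downarrow)} \cdot |\TG|^{O(1)}$ time using the randomized variant), which is fixed-parameter tractable with respect to $\text{vc}_\downarrow$.

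Since the statement follows essentially immediately from \cref{thm:fpt-length}, I do not anticipate a genuine obstacle; the only points requiring (routine) care are the off-by-one in the path-length bound and the remark justifying that replacing the trivial choice $k = |V|-1$ in the reduction of \probRestlessPath to \probSRestlessPath by the much smaller value $2\text{vc}_\downarrow$ preserves yes/no-equivalence.
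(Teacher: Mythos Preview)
Your proposal is correct and follows essentially the same approach as the paper: bound the length of any $(s,z)$-path in the underlying graph by a function of $\text{vc}_\downarrow$ and then invoke \cref{thm:fpt-length}. The paper states the bound as $2\text{vc}_\downarrow+1$ while you derive the (slightly tighter) $2\text{vc}_\downarrow$, and you additionally make explicit how to obtain a usable value of~$k$ via a $2$-approximate vertex cover; both are minor elaborations of the same argument.
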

From a classification standpoint, we can improve this a little further by observing that 
the length of a path in the underlying graph can be bounded by~$2^{O(\text{td}_\downarrow)}$~\cite{ND12}, 
where $\text{td}_\downarrow$ is the treedepth of the underlying graph.
\begin{observation}
	\label{obs:treedepth}
\probRestlessPath parameterized by the treedepth $\text{td}_\downarrow$ of the underlying graph
is fixed-parameter tractable.
\end{observation}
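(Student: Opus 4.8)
The plan is to observe that bounding the treedepth of the underlying graph $\UG$ also bounds the length of every $\wait$-restless temporal $(s,z)$-path, and then to invoke the length-parameterized algorithm of \cref{thm:fpt-length}, exactly as in the proof of \cref{obs:vertexcover}. The combinatorial ingredient is the well-known relation between treedepth and longest paths (see~\cite{ND12}): a path on $n$ vertices has treedepth $\lceil\log_2(n+1)\rceil$, so a graph of treedepth $\text{td}_\downarrow$ contains no simple path on more than $2^{\text{td}_\downarrow}$ vertices, and hence none of length more than $2^{\text{td}_\downarrow}$.

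First I would recall from \cref{sec:basicobs} that the support of any $\wait$-restless temporal $(s,z)$-path in $\TG$ is a simple $(s,z)$-path in $\UG$. Combined with the bound above, every $\wait$-restless temporal $(s,z)$-path in $\TG$ has length at most $k := 2^{\text{td}_\downarrow}$. Consequently, $(\TG,s,z,\wait)$ is a yes-instance of \probRestlessPath if and only if $(\TG,s,z,k,\wait)$ is a yes-instance of \probSRestlessPath. Running the deterministic algorithm of \cref{thm:fpt-length}~(ii) on the latter instance takes $2^{O(k)}\cdot|\TG|\cdot\wait = 2^{2^{O(\text{td}_\downarrow)}}\cdot|\TG|\cdot\wait$ time, which is fixed-parameter tractable with respect to $\text{td}_\downarrow$ (the randomized variant of \cref{thm:fpt-length}~(i) likewise gives $2^{2^{\text{td}_\downarrow}}\cdot|\TG|^{O(1)}$ time).

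The only point requiring a little care is that, to set $k = 2^{\text{td}_\downarrow}$, one needs the value of $\text{td}_\downarrow$ (or an upper bound of the same order of magnitude). As is standard in parameterized complexity, the parameter may be assumed to be given with the input; alternatively, a treedepth decomposition of $\UG$ of depth $\text{td}_\downarrow$ --- and thus the exact value of $\text{td}_\downarrow$ --- can be computed in time fixed-parameter tractable in $\text{td}_\downarrow$, so one can simply prepend such a computation. I expect this bookkeeping to be the only (minor) obstacle; the rest is an immediate consequence of the longest-path bound together with \cref{thm:fpt-length}.
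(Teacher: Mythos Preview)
Your proposal is correct and takes essentially the same approach as the paper: bound the length of any simple path in $\UG$ by $2^{O(\text{td}_\downarrow)}$ via the standard treedepth--longest-path relation of~\cite{ND12}, then invoke \cref{thm:fpt-length}. The paper's justification is in fact just the single sentence preceding the observation, so your write-up is more detailed but not different in substance.
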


One of the few dark spots of the landscape is the 
\emph{feedback edge number}\footnote{For a given graph $G=(V,E)$ a set $F \subseteq E$ is a \emph{feedback edge set}
if $G - F$ does not contain a cycle.
The \emph{feedback edge number} of a graph $G$
is the size of a minimum feedback edge set for $G$.} 
of the underlying graph which is resolved in the following way.
\begin{theorem}
		\label{thm:fpt-fes}
\probRestlessPath{} can be solved in $2^{O(\fes)}\cdot |\TG|$ time, 
	where~$\fes$ is the feedback edge number of the underlying graph.
\end{theorem}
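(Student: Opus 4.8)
The plan is to exploit the fact that a graph with feedback edge number $\fes$ has a very restricted structure: after suppressing degree-one and degree-two vertices, what remains is a multigraph on $O(\fes)$ vertices and $O(\fes)$ edges. Concretely, let $F$ be a minimum feedback edge set of the underlying graph $\UG$, so $|F| = \fes$, and let $T = \UG - F$ be a spanning forest. Let $W$ be the set of vertices that are endpoints of edges in $F$ together with $s$, $z$, and all vertices of $T$ that have degree at least three in $T$; one checks that $|W| = O(\fes)$. The forest $T$ then decomposes into $O(\fes)$ \emph{topological paths} (paths whose internal vertices all have degree two in $\UG$) connecting vertices of $W$, plus some pendant trees hanging off $W$. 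Any restless temporal $(s,z)$-path in $\TG$ traverses only a bounded number of these topological segments, and on each segment it behaves like a restless path on a path, which by \cref{lem:restless-path-on-a-path} we can analyze in linear time.

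The main steps, in order, would be: (1) Compute $F$, $T$, and $W$, and decompose $\UG$ into the topological paths and pendant trees described above; this is standard and takes $O(|\TG|)$ time. (2) Observe that no restless temporal $(s,z)$-path can enter a pendant tree, except possibly the ones containing $s$ or $z$, and those are entered at most at the very start or end; more precisely the path restricted to a pendant subtree is forced (it is a unique path in a tree), so handle $s$ and $z$ specially by replacing the pendant subtrees containing them with short initial/final segments, and discard all other pendant material. After this reduction $\UG$ is (essentially) a multigraph $\mathcal{H}$ on $O(\fes)$ vertices whose edges correspond to topological paths. (3) Guess which of the $O(\fes)$ topological segments the solution path uses and in which direction — there are $2^{O(\fes)}$ such guesses, and for each guess the skeleton of the solution in $\mathcal{H}$ is a walk that visits each branch vertex at most once, hence has length $O(\fes)$. (4) For a fixed guess, the solution is the concatenation of restless subpaths, one per chosen segment; using the dynamic program of \cref{lem:restless-path-on-a-path} on each segment we compute, for each segment and each admissible entry time, the set (or rather the sorted list) of possible exit times, then chain these together along the skeleton, propagating the feasible arrival-time sets from $s$ toward $z$ and checking the waiting-time constraint $0 \le t - t' \le \wait$ at each branch vertex exactly as in Lemma~\ref{lem:restless-path-on-a-path}. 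If some guess yields a nonempty feasible arrival set at $z$, output \yes, otherwise \no.

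The one genuine subtlety — and the step I expect to be the main obstacle — is the bookkeeping needed to keep the running time \emph{linear} in $|\TG|$ rather than merely polynomial. The arrival-time sets along a segment can be as large as the number of time edges on that segment, so naively recomputing them for each of the $2^{O(\fes)}$ guesses would blow up the $|\TG|$ factor. The fix is to note that each topological segment appears in the skeleton of a candidate solution with only $O(1)$ possible roles (which endpoint is the entry, which is the exit), so the per-segment dynamic programming tables can be precomputed once, in total $O(|\TG|)$ time across all segments (since the segments partition the time edges, exactly as in the accounting $\sum_i |L_i| = \sum_i |E_i|$ at the end of the proof of \cref{lem:restless-path-on-a-path}); then each of the $2^{O(\fes)}$ guesses only needs to chain together $O(\fes)$ already-computed tables, and the chaining of two sorted arrival lists along a branch vertex costs time linear in their sizes, which are bounded by the sizes of the corresponding segments. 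Summing over a single guess gives $O(|\TG|)$, and one must argue that we do not pay $O(|\TG|)$ separately for each of the $2^{O(\fes)}$ guesses — which holds because a given guess only touches the segments it selects and the merge cost telescopes, but making this precise (and handling segments that could in principle be traversed in either direction by reusing a reversed copy of the table) is the part that requires care. Modulo this accounting, the bound $2^{O(\fes)} \cdot |\TG|$ follows.
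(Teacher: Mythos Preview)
Your proposal is essentially the same approach as the paper's: decompose $\UG$ into $O(\fes)$ connecting paths between the feedback-edge endpoints, the high-degree tree vertices, and $\{s,z\}$; enumerate $2^{O(\fes)}$ candidate subsets of segments (and feedback edges); and verify each candidate with the path dynamic program of \cref{lem:restless-path-on-a-path}. The pendant-tree removal you describe is exactly the paper's \cref{rr:lowdeg}.

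Where you diverge from the paper is in the last paragraph, and there you are making your life harder than necessary. You write that ``one must argue that we do not pay $O(|\TG|)$ separately for each of the $2^{O(\fes)}$ guesses''---but in fact you \emph{may} pay $O(|\TG|)$ per guess: that is precisely the claimed bound $2^{O(\fes)}\cdot|\TG|$. The paper does exactly this: for each of the $2^{O(\fes)}$ subsets $F'\subseteq F$, $\mathcal P'\subseteq\mathcal P$, it checks in $O(|\UG|)$ time whether $F'\cup E(\mathcal P')$ is an $(s,z)$-path, builds $\TG_P$ in $O(|\TG|)$ time, and runs \cref{lem:restless-path-on-a-path} once on the whole candidate path in $O(|\TG|)$ time. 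No precomputation, no chaining of per-segment tables, no worrying about reversed traversals---just rerun the linear-time lemma from scratch for each guess. Your more elaborate scheme of precomputing per-segment arrival tables and merging them would also work, but it buys nothing asymptotically and introduces exactly the bookkeeping you flagged as the main obstacle; drop it and the proof becomes straightforward.
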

By \cref{cor:path:eth} we know that \cref{thm:fpt-fes} is asymptotically optimal, unless ETH fails.
In a nutshell, our algorithm to prove \cref{thm:fpt-fes} has the following five steps:
 \begin{enumerate}
 	\item Exhaustively remove all degree-$1$ vertices from $\UG$ (except for $s$ and $z$). 
 	\item Compute a minimum-cardinality feedback edge set~$F$ of the graph $\UG$. %
	\item Compute a set $\mathcal P$ of $O(\fes)$ many paths in~$\UG - F$ 
			such that every path in~$\UG-F$ is a concatenation of some paths in $\mathcal P$.
	\item ``Guess'' the feedback edges in $F$ and paths in~$\mathcal P$ of an $(s,z)$-path in $\UG$.
	\item Verify whether the ``guessed'' $(s,z)$-path is a $\Delta$-restless \nonstrpath{s,z} in~$\TG$.  			
 \end{enumerate}
 First, we show that we can safely remove all (except $s$ and $z$) degree-one vertices from the underlying graphs~$\UG$.  
\begin{rrule}[Low Degree Rule]
	\label{rr:lowdeg}
	Let $I = (\TGcompact,s,z,\Delta)$ be an instance of \probRestlessPath{}, 
	$\UG$ be the underlying graph of $\TG$, $v \in V \setminus \{s,z\}$, and $\deg_{\UG}(v)\leq 1$.
	Then, output $(\TG - \{v\},s,z,\Delta)$.
\end{rrule}
\begin{lemma}%
		\label{lem:rr:lowdeg}
	\cref{rr:lowdeg} is safe and can be applied exhaustively in $O(|\TG|)$ time.
\end{lemma}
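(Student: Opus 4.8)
The plan is to prove two things: (1) that applying \cref{rr:lowdeg} preserves the answer (safeness), and (2) that we can apply it exhaustively in linear time. The correctness argument is the easy part, and I expect the implementation of the exhaustive removal in linear time to be the only place requiring some care.

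\textbf{Safeness.} I would argue that $I = (\TG, s, z, \Delta)$ is a \yes-instance if and only if $I' = (\TG - \{v\}, s, z, \Delta)$ is. The backward direction is trivial: any $\Delta$-restless temporal $(s,z)$-path in $\TG - \{v\}$ is also one in $\TG$, since $\TG-\{v\}$ is obtained from $\TG$ by deleting vertex appearances only. For the forward direction, suppose $P$ is a $\Delta$-restless temporal $(s,z)$-path in $\TG$. By the basic observation in \cref{sec:basicobs}, the support of $P$ is an $(s,z)$-path in $\UG$; call it $Q$. Since $v \notin \{s,z\}$ and $\deg_{\UG}(v) \le 1$, vertex $v$ cannot be an internal vertex of any $(s,z)$-path in $\UG$ (an internal vertex of a path needs degree at least two in that path, hence at least two in $\UG$), and it cannot be an endpoint since the endpoints are $s$ and $z$. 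Hence $v \notin V(Q)$, so $P$ uses no edge incident to $v$ and no time edge of the form $(\{v,w\},t)$. Therefore $P$ is already a $\Delta$-restless temporal $(s,z)$-path in $\TG - \{v\}$, and $I'$ is a \yes-instance. Note also that $\UG(\TG - \{v\}) = \UG - \{v\}$ by construction, so the rule does exactly what its statement claims and does not change which vertices have low degree except possibly lowering the degree of $v$'s unique neighbour.

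\textbf{Exhaustive application in linear time.} The natural approach is to maintain the underlying graph $\UG$ together with a degree array and a worklist (a queue or stack) of vertices in $V \setminus \{s,z\}$ whose current degree is at most one. Initialize both by a single pass over $\UG$ in $O(|\TG|)$ time (the underlying graph has at most $\sum_i |E_i| \le |\TG|$ edges and $n \le |\TG|$ vertices). Then repeatedly pop a vertex $v$ from the worklist; if it is still present and still has degree at most one (it may have been removed or its degree may have changed since insertion, so a quick recheck is needed), delete it: remove its (at most one) incident edge, decrement the degree of its unique neighbour $w$, and if $w \notin \{s,z\}$ now has degree at most one, push $w$. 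Each vertex is deleted at most once, each edge is examined a constant number of times (once when its lower-degree endpoint is deleted), and each worklist entry triggers $O(1)$ work plus possibly one edge deletion; a standard aggregate/amortized argument gives total time $O(n + |E(\UG)|) = O(|\TG|)$. Finally, one must propagate the deletions to the temporal graph itself: whenever edge $\{v,w\}$ is removed from $\UG$, remove all time edges $(\{v,w\}, t)$ from $\TG$; charging this to the time edges removed, the total cost over the whole run is $O(|\TG|)$ as well, since each time edge is removed at most once.

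\textbf{Main obstacle.} The only subtlety is that the worklist can contain stale entries — a vertex may have been queued when its degree dropped to one but then deleted (because it was itself a leaf whose neighbour got processed first) or may already have been processed. I would handle this with an explicit ``present'' flag per vertex and a recheck of $\deg_{\UG}(v) \le 1$ before acting, both $O(1)$; this keeps the amortized analysis clean. One should also make sure that $s$ and $z$ are never inserted into the worklist (guarding the push), so the rule is never applied to them even if they become degree-one or isolated. With these bookkeeping points in place the linear running time and the safeness (by iterating the single-step safeness above, each application preserving the answer) together give the lemma.
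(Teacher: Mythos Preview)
Your proposal is correct and follows essentially the same approach as the paper: the safeness argument is identical (a vertex of degree at most one in $\UG$ that is neither $s$ nor $z$ cannot lie on any $(s,z)$-path), and the linear-time exhaustive application is the standard worklist-based leaf-trimming on $\UG$. The only cosmetic difference is that the paper collects all removed vertices into a set $X$ first and then constructs $\TG - X$ in a single final pass over the time edges, whereas you propagate time-edge deletions on the fly; both are linear.
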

\begin{proof}
	Let $I = (\TGcompact,s,z,\Delta)$ be an instance of \probRestlessPath{}, 
	For the safeness we can observe that a vertex $v \in V \setminus \{s,z\}$ with $\deg_{\UG}(v) \leq 1$ 
	cannot be visited by any $\Delta$-restless \nonstrpath{s,z}. 
	To apply \cref{rr:lowdeg} exhaustively, we iterate once over 
	the set of time edges to store for each vertex $v \in V \setminus \{s,z\}$ 
	its degree in a counter $c_v$.
	Afterwards, we collect all vertices of degree $0$ in $X$ and 
	all vertices of degree $1$ in $V_1$.
	Now we iterate over each vertex  $v \in V_1$, remove $v$ from $V_1$, add $v$ to $X$, 
	decrement the counter $c_u$ of its neighbor $u$.
	If $c_u$ becomes~$1$ we add $u$ to $V_1$.
	Note that this procedure ends after $O(|V|)$ time.

	Finally, we iterate one last time over the temporal graph $\TG$ 
	to construct the temporal graph $\TG' := \TG - X$.
	The instance $(\TG',s,z,\Delta)$ of \probRestlessPath 
	is the resulting instance when we apply \cref{rr:lowdeg} exhaustively on $I$.
\end{proof}

Next, we consider a static graph~$G$ with no degree-one or degree-zero vertices.
Let $F$ be an minimum feedback edge set of $G$ and let $V_F$ be the endpoints of the edges in~$F$, 
that is $V_F=\{v \in e \mid e \in F\}$. 
Let $V^{\geq 3}$ be the set of all vertices with a degree greater than two in~$G-F$.  
We can partition the graph~$G-F$ into a set~$\mathcal P$ of \emph{$V_F \cup  V^{\geq 3}$-connecting paths}, 
that are, all paths in $G-F$ who start and end in $V_F \cup  V^{\geq 3}$ and have no internal vertices in that set of vertices. 
Note that all degree-one vertices of $G-F$ are in $V_F$. 
Hence, the graph~$G-F$ can be partitioned into $V_F \cup  V^{\geq 3}$-connecting paths. 
We can show that $|\mathcal P| \in O(\fes)$.
\begin{lemma}%
		\label{lem:fes}
	Let $G$ be a graph with no degree-one vertices and $F$ be an minimum feedback edge set of $G$. 
	The set~$\mathcal P$ of $V_F \cup  V^{\geq 3}$-connecting paths of $G-F$ has size~$O(|F|)$ and can be computed in~$O(|G|)$ time. 
\end{lemma}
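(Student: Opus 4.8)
The plan is to prove the two assertions separately, both resting on the single structural fact that, since $F$ is a feedback edge set, $T := G - F$ is a forest, and that every leaf of $T$ lies in $V_F$. I would first record this: if $\deg_T(v) = 1$, then $\deg_G(v) \ge 2$ because $G$ has no degree-one vertex, so at least one edge of $G$ incident to $v$ lies in $F$, whence $v \in V_F$. (Vertices of degree zero in $G$ are isolated and can be ignored: they are not in $V_F$, not in $V^{\geq 3}$, and carry no connecting path.)

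\textbf{Bounding $|\mathcal P|$.} Since $|V_F| \le 2|F|$ by definition, the number $\ell$ of leaves of $T$ satisfies $\ell \le |V_F| \le 2|F|$. Next I would invoke the standard fact that in any forest the number of vertices of degree at least three is at most the number of leaves, which gives $|V^{\geq 3}| \le \ell \le 2|F|$ and hence $|W| := |V_F \cup V^{\geq 3}| \le 4|F|$. Finally, the $V_F \cup V^{\geq 3}$-connecting paths partition the edge set of $T$, and their internal vertices have degree exactly two in $T$ and lie outside $W$ (all leaves of $T$ are in $V_F$, and all vertices of degree at least three in $T$ are in $V^{\geq 3}$). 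Hence $\mathcal P$ is precisely the edge set of the graph $H$ obtained from $T$ by suppressing every vertex not in $W$; since suppressing degree-two vertices of a forest again yields a forest, $|\mathcal P| = |E(H)| \le |V(H)| \le |W| \le 4|F| = O(|F|)$.

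\textbf{Computing $\mathcal P$ in $O(|G|)$ time.} Given $G$ and $F$, I would build adjacency lists and the degree array of $T = G - F$ in $O(|G|)$ time, read off $V^{\geq 3}$ from the degrees and $V_F$ from $F$, and mark the vertices of $W$. To extract the paths, I would scan the marked vertices $w \in W$ and, for each incident edge of $T$ not yet flagged, walk forward — repeatedly stepping to the unique other $T$-neighbour of the current vertex as long as that vertex is unmarked (hence of degree exactly two in $T$), flagging each traversed edge — until a marked vertex is reached, and then output the traversed path. Such a walk always terminates at a vertex of $W$ because every degree-one vertex of $T$ lies in $W$; each edge of $T$ is flagged at most once from either endpoint, so the extraction runs in $O(|G|)$ time, and isolated vertices of $T$ carry no edges and are simply skipped.

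I do not anticipate a genuine obstacle. The only point requiring care is guaranteeing that the extraction walks never run off the end of a path without meeting a terminal, and that no tree component of $G - F$ is left untouched; both follow from the hypothesis that $G$ has no degree-one vertex, which forces every leaf of $G - F$ into $V_F$ and hence into $W$.
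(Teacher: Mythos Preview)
Your proof is correct and follows the same overall strategy as the paper: bound $|V_F \cup V^{\geq 3}|$ by $O(|F|)$ and observe that $|\mathcal P|$ is at most this quantity. The main difference lies in how $|V^{\geq 3}|$ is bounded. The paper invokes an external lemma of Bentert et al.\ stating that a graph with no degree-one vertices has at most $3|F|$ vertices of degree at least three, which yields $|\mathcal P|\le 5|F|$; you instead argue directly that in the forest $G-F$ the number of vertices of degree at least three is at most the number of leaves, and every leaf of $G-F$ lies in $V_F$, which gives the slightly sharper and fully self-contained bound $|\mathcal P|\le 4|F|$. Your suppressed-forest argument for $|\mathcal P|\le |W|$ also makes explicit what the paper leaves as ``easy to verify.'' The linear-time extraction procedures differ in form (the paper repeatedly picks a leaf, traces the unique path to the next vertex of $W$, records it, and deletes it; you scan vertices of $W$ and walk along unflagged edges), but both are standard and correct.
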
 
\begin{proof}
We can compute the set~$\mathcal P$ in $O(|G|)$ time as follows. 
We start with $\mathcal P = \emptyset$ and pick any leaf~$v \in V(G-F)$ of degree one.
Recall that $v \in V_F$ and that $G-F$ is cycle-free. 
There is at most one vertex $w \in (V^{\geq 3} \cup V_F) \setminus \{ v\}$ such that there is a path $P$ between $v$ and $w$
which does not contain internal vertices from $V^{\geq 3} \cup V_F$.
Note that also $P$ is unique.
We add $P$ to $\mathcal P$ and remove $V(P) \setminus \{w\}$ from the graph.
Now we repeat this procedure with the next leaf of degree one until the graph has no edges. 

It is easy to verify that the number of paths is bounded by the number of vertices in~$V^{\geq 3} \cup V_F$.
We know that $|V_F|$ is upper-bounded by~$2 |F|$.
It remains to show that~$|V^{\geq 3}|$ is in $O(|F|)$.
 
As shown by Bentert et al.~\cite[Lemma 2]{BDKNN18}, the number of vertices with degree greater or equal to three is bounded by $3 |F|$ in a graph with no degree-one vertices. 
Hence, the number of $V_F \cup  V^{\geq 3}$-connecting paths is bounded by~$5 |F|$. 
\end{proof}
With \cref{lem:rr:lowdeg,lem:fes} we can prove \cref{thm:fpt-fes}.
\begin{proof}[Proof of \cref{thm:fpt-fes}]
Let $I = (\TGcompact,s,z,\Delta)$ be an instance of \probRestlessPath{} and 
$\UG$ be the underlying graph of $\TG$. 
Without loss of generality, we can assume that all vertices in $V(\UG) \setminus \{s,z\}$ have a degree greater than one. 
If this is initially not the case, then we safely remove all degree-one vertices of the underlying graph exhaustively in $O(|\TG|)$ time by \cref{lem:rr:lowdeg}. 
 
First we compute an minimum feedback edge set $F$ of $\UG$ in $O(|\UG|)$ time. %
Then, we compute the set~$\mathcal P$ of $V_F \cup V^{\geq 3} \cup \{s,z\}$-connecting paths of $\UG -F$ in~$O(|\UG|)$ time by \cref{lem:fes}. 
Note that the additional vertices $s$ and $z$ can increase the size of $\mathcal P$ by at most four. 
Now, for any subset of feedback edges $F' \subseteq F$ and $\mathcal P' \subseteq \mathcal P$, 
	we check whether  $F' \cup E(\mathcal P')$ form an $(s,z)$-path~$P$ in $\UG$ where $E(\mathcal P') := \bigcup_{P' \in \mathcal P'} E(P')$.   
This can be decided in $O(|\UG|)$ time by a simple breadth-first search on~$\UG$ starting at the vertex~$s$ and using only edges in $F' \cup E(\mathcal P')$. 
Last, we verify whether $P$ forms a $\Delta$-restless \nonstrpath{s,z} in~$\TG$. 
Therefore, we consider the temporal graph $\TG_P=(V,(E_i \cap E(P))_{i\in [\ell]})$ which has $P$ as underlying graph. 
Note that we can construct $\TG_P$ in $O(|\TG|)$ time, by iterating once over the set of time edges of $\TG$.
By \cref{lem:restless-path-on-a-path} we can decide in $O(|\TG_P|)$ time whether $\TG_P$ has a $\Delta$-restless \nonstrpath{s,z}. 

It is easy to check that the algorithm described above runs in~$2^{O(|F|)} |\TG|$ time. 

\proofpar{Correctness.} 
It remains to show the correctness of the algorithm.

\RArrow If our algorithm outputs \yes, then there is a $\Delta$-restless \nonstrpath{s,z} in~$\TG_P$. 
The temporal graph $\TG_P$ contains a subset of the time edges of $\TG$, hence the $\Delta$-restless \nonstrpath{s,z}  in $\TG_P$ is also present in $\TG$. It follows that $I$ is a yes-instance.

\LArrow Assume $I$ is a \yes-instance. 
Then there exists a $\Delta$-restless \nonstrpath{s,z} in the temporal graph $\TG$. 
Let $P =\big( (v_0,v_1, t_1), \dots, (v_{k-1}, v_k, t_{k} ) \big)$ be such a path. 
Hence, $P' =\big( \{v_0,v_1\}, \dots, \{v_{n-1}, v_n\} \big)$ 
is an $(s,z)$-path in the underlying graph~$\UG$. 
Let $F'=F\cap E(P')$. 
If we remove the edges in $F'$ from $P'$ then what remains is a collection of paths where each path is a concatenation of paths in $\mathcal P$.
Hence, there exists a subset~$\mathcal P'\subseteq \mathcal P$   
such that $ F' \cup E ( \mathcal P' ) = E(P)$. 
Thus, we will find $P'$ in $\UG$ and, by \cref{lem:restless-path-on-a-path}, 
we will correctly verify that this $P'$ forms a $\Delta$-restless \nonstrpath{s,z} in~$\mathcal G$.
\end{proof}

The results from \cref{sec:path:hardness,sec:fptalg,sec:structparam} provide a good picture
of the parameterized complexity landscape for \probRestlessPath, 
meaning that for most of the widely known (static) graph parameters we know 
whether the problem is in \FPT or \wone-hard or para-\NP-hard, see \cref{fig:hierarchy}.

Our understanding of the class of temporal graphs 
where we can solve \probRestlessPath{} efficiently narrows down to the following points.
We can check efficiently whether there is a \tpath{} $P$ in temporal graph $\TG$ if 
\begin{enumerate}
		\item there is a bounded number of $(s,z)$-path in $\UG$ (cf.~\cref{thm:fpt-fes} and \cref{lem:restless-path-on-a-path}),
		\item there is a bound on the length of $P$ (cf.~\cref{thm:fpt-length,obs:treedepth,obs:vertexcover}).
\end{enumerate}
Apart from that we established with \cref{thm:probRestlessPath:NPh,cor:underlying-nph,thm:probRestlessPath:W1hFVS} hardness results for temporal graphs having restricted underlying graphs, see \cref{fig:hierarchy}.

Finally, we show that we presumably cannot expect to obtain polynomial kernels for all parameters considered so far and most structural parameters of the underlying graph.
\begin{proposition}\label{prop:path:nopk}
\probRestlessPath parameterized by the number $n$ of vertices does not admit a polynomial kernel for all $\Delta\ge 1$ unless \NoKernelAssume.
\end{proposition}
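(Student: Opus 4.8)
The plan is to prove this by an \emph{OR-cross-composition} into \probRestlessPath\ parameterized by $n$~\cite{Cyg+15}, composing from the classical \NP-hard problem \probRestlessPath\ with the waiting bound fixed to the constant $\Delta$ under consideration. I would fix an arbitrary $\Delta\ge 1$ at the outset; the source problem is \NP-hard for this $\Delta$ by \cref{thm:probRestlessPath:NPh} when $\Delta=1$, and for larger fixed $\Delta$ by applying the \yes/\no-preserving transformation of \cref{obs:path:delta} a constant number of times. As polynomial equivalence relation $\mathcal R$ I would declare two instances equivalent exactly when they agree on the number $n$ of vertices, on the lifetime $\ell$, and on $\Delta$ (all malformed inputs forming one further class); this is polynomial, since these data are read off in polynomial time and an instance of size $N$ has $n,\ell\le N$. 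It then suffices to handle $t$ well-formed instances $(\mathcal G_1,s,z,\Delta),\dots,(\mathcal G_t,s,z,\Delta)$ lying in a single class: after relabelling vertices, all $\mathcal G_j=(V,(E^{(j)}_i)_{i\in[\ell]})$ share the vertex set $V$ with $|V|=n$ and the same two distinguished vertices $s,z$.

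The construction I would use simply concatenates the instances along the time axis. Put $p:=\ell+\Delta+1$ and take $\mathcal G^\star=(V,(E^\star_i)_{i\in[tp]})$; for each $j\in[t]$ and $i\in[\ell]$ set $E^\star_{(j-1)p+i}:=E^{(j)}_i$, while the remaining $\Delta+1$ layers after each such block, namely layers $(j-1)p+\ell+1,\dots,jp$, are left trivial (empty). Thus $\mathcal G^\star$ carries a time-shifted copy of $\mathcal G_j$ on the time window $T_j:=\{(j-1)p+1,\dots,(j-1)p+\ell\}$, and consecutive copies are separated by $\Delta+1$ empty layers. The output is $I^\star=(\mathcal G^\star,s,z,\Delta)$. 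It is computable in time polynomial in $\sum_j|\mathcal G_j|$, has exactly $n$ vertices, so its parameter equals $n$ which is bounded by the size of a single input instance, and its lifetime is merely $tp$ — harmless, since the lifetime is not the parameter.

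For correctness I would argue both directions. If some $(\mathcal G_j,s,z,\Delta)$ has a $\Delta$-restless \nonstrpath{s,z} $P$, then shifting every time stamp of $P$ by $(j-1)p$ produces a $\Delta$-restless \nonstrpath{s,z} in $\mathcal G^\star$, since a uniform shift preserves the differences between consecutive transition times. Conversely, let $P=((v_{i-1},v_i,t_i))_{i=1}^k$ be a $\Delta$-restless \nonstrpath{s,z} in $\mathcal G^\star$ (with $k\ge 1$ since $s\neq z$); its first transition time $t_1$ lies in some window $T_j$, because the trivial layers carry no edges. I would then show that every $t_i\in T_j$: the $t_i$ are non-decreasing, every edge of $\mathcal G^\star$ lives in some $T_{j'}$, so if $t_i\notin T_j$ for some $i$ then, taking the largest index $i'$ with $t_{i'}\le\max T_j$, we get $t_{i'+1}\ge\min T_{j+1}=\max T_j+\Delta+2$, hence $t_{i'+1}-t_{i'}>\Delta$, contradicting restlessness. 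Therefore $P$ uses only edges of the copy of $\mathcal G_j$, and un-shifting its time stamps by $(j-1)p$ yields a $\Delta$-restless \nonstrpath{s,z} in $\mathcal G_j$. Hence $I^\star$ is a \yes-instance if and only if at least one input instance is, so the map is an OR-cross-composition, and the cross-composition framework gives that \probRestlessPath\ parameterized by $n$ admits no polynomial kernel unless \NoKernelAssume{} — for this $\Delta$, and since $\Delta$ was arbitrary, for all $\Delta\ge 1$.

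The only genuine obstacle is ensuring that the number of vertices of $I^\star$ does not grow with the number $t$ of input instances, since a cross-composition is worthless if its parameter blows up. I resolve this by re-using a single vertex set $V$ across all $t$ time blocks; what makes this harmless is precisely the confinement argument above — a restless temporal path can never bridge the $\Delta+1$ empty layers inserted between consecutive blocks, so it stays within one block and hence corresponds to a restless $(s,z)$-path in the single instance residing in that block. The remaining ingredients (verifying that $\mathcal R$ is a polynomial equivalence relation, the time-shift bookkeeping) are routine.
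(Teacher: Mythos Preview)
Your proposal is correct and follows essentially the same approach as the paper: an OR-cross-composition of \probRestlessPath\ into itself, concatenating equivalent instances along the time axis on a shared vertex set and separating them by $\Delta+1$ trivial layers so that no $\Delta$-restless path can bridge two blocks. The only cosmetic differences are that you additionally put the lifetime $\ell$ into the equivalence relation and spell out the confinement argument in slightly more detail; neither changes the substance.
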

We employ the OR-cross-composition framework by~Bodlaender, Jansen, and Kratsch~\cite{BJK14} to refute the existence of a polynomial kernel for a parameterized problem under the assumption that \NNoKernelAssume, the negation of which would cause a collapse of the polynomial-time hierarchy to the third
level. In order to formally introduce the framework, we need some definitions.

An equivalence
relation~$R$ on the instances of some problem~$L$ is a
\emph{polynomial equivalence relation} if
\begin{enumerate}
 \item one can decide for each two instances in time polynomial in their sizes whether they belong to the same equivalence class, and
 \item for each finite set~$S$ of instances, $R$ partitions the set into at most~$(\max_{x \in S} |x|)^{O(1)}$ equivalence classes.  
\end{enumerate}

Using this, we can now define OR-cross-compositions.

\begin{definition}
An \emph{OR-cross-composition} of a problem~$L\subseteq \Sigma^*$ into a
parameterized problem~$P$ (with respect to a polynomial equivalence
relation~$R$ on the instances of~\(L\)) is an algorithm that takes
$n$ $R$-equivalent instances~$x_1,\ldots,x_n$ of~$L$ and
constructs in time polynomial in $\sum_{i=1}^n |x_i|$ an instance
$(x,k)$ of~\(P\) such that
\begin{enumerate}
\item $k$ is polynomially upper-bounded in $\max_{1\leq i\leq n}|x_i|+\log(n)$ and 
\item $(x,k)$ is a \yes-instance of $P$ if and only if there is an $i\in [n]$ such that $x_{i}$ is a \yes-instance of $L$. 
\end{enumerate}
\end{definition}

If an \NP-hard problem~\(L\) OR-cross-composes into a parameterized
problem~$P$, then~$P$ does not admit a polynomial kernel, unless \NoKernelAssume~\cite{BJK14}.

\begin{proof}[Proof of \cref{prop:path:nopk}]
We provide an OR-cross-composition from \probRestlessPath onto itself. 
We define an equivalence relation~$R$ as follows: Two instances
$(\TGcompact,s,z,\Delta)$ and $(\TG'=(V',
(E'_i)_{i\in[\lifetime']}),s',z',\Delta')$ are equivalent under~$R$ if and only
if $|V|=|V'|$ and $\Delta=\Delta'$. Clearly, $R$ is a polynomial equivalence relation.

Now let $(\TG_1=(V_1,
(E_{1,i})_{i\in[\lifetime_1]}),s_1,z_1,\Delta),\ldots,(\TG_n=(V_n,
(E_{n,i})_{i\in[\lifetime_n]}),s_n,z_n,\Delta)$ be $R$-equi\-valent instances
of \probRestlessPath. We construct a temporal graph $\TG^\star=(V^\star,
(E^\star_{i})_{i\in[\lifetime^\star]})$ as follows. Let $|V^\star|=|V_1|$ and
$s^\star, z^\star\in V^\star$. We identify all vertices~$s_i$ with $i\in[n]$
with each other and with $s^\star$, that is, $s^\star=s_1=\ldots=s_n$.
Analogously, we identify all vertices $z_i$ with $i\in[n]$ with each other and
with $z^\star$, that is, $z^\star=z_1=\ldots=z_n$. We arbitrarily identify the
remaining vertices of the instances with the remaining vertices from $V^\star$,
that is, let
$V^\star\setminus\{s^\star,z^\star\}=V_1\setminus\{s_1,z_1\}=\ldots=V_n\setminus\{s_n,z_n\}$.
Now let $E^\star_1=E_{1,1}, E^\star_2=E_{1,2}, \ldots,
E^\star_{\lifetime_1}=E_{1,\lifetime_1}$. Intuitively, the first instance
$(\TG_1=(V_1, (E_{1,i})_{i\in[\lifetime_1]})$ essentially forms the first
$\lifetime_1$ layers of $\TG^\star$. Then we introduce $\Delta+1$ trivial
layers, that is,
$E^\star_{\lifetime_1+1}=E^\star_{\lifetime_1+2}=\ldots=E^\star_{\lifetime_1+{\Delta}+1}=\emptyset$.
Then we continue in the same fashion with the second instance and so on. We
have that $\lifetime^\star=\sum_{i\in[n]} \lifetime_i +
(n-1)\cdot(\Delta+1)$. 

This instance can be constructed in polynomial time and the number of vertices is the same as the vertices in the input instances, 
hence~$|V^\star|$ is polynomially upper-bounded by the maximum size of an input instance. 
Furthermore, it is easy to check that $\TG^\star$ contains a $\Delta$-restless \nonstrpath{s^\star,z^\star} 
if and only if there is an $i\in[n]$ 
such that $\TG_i$ contains a $\Delta$-restless \nonstrpath{s_i,z_i}. 
This follows from the fact that all instances are separated in time by $\Delta+1$ trivial layers, 
hence no $\Delta$-restless \nonstrpath{s^\star,z^\star} can use time edges from different original instances. 
Since \probRestlessPath is \NP-hard (\cref{thm:probRestlessPath:NPh}) the result follows.
\end{proof}

\newcommand{\TFVN}{Timed Feedback Vertex Number\xspace}
\newcommand{\tfvn}{timed feedback vertex number\xspace}
\newcommand{\tfvs}{timed feedback vertex set\xspace}
\newcommand{\tfvnvar}{\ensuremath{x}\xspace}
\section{Timed feedback vertex number}
\label{sec:tfvn}

In this section we introduce a new temporal version of the well-studied ``feedback
vertex number''-parameter. Recall that by \cref{thm:probRestlessPath:W1hFVS} we
know that \probRestlessPath{} is \wone{}-hard when parameterized by the
feedback vertex number of the underlying graph. %
This motivates studying larger parameters with the goal to obtain tractability
results. We propose a new parameter called \emph{\tfvn} which, intuitively,
quantifies the number of vertex appearances that need to be removed from a
temporal graph such that its underlying graph becomes cycle-free. Note that
having vertex appearances in the deletion set allows us to ``guess'' \emph{when}
we want to enter and leave the deletion set with a $\Delta$-restless
\nonstrpath{s,z} in addition to guessing in which order the vertex appearances
are visited. 

We remark that there also have been studies of removing (time) edges from
temporal graph to destroy \emph{temporal} cycles~\cite{Haa+20}, that is, temporal paths
from a vertex back to itself. Similarly, one also could remove vertex
appearances to destroy temporal cycles, resulting in a parameter that is smaller
than the timed feedback vertex number and incomparable to the feedback vertex
number of the underlying graph. Note that the mentioned parameters aiming at
destroying temporal cycles are unbounded in our reductions. We leave the
parameterized complexity of \probRestlessPath{} with respect to those parameters
open for future research.

Before defining \tfvn formally, we introduce notation for removing vertex
appearances from a temporal graph. Intuitively, when we remove a vertex
appearance from a temporal graph, we do not change its vertex set, but remove
all time edges that have the removed vertex appearance as an endpoint. Let
$\TGcompact$ be a temporal graph and $X\subseteq V\times[\lifetime]$ a set of
vertex appearances. Then we write $\TG-X:= (V, (E'_i)_{i\in[\lifetime]})$,
where $E_i'=E_i\setminus \{e\in E_i\mid \exists (v,i)\in X \text{ with } v\in
e\}$.
Formally, the \tfvn is defined as follows.

\begin{definition}[\TFVN]
Let \TGcompact be a temporal graph. A \emph{\tfvs} of $\TG$ is a set $X\subseteq V\times[\lifetime]$ of vertex appearances such that $\ug{\TG-X}$ is cycle-free.
The \emph{\tfvn} of a temporal graph~$\TG$ is the minimum cardinality of a \tfvs of $\TG$.
\end{definition}
We can observe that for any temporal graph the \tfvn is as least as large as the feedback vertex number of the underlying graph 
and upper-bounded by the product of the feedback vertex number of the underlying graph and the lifetime. 
We further remark that the \tfvn is invariant under reordering the layers.
At the end of this section we show how a \tfvs can be computed efficiently.

The main result of this section is that \probRestlessPath{} is fixed-parameter tractable when parameterized by the \tfvn of the input temporal graph.
To this end, we show the following.
\begin{theorem}\label{thm:tfvn}
	Given a \tfvs $X$ of size $\tfvnvar$ for a temporal graph~$\TGcompact$,
	we can decide in $O(6^x x!\cdot \max\{|\TG|^3,|V|^4 x^2\})$ time,
	whether there is a~$\Delta$-restless temporal $(s,z)$-path in $\TG$, where $s,z \in V$, $\Delta \in \N$.
\end{theorem}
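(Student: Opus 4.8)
The plan is to guess how a hypothetical $\Delta$-restless \nonstrpath{s,z} $P$ crosses the vertex appearances in $X$, and then to reduce the remaining decision to a polynomial‑time routing problem inside the forest obtained by deleting~$X$. Throughout I would assume, as elsewhere in the paper, that $\{s,z\}\notin E_t$ for all $t$, and I would first apply \cref{rr:lowdeg} to get rid of degree‑one vertices.

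\textbf{The forest.} Since $X$ is a \tfvs, the underlying graph $F := \ug{\TG-X}$ is a forest on the full vertex set $V$. Call a transition of a temporal path an \emph{$X$-transition} if the time edge it uses was deleted by $X$ (that is, one of its two vertex appearances lies in $X$), and an \emph{$F$-transition} otherwise; the time edge of an $F$-transition corresponds to an edge of~$F$. A key counting observation is that any $\Delta$-restless \nonstrpath{s,z} uses at most $2x$ many $X$-transitions, since every $X$-transition is ``anchored'' at one of the at most $x$ appearances in $X$, and a path, visiting each vertex at most once, uses each appearance in at most two transitions (an incoming and an outgoing one). Consequently, deleting the $X$-transitions from such a path splits its underlying $(s,z)$-path into at most $2x+1$ subpaths, each using only edges of $F$; and in a forest the path between two prescribed vertices is \emph{unique}. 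Hence a solution is fully described by (a) the sequence of $X$-transitions it uses and (b) for each of the at most $2x+1$ forest subpaths, the two vertices at which it attaches to the rest of the path.

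\textbf{Guessing (the $6^x x!$ factor).} For part (a) I would guess the linear order in which $P$ visits the appearances of $X$ it ``touches'', together with, for each such appearance $(v,i)$, a constant amount of information: whether $P$ uses it as an endpoint of an incoming transition, of an outgoing transition, or of both, and whether the opposite endpoint of that transition is again an appearance in $X$ or lies in $F$. The time $i$ comes for free from the appearance itself, which is what makes the bookkeeping finite. This yields a list of at most $x$ ``known'' vertex–time pairs in a fixed order, a designation of which consecutive ones are joined directly by an $X$-transition and which by a forest subpath, plus the two forest ``stubs'' at $s$ and at $z$; the routine but careful count of these combinations is where the $6^x x!$ bound comes from. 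For each guess one checks directly against $\TG$ that the guessed $X$-transitions exist at the prescribed times and respect~$\Delta$.

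\textbf{Routing the forest subpaths (the polynomial factor, and the main obstacle).} Given a guess, it remains to decide whether the at most $2x+1$ forest subpaths can be realized simultaneously: for each such subpath with a ``known'' end $y$ (a guessed vertex–time pair, or $s$, or $z$), one must choose an attachment vertex $w$ — a neighbour in $\ug{\TG}$ of the adjacent anchor vertex, lying in the same tree of $F$ as $y$ — route the \emph{unique} $w$–$y$ path in $F$, and verify that (i) all chosen forest paths are pairwise vertex‑disjoint (and disjoint from the anchor vertices) and (ii) the whole concatenation is $\Delta$-restless with the timing inherited from the guessed $X$-transitions. For (ii) I would precompute, tree by tree and bottom‑up, the feasible (entry time, exit time) pairs of the unique path between candidate vertex pairs, using a tree adaptation of the dynamic program of \cref{lem:restless-path-on-a-path}. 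The hard part is combining (i) and (ii): the attachment vertices range over all of $V$ and so cannot be guessed within the budget, and I expect the cleanest resolution is to model the simultaneous routing as a feasibility / min‑cost flow problem on an auxiliary graph built from the trees of $F$ — attachment vertices as nodes with unit vertex capacity to force disjointness, one source–sink pair per forest subpath, and time feasibility encoded through the precomputed pairs (if necessary after an outer enumeration over the $O(x)$ boundary times) — and then to prove that feasible flows correspond exactly to valid families of disjoint, time‑respecting connecting paths. Multiplying the $O(6^x x!)$ guesses by the cost of this flow computation together with the tree dynamic programs yields the stated running time $O(6^x x!\cdot\max\{|\TG|^3,|V|^4 x^2\})$.
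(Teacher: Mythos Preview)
Your high-level structure matches the paper's: guess how the path interacts with the appearances in~$X$ (this is where the $6^x x!$ comes from, via a partition $I\uplus O\uplus U$ of $X$ and a $\Delta$-ordering of $I\cup O$), and then realise the at most $x+1$ forest segments between consecutive guessed anchors. Your counting of $X$-transitions and the observation that each forest segment is a unique path in $\ug{\TG-X}$ are also fine.

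The genuine gap is your third step. You propose to decide the simultaneous disjoint routing of the forest segments by a flow computation, but what you actually face is a \emph{multicommodity} integral disjointness problem (one commodity per segment, with vertex capacities), and integral multicommodity flow is \NP-hard in general; a single-commodity flow will happily mix the different segments, so ``feasible flow $\Leftrightarrow$ valid disjoint family'' does not hold. Nothing in your sketch explains how the forest structure rescues this, and the timing constraints (different feasible entry/exit time pairs per candidate) make it even less like a flow instance.

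The paper resolves exactly this step with a different idea. For each gap $i$ it enumerates all pairs of time edges $(e_1,e_2)$ leaving $v_{i-1}$ and entering $v_i$ (this is the source of the $|\TG|^2$ enumeration and hence of the $|\TG|^3$), checks via \cref{lem:restless-path-on-a-path} whether the unique forest path between the attachment vertices is $(t_{i-1},t_i)$-valid, and records the resulting vertex set in a family~$\mathcal P_i$. Picking one set per $\mathcal P_i$ so that they are pairwise disjoint is then cast as \textsc{Multicolored Independent Set} on the intersection graph of all these sets. The crucial observation you are missing is that every set in every~$\mathcal P_i$ is the vertex set of a path in a fixed forest, so by Gavril's theorem the intersection graph is \emph{chordal}; \textsc{Multicolored Independent Set} on chordal graphs is solvable in $O(3^{k}\cdot n^2)$ time, which yields the $3^x\cdot |V|^4 x^2$ part of the bound (the $|V|^2 x$ bound on the number of candidate sets, squared). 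Without this chordality/MIS step, your proof does not go through.
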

The algorithm we present to show \cref{thm:tfvn} solves \textsc{Chordal Multicolored Independent Set}, where 
given a chordal graph\footnote{A graph is \emph{chordal} if it does not contain induced cycles of length four or larger.} $G=(V,E)$ 
and a vertex coloring $c \colon V\rightarrow [k]$, we are asked to decide whether $G$ contains an independent set of size $k$ that contains exactly one vertex of each color.
This problem is known to be \NP-complete~\cite[Lemma~2]{van2015interval} and solvable in $O(3^{k}\cdot |V|^2)$ time~\cite[Proposition~5.6]{bentert2019indu}.
Our algorithm for \probRestlessPath{} roughly follows these computation steps:
\begin{enumerate}
\item ``Guess'' which of and in which order the vertex appearances from the \tfvs appear in the $\Delta$-restless \nonstrpath{s,z}.
\item Compute the path segments between two \tfvs vertices by solving a \textsc{Chordal Multicolored Independent Set} instance.
\end{enumerate}

\begin{algorithm2e}[t]
		\KwIn{Temporal graph \TGcompact with $s,z \in V$, 
		timed feedback vertex set $X$ with $s,z \not \in \{v \mid (v,t) \in X\}$,
and $\Delta \in \N$.}
		\KwOut{\yes, if there is a $\Delta$-restless temporal $(s,z)$-path, otherwise \no.}

	\medskip
	\For(\label{line:for:partition}){\textbf{each} valid partition $O\uplus I \uplus U = X$}{ 
			$\TG' \gets \TG - U$ and $x \gets |I \cup O|$.\;
			$\mathcal T \gets \TG' - (\{ v \in V \mid (v,t) \in O \cup I\} \cup \{s,z\})$.\; 
			\For(\label{line:for:order}){\textbf{each} $\Delta$-ordering $(v_0,t_0)\leq \dots \leq (v_{x+1},t_{x+1}) \text{ of } I\cup O \cup (\{s,z\} \times \{\bot\})$}{

					$\mathcal P_i \gets \emptyset$, for all $i \in [0,x+1]$\label{line:initP}.\;

					\For(\label{line:for:i}){$i\gets 1$ \KwTo $x+1$}{ 
							\lIf(\label{line:samevertex}){$v_{i-1} = v_{i}$}{ $\mathcal P_i = \{ \emptyset \}$. }
							\For(\label{line:for:ee}){\textbf{each} $e_1 = (\{v_{i-1},w\},t)$, $ e_2 = (\{u,v_{i}\},t')$ of $\TG'$ where $v_{i-1} \not = v_{i}$}{ 
								$\mathcal T' \gets \mathcal T  + \{e_1,e_2\}$\label{line:T3}.\;
								\lIf(\label{line:valid3}){$\exists$ $(t_{i-1},t_{i})$-valid \tpath[v_{i-1},v_{i}] $P$ in $\mathcal T'$}{
										$\mathcal P_i \gets \mathcal P_i \cup \{ V(P) \setminus \{v_{i-1},v_{i}\} \}$.\;
								}	
								\label{line:for:ee:end}
							}
					}
					$G \gets $ intersection graph of the multiset $\{ P^{(i)} \in \mathcal P_i \mid i \in [x+1] \}$ \label{line:intersection-graph}.\;

					Define $c \colon V(G) \rightarrow [x+1], P^{(i)} \mapsto i$.\;
					\lIf(\label{line:mc-indset}){$(G,c)$ has a multicolored independent set of size $x+1$}{
						\KwRet{\yes.}
					}
			}
	}
	\KwRet{\no.}
	\caption{FPT algorithm for \probRestlessPath parameterized by timed feedback vertex set.}
\label[algorithm]{alg:fpt-tfvs}  
\end{algorithm2e}
We give a precise description of our algorithm in \cref{alg:fpt-tfvs}.
Here, a partition $O\uplus I \uplus U$ of a set of vertex appearances $X$ is \emph{valid} if
we have $v \not=v'$, for all distinct $(v,t),(v',t') \in I$ 
and for all distinct $(v,t),(v',t') \in O$. 
A vertex appearance~$(v,t) \in I$ signals that a \tpath[s,z]{} arrives in $v$ at time~$t$
and $(v,t)\in O$ signals that it departs from~$v$ at time~$t$.  
Let $M := O\cup I \cup (\{s,z\} \times \{\bot\})$.
We call a linear ordering~$(v_0,t_0)\leq_M \cdots \leq_M (v_{x+1},t_{x+1})$ of $M$ a \emph{$\Delta$-ordering} if 
$(v_0,t_0)=(s,\bot)$, $(v_{x+1},t_{x+1})=(t,\bot)$, 
$t_i \leq t_j$ if and only if $i<j \in [x]$, and
for all $v\in V$ with $(v,t_i) \in I$ and $(v,t_j) \in O$ it holds that
$i+1=j$ and $t_i \leq t_j \leq t_i+\Delta$. 
Moreover, observe that for a vertex appearance $(v,t)\in I$, the \tpath[s,z]{} has to depart from $v$ not later than $t+\Delta$ and for vertex appearance $(v,t)\in O$, it has to arrive in $v$ not earlier than $t - \Delta$.  
To this end, we define the notion of a valid path between two consecutive vertex appearances:
\begin{definition}
		\label{def:valid}
		Let $O\uplus I \uplus U$ be a valid partition of $X$,
		and let $(v_i,t_i),(v_{i+1},t_{i+1}) \in I \cup O \cup (\{s,z\} \times \{\bot\})$ with $v_i \not = v_{i+1}$,
		and
		 $P$ a \tpath[v_{i},v_{i+1}] with 
		departure time $t_d$ and
		arrival time $t_a$.
		Then $P$ is \emph{$(t_{i},t_{i+1},I,O)$-valid} if the following holds true
		\begin{enumerate}[(i)] %
				\item $(v_{i},t_{i}) \in I \implies t_{i} \leq t_d \leq t_{i}+\Delta$,
				\item $(v_{i},t_{i}) \in O \implies t_d = t_{i}$,
				\item $(v_{i+1},t_{i+1}) \in I \implies t_a = t_{i+1}$, and
				\item $(v_{i+1},t_{i+1}) \in O \implies t_a \leq t_{i+1} \leq t_a+\Delta$.
		\end{enumerate}
		If it is clear from context, then we write $(t_{i},t_{i+1})$-valid.		
\end{definition}
Note that if there
exists a~$(t_i,t_{i+1})$-valid \tpath[v_i,v_{i+1}]{} $P_{i+1}$ and
$(t_{i+1},t_{i+2})$-valid \tpath[v_{i+1},v_{i+2}]{} $P_{i+2}$, then we can
``glue'' them together and get a  $(t_{i},t_{i+2})$-valid $\Delta$-restless
$(v_{i},v_{i+2})$-walk (not necessarily a path).
Thus if there exist a valid $\Delta$-restless temporal path between all
consecutive pairs in a $\Delta$-ordering which are pairwise vertex disjoint
(except for the endpoints), then there exist a \tpath[s,z].

The idea of \cref{alg:fpt-tfvs} is that a \tpath{} $P$ induces a valid partition of the timed feedback vertex set $X$
such that $(v,t) \in I$ if $P$ arrives $v$ at time $t$,
$(v,t) \in O$ if $P$ leaves $v$ at time $t$, or otherwise $(v,t) \in U$.
Furthermore, if we order $M:=I \cup O \cup (\{s,z\} \times \{\bot\})$ according to the traversal of $P$ (from $s$ to $z$), then this is a $\Delta$-ordering such that
a subpath $P'$ of $P$ corresponding to 
consecutive $(v,t),(v',t') \in M$ with $v\not=v'$ is $(t,t',I,O)$-valid in some temporal graph $\mathcal T'$ 
of Line \eqref{line:T3},
see \cref{fig:tfvs}.
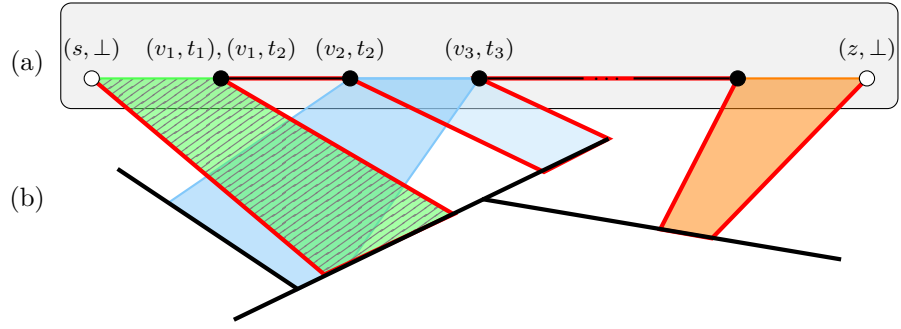
\begin{figure}[t]
\centering
	\begin{tikzpicture}[xscale=0.4,yscale=0.4,xscale=0.85]

			\draw [rounded corners,fill=gray!10] (0.8,-1) -- (0.8,2.5) -- (33.2,2.5) -- (33.2,-1) -- cycle;

			\fill [opacity=0.5,fill=ourblue!50] (12,0) -- (19.5,-3.1) -- (22,-2) --
			(17,0) -- cycle; \draw [ourblue,thick] (12,0) -- (19.5,-3.1) -- (22,-2) -- (17,0) -- cycle;

			\fill [opacity=0.5,fill=ourblue] (12,0) -- (5,-4.1) -- (10,-7) -- (12,-6.1) -- (17,0) -- cycle;
			\draw [ourblue,thick] (12,0) -- (5,-4.1) -- (10,-7) -- (12,-6.1) -- (17,0) -- cycle;

			\fill [ opacity=0.5,fill=green!70] (2,0) -- (11,-6.5) -- (16,-4.5) -- (7,0) -- cycle;
			\fill [pattern=lines,pattern color=gray] (2,0) -- (11,-6.5) -- (16,-4.5) -- (7,0) -- cycle;
			\draw [green!70,thick] (2,0) -- (11,-6.5) -- (16,-4.5) -- (7,0) -- cycle;

			\fill [opacity=0.5,fill=orange] (27,0) -- (24,-5) -- (26,-5.3) -- (32,0)   -- cycle;
			\draw [draw=orange,thick] (27,0) -- (24,-5) -- (26,-5.3) -- (32,0)   -- cycle;

			\draw [ultra thick,red] (2,0) -- (11,-6.5) -- (16,-4.5) -- (7,0) -- (12,0) -- (19.5,-3.1)
					-- (22,-2) -- (17,0) -- (27,0)
			 -- (24,-5) -- (26,-5.3) -- (32,0) 
					;
			\node[vert2,label={\footnotesize $(s,\bot)$}] (s) at (2,0) {};
			\node[vert2,black,label={\footnotesize $(v_1,t_1),(v_1,t_2)$}] (x) at (7,0) {};
			\node[vert2,black,label={\footnotesize $(v_2,t_2)$}] (v) at (12,0) {};
			\node[vert2,black,label={\footnotesize $(v_3,t_3)$}] (w) at (17,0) {};
			\node (dots) at (22,0) {\footnotesize $\dots$};
			\node[vert2,black] (y) at (27,0) {};
			\node[vert2,label={\footnotesize $(z,\bot)$}] (z) at (32,0) {};
			\draw[thick] (x) to (v);
			\draw[thick] (w) to (dots);
			\draw[thick] (y) to (dots);

			\draw[opacity=0,thick] (22,-2) to (7.5,-8);
			\draw[ultra thick] (22,-2) to (7.5,-8);
			\draw[ultra thick] (3,-3) to (10,-7);
			\draw[ultra thick] (17.1,-4) to (31,-6);

			\node at (-0.5,0.5) {(a) };%
			\node at (-0.5,-4) {(b)};%

	\end{tikzpicture}
	\caption{Illustration of \cref{alg:fpt-tfvs}, where (a) depicts the set $(\{s,z\} \times \{\bot\}) \cup I \cup O$
			and (b) sketches the underlying graph of the temporal graph $\mathcal T$
			which is a forest.
	The back solid dots correspond to one or two 
	vertex appearances. The \tpath{} is the red thick path which
	uses valid (\cref{def:valid}) $\Delta$-restless temporal  $(s,v_1)$- and $(v_1,v_2)$-paths over $\mathcal T$.}
	\label{fig:tfvs}
	\vspace{-2ex}
\end{figure}

The algorithm tries all possible partitions of $X$ and all corresponding $\Delta$-orderings.
For each of these, we store the vertices $V(P_i) \cap V(\mathcal T)$
in the family~$\mathcal P_i$, for all valid \tpath[v_{i-1},v_i]{}, where $(v_{i-1},t),(v_i,t')$ are two consecutive vertex appearances in the $\Delta$-ordering.
Here, we assume without loss of generality that no vertex
appearance of $s,z$ is in $X$.
More specifically, for each two consecutive vertex appearances
$(v_{i-1},t),(v_i,t')$ in the $\Delta$-ordering our algorithm iterates over all
pairs of time edges leading from $(v_{i-1},t)$ into the ``forest'' and
from the ``forest'' back to $(v_i,t')$. Since this fixes the entry points into
the forest in each iteration, any two $(t_i,t_{i+1})$-valid \tpath[v_i,v_{i+1}]s
present in the iteration use the same vertices of the underlying graph. Hence it
suffices to check whether one exists.
Note that, if we have $|\mathcal P_i| \geq 0$ for all $i\in \{1,\dots,x+1\}$, 
then there is a $\Delta$-restless $(s,z)$-walk in $\TG$.
Hence, to find a \tpath{}, we have to find $x+1$ pair-wise disjoint sets $P^{(1)}_1,\dots,P^{(x+1)}_{x+1}$ such that $P_i \in \mathcal P_i$.
Here, we observe that the intersection graph in Line \eqref{line:intersection-graph} is \emph{chordal} \cite{gavril1974intersection}
and use an algorithm of Bentert et al.~\cite{bentert2019indu} for \textsc{Chordal Multicolored Independent Set} as a subroutine to find such pairwise-disjoint $P^{(1)}_1,\dots,P^{(x+1)}_{x+1}$.

\begin{lemma}
		\cref{alg:fpt-tfvs} runs in $O(6^x x!x^2\cdot \max\{|\TG|^3,|V|^4\})$ time, if $x=|X| \leq |V|$.
\end{lemma}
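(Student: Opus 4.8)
The plan is to bound the running time of \cref{alg:fpt-tfvs} by going through its nested loops one by one and multiplying the number of iterations by the cost of each iteration, with the main work being to identify the dominant term and to correctly account for the cost of the \textsc{Chordal Multicolored Independent Set} subroutine. First I would count the number of valid partitions $O \uplus I \uplus U = X$: since each of the $x$ vertex appearances is independently assigned to one of the three classes, there are at most $3^x$ of them. Next, for a fixed partition with $|I \cup O| = x' \le x$, I would count the $\Delta$-orderings of the $x'+2$ elements of $M = I \cup O \cup (\{s,z\}\times\{\bot\})$: the endpoints $(s,\bot),(z,\bot)$ are pinned to the first and last positions, so there are at most $x'! \le x!$ orderings to try (the $\Delta$-validity condition only removes orderings, so this is an upper bound). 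Combining, the two outermost loops together run at most $3^x \cdot x!$ times; I would keep this factor out front.

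Inside, I would analyse the cost of one $(\text{partition}, \Delta\text{-ordering})$ iteration. The loop in \cref{line:for:i} runs $x+1$ times. For each $i$, the loop in \cref{line:for:ee} iterates over pairs of time edges $(e_1, e_2)$ of $\TG'$, of which there are at most $|\TG|^2$ (more precisely $(\sum_t |E_t|)^2 \le |\TG|^2$). For each such pair we build $\mathcal T' = \mathcal T + \{e_1,e_2\}$ and test in \cref{line:valid3} whether a $(t_{i-1},t_i)$-valid \tpath[v_{i-1},v_i]{} exists in $\mathcal T'$; since $\mathcal T$ is a forest and adding two edges keeps the relevant underlying graph a path (or a forest in which the relevant $(v_{i-1},v_i)$-path is unique), this existence test — together with reading off $V(P)$ — can be done in $O(|\mathcal T'|) \le O(|\TG|)$ time by \cref{lem:restless-path-on-a-path} (with a small amount of bookkeeping to enforce the $\Delta$-validity constraints on departure and arrival times). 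So the loops in \crefrange{line:for:i}{line:for:ee:end} cost $O((x+1)\cdot|\TG|^2\cdot|\TG|) = O(x\,|\TG|^3)$ per iteration, and produce families $\mathcal P_i$ with $|\mathcal P_i| \le |\TG|^2$ each.

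Then I would account for \crefrange{line:intersection-graph}{line:mc-indset}. The intersection graph $G$ of the multiset $\{P^{(i)} \in \mathcal P_i\}$ has at most $(x+1)\cdot|\TG|^2$ vertices, but more usefully each $P^{(i)}$ is a subset of $V(\mathcal T) \subseteq V$, so $G$ has at most $(x+1)\,n$ vertices where $n = |V|$ — indeed, for a fixed $i$ and fixed entry edges $(e_1,e_2)$ the valid path is vertex-unique, so distinct sets in $\mathcal P_i$ are few; in any case a clean bound is $|V(G)| \le (x+1)\,n^2$, and constructing $G$ takes time $O(|V(G)|^2\cdot n)$ to compare all pairs of vertex sets (each comparison $O(n)$). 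Then the \textsc{Chordal Multicolored Independent Set} instance $(G,c)$ with $x+1$ colors is solved in $O(3^{x+1}\cdot|V(G)|^2)$ time by \cite[Proposition~5.6]{bentert2019indu}; since $G$ is chordal by \cite{gavril1974intersection}, this is applicable. With $|V(G)| \le (x+1)n^2$ this is $O(3^x\cdot x^2 n^4)$; the graph-construction cost $O((xn^2)^2 n) = O(x^2 n^5)$ is dominated once we note it can be organized more carefully, or simply absorbed into $\max\{|\TG|^3, |V|^4 x^2\}$ times the $3^x$ from the subroutine.

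Putting the pieces together: one iteration of the two outer loops costs $O\!\bigl(x\,|\TG|^3 + 3^x x^2 |V|^4\bigr)$, and there are at most $3^x\cdot x!$ such iterations, giving $O\!\bigl(3^x x! \cdot (x\,|\TG|^3 + 3^x x^2 |V|^4)\bigr) \subseteq O\!\bigl(6^x x!\,x^2 \cdot \max\{|\TG|^3, |V|^4\}\bigr)$, using $3^x \le 6^x$ and folding the $x$ from the path loop into the $x^2$. \textbf{The main obstacle} I anticipate is the $|V|^4$ versus $|\TG|^3$ bookkeeping: one must argue that the intersection graph has only $O(x\,|V|)$ relevant vertices after discarding duplicates (for fixed $i$ and fixed entry edges into the forest the valid path uses the same vertex set, so $\mathcal P_i$ effectively has size $O(|V|)$ rather than $O(|\TG|^2)$ as a set), so that the \textsc{Chordal Multicolored Independent Set} call really runs in $O(3^x\,(x|V|)^2) = O(3^x x^2 |V|^2)$ and the construction of $G$ in $O((x|V|)^2\,|V|) = O(x^2|V|^3)$; this is what lets the final bound be $|V|^4 x^2$ rather than something larger in $|\TG|$. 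I would make this deduplication argument explicit, then the rest is routine multiplication.
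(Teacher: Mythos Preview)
Your approach is essentially the paper's: count $3^x$ partitions, $O(x!)$ $\Delta$-orderings, bound the inner path-enumeration by $O(|\TG|^3)$ using \cref{lem:restless-path-on-a-path}, and finish with Bentert et al.'s $O(3^{x}\,|V(G)|^2)$ algorithm on a chordal intersection graph. Two remarks.

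First, a minor tightening you missed. The paper observes that over a \emph{single} iteration of the ordering loop, each time edge of $\TG$ is used as $e_1$ (respectively $e_2$) for at most two values of $i$, because a vertex can occur at most once in $I$ and once in $O$. Hence the total number of $(e_1,e_2)$ pairs across all $i$ is $O(|\TG|^2)$, not $(x+1)\,|\TG|^2$, and Lines~\eqref{line:initP}--\eqref{line:for:ee:end} cost $O(|\TG|^3)$ rather than your $O(x\,|\TG|^3)$. Your looser bound is harmless for the final statement, but the argument is worth knowing.

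Second, and more importantly, your handling of the intersection-graph step has a real gap. You correctly worry that building $G$ by pairwise set comparison costs $O(|V(G)|^2\cdot|V|)$, which with $|V(G)|\le (x{+}1)\,|V|^2$ is $O(x^2|V|^5)$ and overshoots the target. Your proposed repair---arguing that $|\mathcal P_i|\in O(|V|)$ as a set---does not hold: the elements of $\mathcal P_i$ are vertex sets of paths in the forest $\ug{\mathcal T}$ from a neighbour $w$ of $v_{i-1}$ to a neighbour $u$ of $v_i$, and both $w$ and $u$ range independently, so up to $O(|V|^2)$ distinct sets can arise. The paper's way out is different and simpler: since every element of every $\mathcal P_i$ is (the vertex set of) a path in a fixed forest, it is determined by its two endpoints, and intersection of two tree paths can be tested in $O(1)$ time after $O(|V|)$ LCA preprocessing. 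Hence $|V(G)|\le (x{+}1)\,|V|^2$, the graph is chordal (subtree intersection graphs are chordal~\cite{gavril1974intersection}), and both constructing $G$ and solving \textsc{Chordal Multicolored Independent Set} fit in $O(3^{x}\,|V|^4 x^2)$. Replace your deduplication argument by this observation and the rest of your calculation goes through.
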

\begin{proof}
		Let $(\TG,s,z,X,\Delta)$ be the input of \cref{alg:fpt-tfvs} and
		$x := |X|$.
		There are at most~$3^x$ many iterations of the loop in Line \eqref{line:for:partition} 
		and we can check in $O(|\TG|)$ time whether a given partition $O \uplus I \uplus U = X$ is valid.
		Since there are $O(x!)$ are many $\Delta$-orderings of $I \uplus O \uplus (\{s,z\} \times \{\bot\})$,
		the number of iterations of the loop in Line \eqref{line:for:order} is also bounded by $O(x!)$.
		Furthermore, we can check in $O(|\TG|)$ time whether a 
		given permutation $((v_i,t_i))_{i=1}^{|U \uplus I|}$ of $U \uplus I$ is a $\Delta$-ordering where $s=v_0, z=v_{|U \uplus I|+1}, t_0=t_{|U \uplus I|+1}=\bot$.
		Note that during one iteration of the loop in Line \eqref{line:for:order}
		we consider an time edge of $\TG$ at most two times as $e_1$ and two times as $e_2$
		in Line \eqref{line:for:ee}.
		Hence, we have $O(|\TG|^2)$ many iteration of the loop in Line \eqref{line:for:ee}, 
		during one iteration of the loop in Line \eqref{line:for:order}.
		Observe that \cref{lem:restless-path-on-a-path} implies that we can compute a $\Delta$-restless 
		\nonstrpaths{s,z} in linear time if the underlying graph is a forest.
Moreover, each $\Delta$-restless temporal $(v_{i-1},v_i)$-path in $\mathcal T'$ departs at time $t$ and arrives at $t'$ as $e_1$ and $e_2$ are in any temporal path from $v_{i-1}$ and $v_i$.
		Hence, Line \eqref{line:valid3} can be computed in $O(|\TG|)$ time.
		Thus, we can compute Lines \eqref{line:initP}--\eqref{line:for:ee:end} in $O(|\TG|^3)$ time.
		Observe that each set in $\mathcal P_i$ is either an empty set or contains the vertices of a path in the forest $\ug{\mathcal T}$, for all $i \in [x]$.
		Hence, the intersection graph $G$ has at most $|V|^2\cdot x$ vertices and is chordal.
Thus, Line \eqref{line:mc-indset} can be computed in $O(3^x|V|^4\cdot x^2)$ time with an algorithm of Bentert et at.~\cite[Proposition~5.6]{bentert2019indu}.
		This gives an overall running time of $O(6^x x!\cdot \max\{|\TG|^3,|V|^4x^2\})$.
\end{proof}
\begin{lemma}
		\cref{alg:fpt-tfvs} is correct.
\end{lemma}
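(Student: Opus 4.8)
The plan is to prove both directions of the equivalence: the algorithm outputs \yes\ if and only if there is a $\Delta$-restless temporal $(s,z)$-path in $\TG$. For the ``if'' direction, I would start from a $\Delta$-restless \nonstrpath{s,z} $P$ in $\TG$ and show that it is detected in some iteration. First, $P$ induces a valid partition $O\uplus I\uplus U=X$ of the timed feedback vertex set: put $(v,t)\in I$ if $P$ arrives at $v$ at time $t$, put $(v,t)\in O$ if $P$ departs from $v$ at time $t$, and $(v,t)\in U$ otherwise (note $P$ visits each vertex at most once, so this is well-defined and the partition is valid in the sense defined before the algorithm). Ordering $M:=I\cup O\cup(\{s,z\}\times\{\bot\})$ by the order in which $P$ traverses these vertex appearances yields a $\Delta$-ordering $(v_0,t_0)\le\cdots\le(v_{x+1},t_{x+1})$ — here I must verify the defining conditions of a $\Delta$-ordering, in particular that an arrival $(v,t_i)\in I$ is immediately followed by the matching departure $(v,t_j)\in O$ with $j=i+1$ and $t_i\le t_j\le t_i+\Delta$, which holds because $P$ is a path (so it cannot revisit $v$ in between) and $\Delta$-restless. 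In this iteration, the subpath $P^{(i)}$ of $P$ between the $(i-1)$st and $i$th vertex appearance in $M$ lives in the temporal graph $\mathcal T'=\mathcal T+\{e_1,e_2\}$ for the appropriate choice of the two boundary time edges $e_1,e_2$ of $P^{(i)}$, and it is $(t_{i-1},t_i,I,O)$-valid by the definition of $I$, $O$, and the $\Delta$-ordering; hence the check in Line \eqref{line:valid3} succeeds and $V(P^{(i)})\setminus\{v_{i-1},v_i\}$ is added to $\mathcal P_i$. The sets $V(P^{(1)})\setminus\{\dots\},\dots,V(P^{(x+1)})\setminus\{\dots\}$ are pairwise disjoint (since $P$ is a path and internal vertices of these segments avoid $s,z$ and the vertices of $I\cup O$), so they form a multicolored independent set of size $x+1$ in the intersection graph $G$ under the coloring $c$, and the subroutine of Bentert et al.~finds one, so the algorithm returns \yes.

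For the ``only if'' direction, suppose the algorithm returns \yes\ in some iteration, fixing a valid partition $O\uplus I\uplus U$, a $\Delta$-ordering, and a multicolored independent set $\{P^{(1)}_1,\dots,P^{(x+1)}_{x+1}\}$ of size $x+1$ in $G$ with $P^{(i)}_i\in\mathcal P_i$. By construction, each $P^{(i)}_i$ comes from a $(t_{i-1},t_i,I,O)$-valid \tpath[v_{i-1},v_i]{} $Q_i$ in some $\mathcal T'$ (or $P^{(i)}_i=\emptyset$ when $v_{i-1}=v_i$, in which case the two vertex appearances are the arrival/departure at a single feedback vertex and there is no segment to realize). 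Concatenating $Q_1,\dots,Q_{x+1}$ and inserting the matching $I$/$O$ vertex appearances in between gives, by the ``gluing'' observation stated before \cref{def:valid}, a $\Delta$-restless temporal $(s,z)$-\emph{walk}. It remains to argue this walk is actually a \emph{path}: the internal vertices of the $Q_i$'s lie in $V(\mathcal T)$, are pairwise disjoint because the $P^{(i)}_i$'s form an independent set in the intersection graph, and are disjoint from $\{s,z\}$ and from the vertices of $I\cup O$ by construction of $\mathcal T$; moreover, the $I\cup O$ vertices themselves are visited at most once since for each feedback vertex at most one arrival and at most one (immediately following) departure appear in $M$ and validity forces the walk to leave at the departure time. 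Hence the walk visits every vertex at most once and is a $\Delta$-restless \nonstrpath{s,z} in $\TG$ (note $\mathcal T'\subseteq\TG$ restricted to the relevant layers, so every time edge used is present in $\TG$).

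The main subtlety — and the step I expect to require the most care — is the interplay between the $\Delta$-ordering conditions and the validity conditions of \cref{def:valid} at the \tfvs vertices: one must check precisely that ``arrival time equals $t_i$ for $(v_i,t_i)\in I$'' and ``departure time equals $t_i$ for $(v_i,t_i)\in O$'', combined with the constraint $t_i\le t_{i+1}\le t_i+\Delta$ forced when $(v,t_i)\in I$ is immediately followed by the matching $(v,t_{i+1})\in O$, exactly captures the $\Delta$-restless constraint \emph{at} the feedback vertices, while \cref{lem:restless-path-on-a-path} (applied on the forest $\ug{\mathcal T'}$, which is a disjoint union of paths once we fix the two boundary edges $e_1,e_2$) handles the constraint \emph{between} them. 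A second point needing attention is that fixing $e_1$ and $e_2$ determines the entry and exit points into the forest, so all $(t_{i-1},t_i,I,O)$-valid \tpath[v_{i-1},v_i]s in a single iteration use the same set of underlying vertices — this is what makes it sound to record just one representative set $V(P)\setminus\{v_{i-1},v_i\}$ per pair $(e_1,e_2)$ rather than enumerating all such paths. Finally, one should note that if some $\mathcal P_i$ is empty then no multicolored independent set of size $x+1$ exists (that color class has no vertex), so the algorithm correctly does not return \yes\ in that iteration, and a genuine \tpath[s,z]{}, were it to exist, would be found in a \emph{different} iteration by the ``if'' direction above.
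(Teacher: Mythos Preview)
Your proposal is correct and follows essentially the same approach as the paper's own proof: both directions are argued by (i) extracting from a given $\Delta$-restless \nonstrpath{s,z} the induced valid partition $I\uplus O\uplus U$ and $\Delta$-ordering, then showing each segment lands in the corresponding $\mathcal P_i$ and the segments form a multicolored independent set; and (ii) gluing the valid segments witnessed by a multicolored independent set into a $\Delta$-restless $(s,z)$-walk and checking it is actually a path via the disjointness guaranteed by independence. Your additional remarks on the subtleties (matching of $I$/$O$ conditions at feedback vertices, and that fixing $e_1,e_2$ pins down the underlying vertex set of the segment) are accurate and mirror the reasoning implicit in the paper.
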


\begin{proof}
Let \TGcompact be a temporal graph with $s,z \in V$ and let~$X$ be a timed feedback vertex set of \TG.
We assume without loss of generality~that $s$ and $z$ have no vertex appearance in $X$, that is, $s,z \not \in \{v \mid (v,t) \in X\}$. 
If this is not the case, then we can add a new vertex $\hat s$ to \TG and for each edge $\{s,v\} \in E_i$, we add $\{\hat s, s\}$ to $E_i$.   
It is clear that there exists a $\Delta$-restless \nonstrpath{s,z}~$P$ %
if and only if  there exists a $\Delta$-restless \nonstrpath{\hat s,z}~$\hat P$. %
The set $X$ remains a time feedback vertex set because $\hat s$ has degree one in the underlying graph \UG. Hence, we can now ask for a $\Delta$-restless \nonstrpath{\hat s,z} in \TG.  
The same holds for the vertex $z$ by a symmetric argument.

We show now that \cref{alg:fpt-tfvs} outputs \emph{yes} if and only
if there is a $\Delta$-restless
\nonstrpath{s,z} in $\TG$. 

\RArrow We claim that if we find a multicolored independent set in $(G,c)$, then there is a $\Delta$-restless
\nonstrpath{s,z} in $\TGcompact$. 
Let $D=\{P_1^{(1)}, \ldots, P_{x+1}^{(x+1)}\}$ be such an multicolored independent set, let $(v_0,t_0) \leq \cdots \leq (v_{x+1},t_{x+1})$ be the respective $\Delta$-ordering when
the set~$D$ was found, and let $I \uplus O\uplus X$ be the valid partition of~$X$. Hence, $P_i$ represents a $(t_{i-1}, t_i,I,O)$-valid $\Delta$-restless \nonstrpath{v_{i-1},v_i}.   
Due to $D$ being an independent set, it holds that $P^{(i)}_i \cap P^{(j)}_j = \emptyset$ for all $i \not = j \in [x+1]$. 
For all $i \in [x+1]$ it further holds that if $(v_i,t_i) \in I$, then $P_{i-1}$ arrives in $v_i$ at time $t_i$ and $P_i$ departs from $v_i$ not later than~$t$ with $t_i \leq t \leq t_i +\Delta$. If $(v_i, t_i) \in O$, then $P_{i-1}$ arrives in $v_i$ at time~$t$ with $ t \leq t_i \leq t +\Delta$ and $P_i$ departs from $v_i$ at time $t_i$. 
Hence, $P_{i-1} \cdot P_i$ is a $\Delta$-restless \nonstrpath{v_{i-1},v_{i+1}} in~\TG. 
Consequently, $P_1\cdots P_{x+1}$ is a $\Delta$-restless \nonstrpath{s,z} in \TG.

\LArrow Assume $\TG$ contains a $\Delta$-restless \nonstrpath{s,z} $P$%
, then let $I \uplus O \uplus U = X$ be the partition of $X$ that is induced by $P$. That is, for all $(v,t) \in I$ there exists a time edge $(w,v,t)$ in $P$,  
for all $(v,t)\in O$ there exists a time edge $(v,w,t)$ in $P$, and for all $(v,t) \in U$ there exist no time edge $(v,w,t)$ or $(w,v,t)$ in $P$.
The partition $I \uplus O \uplus U$ is a valid partition. 
Otherwise there exist two distinct vertex appearances $(v,t),(v,t') \in O$ such that there exist two time edges $(w,v,t), (u,v,t')$ in $P$ indicating that $P$ visits the vertex~$v$ twice. 
The same argument works for two vertex appearances of the same vertex in $I$.
Let $(v_1,t_1) \leq \cdots \leq (v_{x},t_{x})$ be the vertex appearances in the order in which they are visited by $P$. 
It holds that $t_1 \leq \ldots \leq t_x$ and for $i<j \in [x]$ if $v_i = v_j$, then there cannot exist a vertex appearance between $v_i$ and $v_j$ 
(otherwise~$P$ would visit $v_i$ twice). 
Thus $j = i+1$, $(v_i, t_i) \in I$, $(v_j,t_j) \in O$, and $t_{i} \leq t_j \leq t_{i} + \Delta$. 
It follows that $(s,\bot)=(v_0,t_0)\leq (v_1,t_1) \leq \cdots \leq (v_{x},t_{x}) \leq (v_{x+1},t_{x+1})=(z,\bot)$ is a $\Delta$-ordering of~$I \uplus O \uplus (\{s,z\} \times\{\bot\})$. 
Let $P_i$ be the subpath of $P$ starting in vertex $v_{i-1}$ and ending in $v_i$ for $i \in [x+1]$. 
If $v_{i-1}= v_i$, then it holds that $P_i$ is empty and $\mathcal P_i = \{\emptyset\}$~(Line~\eqref{line:samevertex}).
Otherwise, let $P_i = (e_i^{(1)}=(v_{i-1},v_i^{(1)},t_i^{(1)}),\ldots,e_i^{(p_i)}=(v^{(p_i)}_i, v_i, t_i^{(p_i)}))$. 
Note that if $(v_{i-1},t_{i-1}) \in O$, then $t_i^{(1)} = t_{i-1}$; 
if $(v_{i-1},t_{i-1}) \in I$, then $t_{i-1} \leq t_i^{(1)} \leq t_{i-1} + \Delta$;  
if $(v_{i},t_{i}) \in I$, then $t_i^{(p_i)} = t_{i}$; and  
if $(v_{i},t_{i}) \in O$, then $t_i^{(p_i)} \leq t_i \leq t_i^{(p_i)} + \Delta$.
Thus path $P_i$ is a $(t_{i-1},t_i,I,O)$-valid path in $\mathcal T + \{e_i^{(1)},e_i^{(p_i)}\}$ , 
and hence $V(P_i)\setminus \{v_{i-1},v_{i}\} \in \mathcal P_i$~(Line~\eqref{line:for:ee:end}). 
  Let $Q_i = V(P_i) \setminus \{v_{i-1},v_i\}$. 
  It holds that for $i\not = j \in [x+1]$ the paths $P_i$ and $P_j$ can intersect only in their endpoints because $P$ does not visit a vertex twice and thus~$Q_i \cap Q_j = \emptyset$. 
  For each $P_i$ there exists a vertex~$P^{(i)}_i$ in the intersection graph~$G$ representing with $c(P^{(i)}_i)=i$. 
  For $i,j \in [x+1]$, there exist no edge $\{P^{(i)}_i,P^{(j)}_j\}$ in $G$ because $Q_i \cap Q_j = \emptyset$. Hence, $G$ has a multicolored independent set~$D = \{P^{(1)}_1,\ldots,P^{(x+1)}_{x+1}\}$ of size $x+1$ and \cref{alg:fpt-tfvs} outputs~\emph{yes}.
\end{proof}

To conclude from \cref{thm:tfvn} the fixed-parameter tractability of \probRestlessPath parameterized the \tfvn, 
we need to compute a \tfvs efficiently.
This is clearly \NP-hard, since it generalizes the \NP-complete \textsc{Feedback Vertex Set} problem~\cite{Kar72}. 
However, we establish the following possibilities to compute a \textsc{Feedback Vertex Set}.
\begin{theorem}\label{prop:tfvs-all}
		A minimum \tfvs of temporal graph $\TG$ can be computed in $4^\tfvnvar \cdot |\TG|^{O(1)}$ time, where $\tfvnvar$ is the \tfvn of $\TG$.
Furthermore, there is a polynomial-time $8$-approximation for \tfvs.
\end{theorem}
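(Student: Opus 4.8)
The plan is to connect the timed feedback vertex set problem to the ordinary feedback vertex set problem on an auxiliary static graph. The key observation is that a timed feedback vertex set must destroy every cycle in the underlying graph $\UG$, but it may do so using \emph{vertex appearances} rather than vertices. So for each vertex $v$ lying on some cycle of $\UG$, a \tfvs\ either contains some appearance $(v,t)$ or it does not; if it does, which time $t$ we pick matters only insofar as it removes certain time edges. I would first reduce to the ``core'': exhaustively delete degree-$\le 1$ vertices of $\UG$ (a standard reduction that is safe because such vertices lie on no cycle), and note that after this the number of relevant vertices and cycles is controlled.

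\textbf{FPT algorithm.} First I would build a branching/iterative-compression scheme analogous to the classical $O(4^k)$ algorithm for \textsc{Feedback Vertex Set} (Cygan et al.~\cite{Cyg+15}). Concretely: while $\UG$ still contains a cycle $C$, pick a shortest such cycle. Any \tfvs\ must remove at least one time edge incident to a vertex of $C$; equivalently, for \emph{some} vertex $v\in V(C)$ and \emph{some} time $t$ with $v$ incident to an edge of $E_t$ that lies on the cycle structure, the appearance $(v,t)$ is in the solution. The subtlety over the static case is that removing one appearance $(v,t)$ does not delete $v$ from $\UG$ — it only deletes the time edges through $(v,t)$, and $v$ may still lie on cycles via other time steps. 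I would handle this by observing that a cleaner equivalent formulation is: a set $X$ of vertex appearances is a \tfvs\ iff in the bipartite-like incidence structure, every cycle of $\UG$ is ``covered'', and by a standard argument one can always assume a minimum \tfvs\ uses at most $\lifetime$ appearances per vertex but — more usefully — that the \emph{underlying} vertex set $\{v : (v,t)\in X\}$ is a feedback vertex set of $\UG$ \emph{after a single substitution step}: if $v$ appears on a cycle entirely within a single layer $E_t$, then $(v,t)$ itself is forced-candidate; if $v$ appears on cycles spanning several layers, one appearance does not suffice and $v$ must be deleted across all relevant layers. The cleanest route is therefore to guess the set $W$ of vertices that contribute \emph{at least two} appearances to $X$ — but to keep the exponent at $4^x$ I would instead mimic the measure-and-conquer analysis of \textsc{FVS}: branch on which vertex of a shortest cycle (length $\le$ some constant after reduction) has an appearance in $X$, then for that vertex either (a) it has exactly one appearance, in which case there are at most $\lifetime$ choices but the cycle length bound lets us absorb this, or (b) it contributes $\ge 2$ appearances, which by a potential-function argument can happen at most $x/2$ times. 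The running time $4^x\cdot|\TG|^{O(1)}$ then follows from the same recurrence $T(x)\le 2\,T(x-1)$ that drives the $4^k$ \textsc{FVS} bound once the measure is set up to charge multi-appearance vertices correctly.

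\textbf{Approximation.} For the $8$-approximation I would reduce to approximating \textsc{Feedback Vertex Set} on $\UG$ itself. Run the classical $2$-approximation for \textsc{FVS} on $\UG$ to get a vertex set $S$ with $|S|\le 2\cdot\mathrm{fvn}(\UG)$. For each $v\in S$, include in our \tfvs\ \emph{all} appearances $(v,t)$ over $t\in[\lifetime]$ — but that is unbounded, so instead I would include only the appearances needed to kill the cycles through $v$ in $\UG$; since $\UG - S$ is a forest, removing the edges incident to $S$ disconnects all cycles, and the relevant appearances are those $(v,t)$ such that $v$ is incident in layer $t$ to an edge of $\UG$ not inside $\UG-S$. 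The point is that $\mathrm{fvn}(\UG)\le \tfvnvar$ (stated in the excerpt) and moreover any \tfvs\ must ``charge'' at least one appearance to some vertex of each cycle, giving $\tfvnvar \ge \mathrm{fvn}(\UG)$; a more careful accounting — counting that each vertex of $S$ forces us to take its appearances along at most the edges of the at most... cycles it participates in — yields a factor-$8$ bound. The precise constant comes from combining the factor $2$ from the \textsc{FVS} approximation with a factor $4$ from the fact that a single deleted vertex in the optimum \tfvs\ may correspond to several appearances we must take, bounded using the forest structure of $\UG - S$. I would formalise this by: (1) $\tfvnvar \ge \mathrm{fvn}(\UG)$; (2) from a $2$-approx $S$ of $\mathrm{fvn}(\UG)$, the appearance set $X_S := \{(v,t) : v\in S,\ \exists w\ \{v,w\}\in E_t \text{ on a cycle of }\UG\}$ is a \tfvs; (3) $|X_S|\le 4|S|$ by charging each appearance to a distinct incident tree-edge or feedback-edge endpoint in $\UG$ using that $\UG-S$ is a forest; combining gives $|X_S|\le 8\,\tfvnvar$.

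\textbf{Main obstacle.} The delicate point, in both parts, is that deleting a vertex \emph{appearance} is weaker than deleting a vertex: a single vertex may need to survive in some layers while being cut in others, so the naive reductions to static \textsc{FVS} do not directly work. The hard part will be setting up the measure (for the FPT branching) and the charging scheme (for the approximation) so that this ``one vertex, many appearances'' phenomenon only costs a bounded multiplicative factor rather than a factor of $\lifetime$. Once that bookkeeping is in place — essentially, arguing that a minimum \tfvs\ can be assumed to use a bounded number of appearances per vertex because extra appearances at a vertex already cut by one well-chosen appearance are redundant in a forest — the recurrences and constants fall out by the same arguments as in the classical static setting.
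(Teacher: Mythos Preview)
Your proposal has genuine gaps in both parts, and the paper takes a completely different route.

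\textbf{Approximation.} Your charging argument for $|X_S|\le 4|S|$ fails, and in fact the whole approach of taking appearances of an approximate FVS of $\UG$ does not give a constant-factor approximation. Consider a triangle on vertices $u,v,w$ where the edge $\{u,v\}$ appears at every time $1,\dots,n$ while $\{u,w\}$ and $\{v,w\}$ appear only at time~$1$. Then $\tfvnvar=1$ (removing $(w,1)$ kills both edges incident to $w$), but if the FVS $2$-approximation returns $S=\{u\}$, your set $X_S$ contains $(u,t)$ for every $t\in[n]$, since $\{u,v\}\in E_t$ lies on the unique cycle of $\UG$. So $|X_S|=n$ while $\tfvnvar=1$: the ratio is unbounded. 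The core difficulty is precisely that one vertex may need many appearances deleted along one edge while the optimum deletes a single appearance elsewhere; there is no forest-based charging that fixes this.

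\textbf{FPT algorithm.} The branching sketch is too loose to yield~$4^x$. A shortest cycle in $\UG$ need not have bounded length, so ``branch on which vertex of the cycle contributes an appearance'' gives unbounded branching; the recurrence $T(x)\le 2T(x-1)$ you write down would give $2^x$, not $4^x$, and in any case is not justified. The measure-and-conquer idea for multi-appearance vertices is not made concrete, and it is unclear how the $\lifetime$ choices for the time step are ``absorbed''.

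\textbf{What the paper does.} The paper reduces computing a minimum \tfvs\ to \textsc{Subset Feedback Vertex Set with Undeletable Vertices}: each vertex appearance $(v,t)$ becomes a vertex; all appearances of the same $v$ are joined into a clique; each underlying edge $e=\{u,w\}$ is encoded by a path $e^{(u)}\text{--}e^{(T)}\text{--}e^{(w)}$ with $e^{(T)}$ a terminal and all three undeletable; and $e^{(u)}$ is adjacent to $u_t$ for each $t$ with $e\in E_t$. A set of appearances is a \tfvs\ iff the corresponding vertex set is a subset feedback vertex set avoiding the undeletable vertices. The $8$-approximation then comes directly from Even et al.'s $8$-approximation for \textsc{Weighted Subset Feedback Vertex Set} (undeletable vertices get weight $\infty$), and the $4^x\cdot|\TG|^{O(1)}$ algorithm from a further reduction to plain \textsc{Subset Feedback Vertex Set} combined with the $4^k$ algorithm of Iwata, Wahlstr\"om, and Yoshida. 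The key idea you are missing is this auxiliary static graph that turns ``which appearance to delete'' into an ordinary vertex-deletion problem.
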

To prove \cref{prop:tfvs-all}, we first show that a \tfvs{} of a temporal graph
can be computed via the following problem.
\probDef{Subset Feedback Vertex Set with Undeletable Vertices (SFVS-UV)}
{ Graph $G=(V,E)$, subsets $V^\infty,T \subseteq V$ and integer $k \in \N$.} 
{ Is there a $X \subseteq (V \setminus V^\infty)$ of size at most $k$ such that no simple cycle in $G-X$ 
	contains a vertex in $T$?}
Then the $8$-approximation for \tfvs follows from the $8$-approximation  of Even et al. \cite{EJZ00} 
for \textsc{Weighted Subset Feedback Set} \footnote{There is a straightforward reduction 
		from \textsc{SFVS-UV} to \textsc{Weighted Subset Feedback Set}
using infinite weights.}.
Now we see two ways in the literature to deduce a FPT-algorithm for \tfvs{}.
One is via a reduction from \textsc{SFVS-UV} to the more general problem \textsc{Group Subset Feedback Vertex Set}.
The other is through Cygan et al.~\cite{CPPWO13} who claim that
\textsc{SFVS-UV} is equivalent (under parameterized reductions for $k$) to \textsc{Subset Feedback Vertex Set}. 
The latter is \textsc{SFVS-UV} where $V^\infty=\emptyset$.
While the arguments of Cygan et al.~\cite{CPPWO13} only work 
if $V^\infty \cap T = \emptyset$, 
we here provide the missing arguments to show that the statement itself is true and hence fill a gap in the literature. 

We start with the reduction to \textsc{SFVS-UV}.
\begin{lemma}
		\label{lem:tfvs-to-sfvs}
	Given a temporal graph $\TG$ and an integer $x \in \mathbb{N}$, we can construct in $O(|\TG|+|V|\ell^2)$ time
	an instance $I = (G,V^\infty,T,x)$ of \textsc{SFVS-UV} 
	such that $I$ is a \yes-instance 
	if and only if
	$\TG$ has a \tfvs{} of size at most $x$.
\end{lemma}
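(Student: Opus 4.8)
The plan is to turn the global condition ``$\ug{\TG-X}$ is acyclic'' into the local condition ``$G-X'$ has no simple cycle meeting the terminal set'' by building a static graph $G$ in which deleting a vertex models deleting one vertex appearance of $\TG$. The deletable vertices of $G$ will be in bijection with the vertex appearances $(v,i)$ of $\TG$, so that a deletion set $X'$ of $G$ corresponds to the set $X=\{(v,i)\mid [v,i]\in X'\}$ of the same size; the original vertices of $\TG$ will be placed both in $V^\infty$ and in $T$ (this is exactly why the excerpt flags that $V^\infty\cap T\neq\emptyset$, which is what breaks the argument of Cygan et al.). I will then show $(G,V^\infty,T,x)$ is a \yes-instance of \textsc{SFVS-UV} iff $\TG$ has a \tfvs of size at most $x$.

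Concretely, I would first make one pass over $\TG$ in $O(|\TG|)$ time to record the underlying graph $\ug{\TG}$ and, for each $v\in V$, the sorted list of times at which $v$ is incident to an edge. Then $G$ is assembled from three kinds of pieces: a ``hub'' vertex $\widehat v\in V^\infty\cap T$ for every $v\in V$; a deletable, non-terminal vertex $[v,i]$ for every occurring appearance $(v,i)$, attached to $\widehat v$, together with a small gadget tying the appearances of a single vertex to each other (contributing $O(\ell^2)$ edges per vertex — this accounts for the $|V|\ell^2$ term); and, for every time edge $(\{v,w\},i)$, an edge-gadget carrying a dedicated terminal that joins $[v,i]$ and $[w,i]$, designed so that a cycle of $\ug{\TG}$ traversing the underlying edge $\{v,w\}$ shows up as a simple cycle of $G$ through that terminal, while an underlying edge present in several layers neither becomes a single ``cheap'' cut nor, by itself, creates a forbidden cycle. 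Finally set $k:=x$.

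For correctness I translate cycles in both directions. If $X$ is a \tfvs of size at most $x$, deleting the $\le x$ matching vertices from $G$ leaves no terminal-meeting cycle: contracting the gadgets, such a cycle would project to a cycle of $\ug{\TG-X}$ — using that each edge-gadget is traversed ``inside a single layer'' and that the hub gadgets permit switching layers only at vertices of $\ug{\TG}$ — contradicting the choice of $X$. Conversely, if $X'$ solves \textsc{SFVS-UV}, take any cycle $C$ of $\ug{\TG-X}$; for every choice of a surviving layer for each edge of $C$ one reads off a simple cycle of $G$ through a terminal (routing through the hubs when the induced choice of appearance-copies at the vertices of $C$ is conflict-free, and through the edge-gadgets directly for cycles confined to one layer), and its deletable vertices are appearances of $C$ that survive $X$; since $X'$ must hit all these cycles, $X$ must already kill some edge of $C$ in every layer, so $\ug{\TG-X}$ is acyclic and $X$ is a \tfvs.

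I expect the main difficulty to be precisely the edge-gadget design: the correct ``edge alive iff some layer is alive'' semantics forces a disjunctive (parallel) gadget, which unavoidably produces a cycle as soon as an underlying edge occurs in $\ge 2$ layers, and that cycle must be kept off $T$; yet the terminals must still detect cycles that live entirely within one layer (all of whose edge-gadgets share the appearance-copy $[v,i]$ and hence cannot be rerouted through the hubs) as well as cycles all of whose edges are multi-layer. Finding one terminal placement plus one gadget topology that meets all of these simultaneously is the technical heart of the lemma. With the gadget fixed, the running time follows: $G$ has $O(|V|+\sum_{i}|E_i|)$ vertices and $O(|V|\ell^2+\sum_{i}|E_i|)$ edges and is produced by a constant number of passes over $\TG$ and over the per-vertex appearance lists, and since $\sum_i|E_i|=O(|\TG|)$ this is $O(|\TG|+|V|\ell^2)$ time.
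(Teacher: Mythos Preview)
Your proposal is a plan rather than a proof: you explicitly leave the edge-gadget design open and call it ``the technical heart of the lemma.'' As written, the partial design also has a concrete bug. You put the hub $\widehat v$ into $T$, attach every appearance $[v,i]$ to it, and in addition add a gadget ``tying the appearances of a single vertex to each other'' with $O(\ell^2)$ edges. But then any two appearances $[v,i],[v,j]$ that are adjacent to each other and to $\widehat v\in T$ form the triangle $[v,i]-[v,j]-\widehat v$, a simple cycle through $T$ that corresponds to no cycle of $\ug{\TG}$; every vertex with at least two appearances would already force a deletion. Dropping the intra-appearance edges removes this artefact but leaves only $O(\ell)$ edges per vertex and no longer accounts for the $|V|\ell^2$ term you are trying to explain.

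The paper's construction sidesteps all of this by \emph{not} introducing hubs. For each $v\in V$ the appearances $V_v=\{v_t: v\in e\in E_t\}$ are made into a clique (this is the $|V|\ell^2$ contribution), and for each \emph{underlying} edge $e=\{u,w\}$ a single $P_3$ gadget $e^{(u)}-e^{(T)}-e^{(w)}$ is added; the three gadget vertices form $V^\infty$, and only the middle vertex $e^{(T)}$ is a terminal. Every time edge $(\{u,w\},t)$ contributes just the two edges $\{u_t,e^{(u)}\}$ and $\{w_t,e^{(w)}\}$. A cycle of $\ug{\TG-X}$ lifts to a cycle through some $e^{(T)}$ by using clique edges in $V_v$ to switch layers at vertices; conversely a \emph{shortest} terminal-meeting cycle of $G-X'$ is shown to visit at most two appearances per $V_v$ and either all or none of each $V_e$, so it projects to a cycle of $\ug{\TG-X}$. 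The multi-layer edge you worried about does create the triangle $u_{t_1}-e^{(u)}-u_{t_2}$, but since $e^{(u)}\notin T$ it is harmless; and a cycle contained in one layer is still detected because it must traverse some $e^{(T)}$. So one terminal type, placed on edges rather than on vertices, handles both of your cases uniformly and dissolves the difficulty you flagged.
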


\begin{constr}
		\label{constr:tfvs-to-sfvs}
Given a temporal graph $\TGcompact$ with underlying graph $\UG=(V,E)$, we construct the instance 
$I=(G = (V',E'),V^\infty,T,x)$ of \textsc{Subset Feedback Vertex Set with
Undeletable Vertices}, where $V' := \bigcup_{v \in V} V_v \cup \bigcup_{e \in E}
\cup V_e $ and $E' := \bigcup_{v \in V} E_v \cup \bigcup_{e \in E} E_e \cup
\bigcup_{t \in [\lifetime]}\bigcup_{e \in E_t} E_{(e,t)}$.
Here, 
\begin{align*}
	\forall v \in V \colon	V_v &:= \{ v_i \mid i\in[\lifetime], v \in e, e \in E_i
	\},\\
	\forall e=\{u,w\} \in E \colon	V_e &:= \{ e^{(u)},e^{(T)},e^{(w)} \}, \\
		T  	  &:= \{ e^{(T)}\mid e=\{u,w\} \in E  \}, \\
	\forall v \in V \colon	E_v &:= \{ \{v_i,v_j\} \mid v_i,v_j\in V_v, v_i \neq v_j \}, \\
	\forall e=\{u,w\} \in E \colon	E_e &:= \{ \{e^{(u)},e^{(T)}\},\{e^{(T)},e^{(w)}\} \}, \text{and}\\
	\forall t \in [\lifetime],\forall e=\{u,w\} \in E_t \colon E_{(e,t)} &:= \{
	\{e^{(u)},u_t\},\{w_t,e^{(w)}\} \mid u_t \in V_u, w_t \in V_w\}.
\end{align*}
\begin{figure}
\centering
	\begin{tikzpicture}[scale=0.3]

					\node  at (-8,-4) {$G$:};

					\node  at (-11,-10) {$\leadsto$};

			\node[fill,gray,inner sep=1pt] (dl) at (-5.5,-9) {};
			\node[fill,gray,inner sep=1pt] (dr) at (1.5,-9) {};
			\node[fill,gray,inner sep=1pt] (tr) at (1.5,-5.5) {};
			\node[fill,gray,inner sep=1pt] (tl) at (-5.5,-5.5) {};

			\filldraw[white!90!gray] (tl) rectangle (dr);
			\draw[draw=gray] (dl) -- (dr) -- (tr) -- (tl) -- (dl);

			\node[vertexset] (v1) at (-4,-4) {};
			\node[evertex] (e1a) at (-5,-5) {};
			\node[evertex] (e1b) at (-6,-6) {};
			\node[evertex] (e1c) at (-7,-7) {};
			\node[vertexset] (v2) at (-8,-8) {};
			\draw[edge] (v1) to (e1a) to (e1b) to (e1c) to (v2);

			\def\a{2}
			\def\b{-4}
			\node[vertexset,label=right:$V_u$] (v3) at (0+\a,0+\b) {};
			\node[evertex] (e2a) at (-1+\a,-1+\b) {};
			\node[evertex] (e2b) at (-2+\a,-2+\b) {};
			\node[evertex] (e2c) at (-3+\a,-3+\b) {};
			\node[vertexset] (v4) at (-4+\a,-4+\b) {};
			\draw[edge] (v3) to (e2a) to (e2b) to (e2c) to (v4);

			\node[vertexset] (v5) at (4,-8) {};

			\node[evertex] (e3a) at (-6.5,-8) {};
			\node[evertex] (e3b) at (-5,-8) {};
			\node[evertex] (e3c) at (-3.5,-8) {};
			\draw[edge] (v2) to (e3a) to (e3b) to (e3c) to (v4);

			\node[evertex] (e4a) at (-0.5,-8) {};
			\node[evertex] (e4b) at (1,-8) {};
			\node[evertex] (e4c) at (2.5,-8) {};
			\draw[edge] (v4) to (e4a) to (e4b) to (e4c) to (v5);

			\node[evertex] (e5a) at (-2.5,-4) {};
			\node[evertex] (e5b) at (-1,-4) {};
			\node[evertex] (e5c) at (.5,-4) {};
			\draw[edge] (v1) to (e5a) to (e5b) to (e5c) to (v3);

			\node[evertex2,fill=red] (e1b) at (-6,-6) {};
			\node[evertex2,fill=red] (e2b) at (-2+\a,-2+\b) {};
			\node[evertex2,fill=red] (e3b) at (-5,-8) {};
			\node[evertex2,fill=red] (e4b) at (1,-8) {};
			\node[evertex2,fill=red] (e5b) at (-1,-4) {};

			\node[evertex2,fill=red] (e1b) at (-6,-6) {};
			\node[evertex2,fill=red] (e2b) at (-2+\a,-2+\b) {};
			\node[evertex2,fill=red] (e3b) at (-5,-8) {};
			\node[evertex2,fill=red] (e4b) at (1,-8) {};
			\node[evertex2,fill=red] (e5b) at (-1,-4) {};

			\draw[opacity=0,-,dashed] (-9,-10) to (4,-10);

			\begin{scope}[xshift=8cm,yshift=0.5cm]
					\node[fill,gray,inner sep=1pt] (zdl) at (-8.5,-16.5) {};
					\node[fill,gray,inner sep=1pt] (zdr) at (4.5,-16.5) {};
					\node[fill,gray,inner sep=1pt] (ztr) at (4.5,-10.5) {};
					\node[fill,gray,inner sep=1pt] (ztl) at (-8.5,-10.5) {};

			\draw[thick,draw=gray] (zdl) -- (zdr) -- (ztr) -- (ztl) -- (zdl);
			\draw[draw=gray,dashed] (dl) -- (zdl);
			\draw[draw=gray,dashed] (tl) -- (ztl);
			\draw[draw=gray,dashed] (tr) -- (ztr);
			\draw[draw=gray,dashed] (dr) -- (zdr);
			\filldraw[white!90!gray] (ztl) rectangle (zdr);

			\node[evertex] (ez) at (-1,-12) {};
			\node[evertex2,fill=red] (ezr) at (0,-11) {};
			\node[evertex] (ex) at (-6,-14) {};
			\node[evertex2,fill=red] (exr) at (-8,-14) {};
			\node[evertex] (ey) at (2,-14) {};
			\node[evertex2,fill=red] (eyr) at (4,-14) {};
			\node[vert2,label={[label distance=-3pt]160:$v_1$}] (vt1) at (-2,-13) {};
			\node[vert2,label=below:$v_3$] (vt2) at (-2,-15) {};
			\node[vert2,label={[label distance=-3pt]230:$v_2$}] (vt3) at (-3,-14) {};
			\node[vert2,label={[label distance=-3pt]275:$v_4$}] (vt4) at (-1,-14) {};

			\draw[edge] (vt1) to (vt2);
			\draw[edge] (vt1) to (vt3);
			\draw[edge] (vt1) to (vt4);
			\draw[edge] (vt2) to (vt3);
			\draw[edge] (vt4) to (vt2);
			\draw[edge] (vt4) to (vt3);
			
			\draw[edge] (ex) to (vt1);
			\draw[edge] (ex) to (vt3);
			\draw[edge] (ey) to (vt1);
			\draw[edge] (ey) to[bend left=30] (vt2);
			\draw[edge] (ey) to (vt4);
			
			\draw[edge] (ez) to (vt1);
			\draw[edge] (ez) to (vt1);
			\draw[edge] (ez) to (vt1);

			\draw[edge] (ez) to (ezr);
			\draw[edge] (ex) to (exr);
			\draw[edge] (ey) to (eyr);
			\end{scope}

			\begin{scope}[scale=0.8,xshift=-23cm,yshift=-6cm]
					\node  at (-8,-4) {$\TG$:};
					\node[vert2] (w1) at (-4,-4) {};
				\node[vert2] (w2) at (-8,-8) {};
				\node[vert2,label=right:$u$] (w3) at (0+\a,0+\b) {};
				\node[vert2,label=below:$v$] (w4) at (-4+\a,-4+\b) {};
				\node[vert2] (w5) at (4,-8) {};
				\draw[edge] (w1) to node[fill=white] {$t$} (w2) to node[fill=white] {$1,2$} (w4) to node[fill=white] {$1$} (w3) to node[fill=white] {$t'$} (w1);
				\draw[edge] (w4) to node[fill=white] {$1,3,4$} (w5);
			\end{scope}
	\end{tikzpicture}
	\caption{
		An illustration of \cref{constr:tfvs-to-sfvs} for a temporal graph~$\TG$ (left) to graph $G$ (right).
		The set $V_u$ in $G$ of a vertex~$u$ in~$\TG$ is depicted by a large circle.
		The vertices in $V_e$ of an edge~$e$ in the underlying graph of~$\TG$ are filled.
		The vertices in~$T$ are squared (red).
		}
\label{fig:constr-to-wsfvs}

\end{figure}
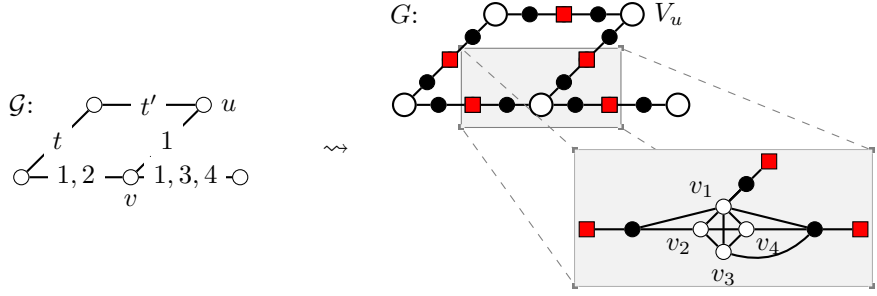
Finally we set $V^\infty := \bigcup_{e \in E} \cup V_e$.
Consider \cref{fig:constr-to-wsfvs} for an example.
\hfill$\blacklozenge$
\end{constr}

\begin{proof}[Proof of \cref{lem:tfvs-to-sfvs}]
Let $\TGfull$ be a temporal graph, $x \in \N$, and $I = (G,V^\infty,T,x)$ be
the resulting instance from \cref{constr:tfvs-to-sfvs}.
It is easy to check that \cref{constr:tfvs-to-sfvs} can be computed in $O(|\TG|+|V|\ell^2)$ time.

We now claim that there is a timed feedback vertex set
$X$ of size at most $x$ in $\TG$ if and only if 
there is a subset feedback vertex $X$ of size at most $x$ in $G$ such that $X \cap V^\infty = \emptyset$.

\RArrow
Let $X$ be a timed feed back vertex for $\TG$.
Then, set $Y := \{ v_t \in V_v \mid (v,t) \in X \}$.
Hence, $|Y| \leq x$ and $Y \cap V^\infty = \emptyset$.
We claim that $Y$ is a subset feedback vertex set for $I$.
Assume towards a contradiction that there is a simple cycle $C$ in $G - Y$ which contains a vertex of $T$.
Furthermore, we assume without loss of generality that there is no shorter cycle in $G-Y$
which contains a vertex of $T$.
Observe that this implies that $C$ does not visit three distinct vertices $ v_a,v_b,v_c \in V_v$, for any $v \in V$,
because otherwise there is a shorter cycle using one of the edges $\{v_a,v_b\}$, $\{v_a,v_c\}$ or $\{v_c,v_b\}$ in $E_v$.
Moreover if $C$ visits two distinct vertices $v_a,v_b \in V_v$, then $\{v_a,v_b\} \in E_v$ is part of $C$, for all $v\in V$,
because otherwise there is a shorter cycle using the edge $\{v_a,v_b\}$.
Furthermore, for all edge $e \in E$ we have that $ V_e \subseteq V(C)$ or $V_e \cap V(C) = \emptyset$, 
because $G[V_e]$ induces a $P_3$ and hence using only an endpoint of that $P_3$ would imply that $C$ visits two vertices $v_a,v_b \in V_v$ without the edge $\{v_a,v_b\}$, for some $v \in V$.
Since $T$ only contains the middle vertex $e^{(T)} \in E_e$ of the $P_3$ induced by $G[V_e]$,
we can observe that $C$ is a subdivision of a cycle in $\UG(\TG-X)$ which contradicts that $X$ is a timed feedback vertex set for $\TG$.

\LArrow
Let $Y \subseteq (V'\setminus V^\infty)$ be of size at most $x$ 
such that no simple cycle in $G'-X$ which contains a vertex in $T$.
We set $X := \{ (v,t) \mid v_t \in \bigcup_{u\in V} V_u \cap Y \}$.
Hence, $X$ is of size at most $x$.
We claim that $X$ is a timed feedback vertex set for $\TG$.
Assume towards a contradiction that this is not the case and there is a cycle $C$ in $\UG(\TG-X)$.
We now build a cycle in $G-Y$ containing a vertex from $T$.
Note that for each edge $e$ used in $C$ none of the vertices in $V_e$ are in $Y$, 
otherwise $V^\infty \cap Y \not= \emptyset$.
Hence, set $V_C := \bigcup_{e \in E(C)} E_e$, where $E(C)$ is the edge set of $C$.
Since any two incident edges $e_1,e_2 \in E(C)$ are in the underlying graph of
$\TG-X$, we know that there are $t_1,t_2 \in [\lifetime]$ such that $(e_1,t_1)$
and $(e_2,t_2)$ are time edges of $\TG-X$.
Hence, for two incident edges $e_1,e_2 \in E(C)$ with $\{v \}= e_1 \cap e_2$ we
pick $t_1,t_2 \in [\lifetime]$ such that $(e_1,t_1)$ and $(e_2,t_2)$ are time
edges of $\TG - X$ add $v_{t_1},v_{t_2} \in V_v$ to $V_C$.
Observe that $G[V_C]$ contains a cycle and that $V_C \cap T \not = \emptyset$.
Since we constructed $V_C$ from a cycle in the underlying graph of $\TG - X$ we have $V_C \cap Y = \emptyset$.
This is a contradiction.
\end{proof}

Now we can use the polynomial-time $8$-approximation of Even et al. \cite{EJZ00} for \textsc{Weighted Subset Feedback Vertex Set} 
and \cref{lem:tfvs-to-sfvs} to conclude the following.\footnote{Here, vertices get weight $\infty$ if there are undeletable, and one otherwise.}

\begin{corollary}\label{prop:tfvsapprox}
		There is a polynomial-time $8$-approximation for \tfvs.
\end{corollary}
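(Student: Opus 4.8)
The plan is to chain together the two reductions already developed with an off-the-shelf approximation algorithm, the crucial observation being that the reduction of \cref{lem:tfvs-to-sfvs} preserves solution sizes exactly in both directions, and not merely the yes/no answer for a fixed threshold. Concretely, I would first apply \cref{constr:tfvs-to-sfvs} to build, in polynomial time, the instance $(G=(V',E'),V^\infty,T)$ of \textsc{SFVS-UV}, and read off from the proof of \cref{lem:tfvs-to-sfvs} the following size-preserving correspondence: the map $(v,t)\mapsto v_t$ sends any \tfvs $X$ of $\TG$ to a set $Y\subseteq V'\setminus V^\infty$ of the same cardinality that meets every simple cycle of $G$ through $T$; conversely, since $V'=\bigcup_{v\in V}V_v\cup\bigcup_{e\in E}V_e$ and $V^\infty=\bigcup_{e\in E}V_e$, every subset feedback vertex set $Y$ avoiding $V^\infty$ is contained in $\bigcup_{v\in V}V_v$ and therefore pulls back, via the injective map $v_t\mapsto(v,t)$, to a \tfvs of size exactly $|Y|$. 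Hence the \tfvn of $\TG$, which I abbreviate $\mathrm{tfvn}(\TG)$, equals the minimum size of a feasible \textsc{SFVS-UV} solution for $(G,V^\infty,T)$.

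Next I would reduce \textsc{SFVS-UV} to \textsc{Weighted Subset Feedback Vertex Set} exactly as in the footnote: keep $G$ and $T$, and give every vertex of $V^\infty$ weight $+\infty$ (equivalently, any fixed weight exceeding $8|V'|$) and every other vertex weight $1$. Since a minimum-weight solution may always be taken inside the unit-weight vertices, the minimum weight of a subset feedback vertex set in this instance equals $\mathrm{tfvn}(\TG)$, a finite number. Running the polynomial-time $8$-approximation of Even et al.~\cite{EJZ00} on this weighted instance returns a subset feedback vertex set $Y^\star$ of weight at most $8\cdot\mathrm{tfvn}(\TG)$, which is finite; hence $Y^\star$ avoids $V^\infty$, so $|Y^\star|$ equals its weight and is at most $8\cdot\mathrm{tfvn}(\TG)$, and $Y^\star$ is feasible for \textsc{SFVS-UV}. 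Translating $Y^\star$ back through the correspondence above produces a \tfvs of $\TG$ of size at most $8\cdot\mathrm{tfvn}(\TG)$.

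Finally I would verify that every step — constructing $(G,V^\infty,T)$ via \cref{constr:tfvs-to-sfvs}, attaching the weights, invoking \cite{EJZ00}, and mapping the resulting vertex set back to vertex appearances — runs in polynomial time, so the entire procedure is a polynomial-time $8$-approximation for \tfvs. I do not anticipate any genuine obstacle here; the only point worth stating with care is that the equivalence in \cref{lem:tfvs-to-sfvs} is size-preserving in both directions, so that the optima truly coincide (an equality, not a one-sided inequality), since that is exactly what lets Even et al.'s multiplicative guarantee for the weighted problem transfer to \tfvs.
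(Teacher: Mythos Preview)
Your proposal is correct and follows essentially the same approach as the paper: reduce to \textsc{SFVS-UV} via \cref{lem:tfvs-to-sfvs}, then to \textsc{Weighted Subset Feedback Vertex Set} by assigning weight $\infty$ to undeletable vertices and $1$ otherwise, and apply Even et al.'s $8$-approximation. Your explicit observation that the correspondence in \cref{lem:tfvs-to-sfvs} is size-preserving in both directions is exactly the point needed to transfer the multiplicative guarantee, and the paper leaves this implicit.
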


In the remainder of this section we prove the following. 
\begin{lemma}
		\label{lem:sfvs-uv}
		Given an instance $I=(G,V^\infty,T,k)$ of \textsc{Subset Feedback Vertex Set with Undeletable Vertices} 
		we can construct in $O(k^2(|V|+|E|))$ time an instance $I'=(G,T',k')$ of
		\textsc{Subset Feedback Vertex Set} with $k'\le k$ such that $I$ is a
		\yes-instance if and only if $I'$ is a \yes-instance.
\end{lemma}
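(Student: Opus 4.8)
The plan is to eliminate the undeletable vertices by the classical ``twin blow-up'': replace each $v\in V^\infty$ by $k+1$ private copies, so that a deletion set of size at most $k$ can never remove all copies of a single vertex, and every copy is therefore as good as undeletable. For $v\in V^\infty\setminus T$ this is exactly the reduction of Cygan~et~al.~\cite{CPPWO13}; the only point to check there is that the cycles newly created among the $k+1$ twins of such a $v$ are terminal-free, hence irrelevant. The whole difficulty, and precisely the point at which their argument needs $V^\infty\cap T=\emptyset$, is a vertex in $V^\infty\cap T$: blowing it up naively creates short cycles through two of its own copies which \emph{are} relevant cycles (they run through a terminal) but correspond to no cycle of the original instance. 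Filling this gap is the real content of the lemma.

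So first I would bring the instance into a normal form that rules this out. A vertex $v\in V^\infty\cap T$ with $\deg_G(v)\le 1$ lies on no cycle and may simply be discarded (removed from $G$ and from $T$), which is trivially safe. Then I would subdivide once every edge incident to a vertex of $V^\infty$, placing each subdivision vertex into $V^\infty$ but \emph{not} into $T$: such a vertex has degree two and is non-terminal, so a cycle of $G-X$ passes through it if and only if the original cycle used the subdivided edge; hence this step changes neither the family of relevant cycles nor the set of admissible deletions, and it leaves $k$ unchanged. After it, every vertex of $V^\infty$ --- in particular every remaining vertex of $V^\infty\cap T$ --- has all of its neighbours in $V^\infty$, so a blown-up terminal can never share a neighbour with one of its twins through a deletable vertex.

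The next step is the blow-up itself on the normalised instance: replace each $v\in V^\infty$ by an independent set $\{v^1,\dots,v^{k+1}\}$, keep $V\setminus V^\infty$ untouched, set $T':=\{v^i\mid v\in V^\infty\cap T,\ i\in[k+1]\}\cup(T\setminus V^\infty)$ and $k':=k$, and connect the copies of adjacent $V^\infty$-vertices (and the copies of a $V^\infty$-vertex to a non-$V^\infty$ neighbour) in the standard way. Correctness is then a transfer of cycles. For the direction ``$I$ is a yes-instance $\Rightarrow I'$ is a yes-instance'' one reuses a solution $X\subseteq V\setminus V^\infty$ of $I$ verbatim; if $G'-X$ contained a relevant cycle one takes a shortest one, and using that twins are independent and that after normalisation no vertex of $V^\infty\cap T$ shares a neighbour with one of its twins, a rerouting argument (send the cycle through the twin it already visits) forces it to use at most one copy of each original vertex, whereupon projecting copies back yields a relevant cycle of $G-X$ --- a contradiction. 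For the converse, a solution $X'$ of $I'$ of size at most $k$ may be assumed to contain no copy vertex (it cannot contain all $k+1$ copies of any $v$, and the surviving twin can take over), so $X'\subseteq V\setminus V^\infty$; then $X:=X'$ is admissible for $I$, and lifting any supposed relevant cycle of $G-X$ (re-expand the subdivided edges and pick a single copy index along each maximal $V^\infty$-segment) contradicts the validity of $X'$.

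The step I expect to be the main obstacle is the single sentence in the forward direction: that after normalisation the blow-up creates no spurious \emph{terminal} cycle, i.e.\ that the rerouting/minimality argument never degenerates into a length-two ``cycle''. This is exactly what breaks when $V^\infty\cap T\ne\emptyset$, and it is what the two normalisation moves (discarding degree-$\le 1$ terminals, and subdividing so that terminal undeletables have only undeletable neighbours) are designed to repair. Everything else is bookkeeping: the preprocessing is linear in $|V|+|E|$, the blow-up multiplies the numbers of vertices and edges by $O(k)$ (by $O(k^2)$ for edges between two undeletable vertices if one fully connects the copies), and inspecting the resulting graph fits into $O(k^2(|V|+|E|))$ time, with $k'=k\le k$.
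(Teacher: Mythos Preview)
Your normalisation does not achieve what you claim. Take $v\in V^\infty\cap T$ of degree two with neighbours $w,w'\notin V^\infty$. After subdivision you have $w-x-v-y-w'$ with $x,y\in V^\infty\setminus T$, so $v$ indeed has only undeletable neighbours; but this does \emph{not} stop the copies of $v$ from lying on spurious terminal cycles after the blow-up. If you connect copies of adjacent $V^\infty$-vertices by a complete bipartite graph, then $v^1$ and $v^2$ are both adjacent to every $x^j$, and $v^1-x^1-v^2-x^2-v^1$ is a terminal $4$-cycle of $G'$ that your rerouting argument collapses to a length-two walk (exactly the degeneration you flag). If instead you use a matching ($v^i$ adjacent only to $x^i,y^i$), then $v^1$ and $v^2$ do have disjoint neighbourhoods, so no rerouting is available at all, while $x^1,x^2$ still share the deletable vertex~$w$; you get the spurious terminal cycle $v^1-x^1-w-x^2-v^2-y^2-w'-y^1-v^1$. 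For a clean counterexample to the reduction itself, take $G$ to be the disjoint union of two such paths $w_j-v_j-w_j'$ ($j=1,2$), set $V^\infty=T=\{v_1,v_2\}$ and $k=1$. Then $I$ is a \yes-instance (take $X=\emptyset$), but your $I'$ contains two vertex-disjoint terminal cycles, one per component, which no single deletion can hit; so $I'$ is a \no-instance under either interpretation of ``the standard way''.

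The paper's fix is different in kind: it does \emph{not} make the copies of an undeletable terminal into terminals. After arranging $V^\infty\subseteq T$ and that $G[V^\infty]$ is independent, for each $v_i\in V^\infty$ it creates $k+1$ non-terminal copies $q_i^1,\dots,q_i^{k+1}$, and for each neighbour $w$ of $v_i$ it inserts a length-three path $w-w_i-w_i'-q_i^j$ (the pair $w_i,w_i'$ depends only on $w$, the last edge on $j$), declaring the degree-two vertex $w_i$ to be the new terminal. Any cycle through $w_i$ must use both $\{w,w_i\}$ and $\{w_i,w_i'\}$, hence continue via some $q_i^j$ to some other $u_i'$ and then $u$; this forces it to project to a genuine cycle through $v_i$ in $G$. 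The cycles that live among the $q_i^j$'s and the $w_i'$'s do exist, but they are terminal-free and therefore harmless. Relocating the terminal to these gateway vertices is precisely the idea your subdivision-then-blow-up scheme lacks.
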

Note that the running time of the algorithm behind \cref{lem:sfvs-uv} depends only linearly on the size of the graph.
The proof of \cref{lem:sfvs-uv} is deferred to the end of this section.
First, we introduce two data reductions rules and then perform the reduction behind \cref{lem:sfvs-uv} in two steps.
We use these data reduction rules to get an equivalent instance where $G[T \cap V^\infty]$ is an independent set.
We start by detecting some \no-instances.
\begin{rrule}
		\label{rr:undeletable-cycle}
		Let $I=(G,V^\infty,T,k)$ be an instance of \textsc{Subset Feedback Vertex Set with Undeletable Vertices}
		such that there is a vertex $u$ and a simple cycle $C$ intersecting $T$ where $V(C) \setminus (T \cap V^\infty) \subseteq \{u\} \subseteq V^\infty$.
		Then output a trivial \no-instance.
\end{rrule}
We now show that we can detect in linear-time whether \cref{rr:undeletable-cycle} is applicable and that it is \emph{safe}.
The latter means that the application of \cref{rr:undeletable-cycle} does not turn a \yes-instance into a \no-instance or vice versa.
\begin{lemma}
		\label{lem:undeletable-cycle}
		\cref{rr:undeletable-cycle} is safe and can be applied in linear time.
\end{lemma}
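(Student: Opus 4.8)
The plan is to handle the two claims separately. For \emph{safeness}, I would argue that whenever the precondition of \cref{rr:undeletable-cycle} holds, the instance $I=(G,V^\infty,T,k)$ is genuinely a \no-instance, so replacing it by a trivial \no-instance preserves the answer. Suppose there is a simple cycle $C$ intersecting $T$ with $V(C)\setminus(T\cap V^\infty)\subseteq\{u\}\subseteq V^\infty$. For any candidate solution $X\subseteq V\setminus V^\infty$ of size at most $k$, note that $X$ cannot contain $u$ (since $u\in V^\infty$) and cannot contain any vertex of $T\cap V^\infty$ (again undeletable). But $V(C)\subseteq\{u\}\cup(T\cap V^\infty)\subseteq V^\infty$, so $X\cap V(C)=\emptyset$. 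Hence $C$ survives in $G-X$, and since $C$ is a simple cycle intersecting $T$, $X$ is not a subset feedback vertex set. Therefore no valid $X$ exists and $I$ is a \no-instance, which is exactly what outputting a trivial \no-instance records. The converse direction of safeness is vacuous: a trivial \no-instance is a \no-instance, and we have just shown $I$ is too.

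For the \emph{algorithmic} part, I would show that one can decide in linear time whether the precondition is met. The key structural observation is that a simple cycle $C$ with $V(C)\setminus(T\cap V^\infty)\subseteq\{u\}$ lives almost entirely inside the induced subgraph $G[T\cap V^\infty]$, using at most the one extra vertex $u\in V^\infty$. So the search splits into two cases. First, $u\notin V(C)$: then $C$ is a simple cycle entirely within $G[T\cap V^\infty]$, which exists iff some connected component of $G[T\cap V^\infty]$ is not a tree; this is checkable by a single DFS/BFS over $G[T\cap V^\infty]$ counting edges versus vertices per component. Second, $u\in V(C)$: then $C-u$ is a simple path in $G[T\cap V^\infty]$ between two distinct neighbors of $u$ that lie in $T\cap V^\infty$, together with the two edges incident to $u$; such a path exists iff two such neighbors fall in the same connected component of $G[T\cap V^\infty]$ (with the degenerate possibility that $u$ together with a single neighbor $v\in T\cap V^\infty$ and two parallel-looking incident structures — but $G$ is simple, so this reduces to $u$ having a neighbor in $T\cap V^\infty$ whose component, after removing the edge $\{u,v\}$ if we want $C$ simple, still allows a return path; I would simply require two distinct neighbors of $u$ in the same component, or a neighbor $v$ such that $v$ lies on a cycle of $G[(T\cap V^\infty)\cup\{u\}]$). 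Concretely: iterate once over all vertices $u\in V^\infty$; for each, look at $N(u)\cap T\cap V^\infty$ and test, using precomputed component labels of $G[T\cap V^\infty]$, whether two of these neighbors share a component, \emph{or} whether $C$ can be closed through a single neighbor when $G[T\cap V^\infty]$ already has a cycle in that component. Precomputing the component structure of $G[T\cap V^\infty]$ is $O(|V|+|E|)$, and the subsequent scan over $V^\infty$ with its adjacency lists is also $O(|V|+|E|)$.

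The main obstacle I anticipate is getting the case analysis for ``$u\in V(C)$'' exactly right without over- or under-counting: one must be careful that the path $C-u$ in $G[T\cap V^\infty]$ is \emph{simple} and that its two endpoints are \emph{distinct} neighbors of $u$, while also covering the corner case where the cycle $C$ reuses the component structure of $G[T\cap V^\infty]$ itself (so that $u$ contributes essentially one edge to an already-present cycle). I would resolve this cleanly by observing that the precondition is equivalent to: \emph{either} $G[T\cap V^\infty]$ contains a cycle, \emph{or} there exists $u\in V^\infty$ such that $G[(T\cap V^\infty)\cup\{u\}]$ contains a cycle through a vertex of $T$ — and the latter, since adding a single vertex $u$ to a forest creates a cycle iff $u$ has two neighbors in the same tree-component, reduces to the simple ``two neighbors in one component'' test above (the subcase where $G[T\cap V^\infty]$ already has a cycle is subsumed by the first disjunct). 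Once phrased this way, correctness of the test is immediate and the linear-time bound follows from a constant number of graph traversals.
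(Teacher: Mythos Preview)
Your proposal is correct and follows essentially the same approach as the paper: both argue safeness by observing that $V(C)\subseteq V^\infty$ makes $C$ an undestroyable witness, and both detect the precondition by first testing whether $G[T\cap V^\infty]$ is a forest and then, for the remaining case, checking whether some $u\in V^\infty\setminus T$ has two neighbors in the same tree-component of $G[T\cap V^\infty]$. The only cosmetic difference is the orientation of the second test---the paper iterates over components and marks their outside neighbors in $V^\infty\setminus T$, while you iterate over candidate $u$'s and look up precomputed component labels of their neighbors; these are dual linear-time implementations of the same criterion. (One small cleanup: in your final test you should explicitly restrict to $u\in V^\infty\setminus T$, since for $u\in T\cap V^\infty$ the ``two neighbors in one component'' check would fire vacuously without certifying a cycle.)
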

\begin{proof}
Since $C$ is a witness that $I$ is a \no-instance, \cref{rr:undeletable-cycle} is safe.
We check whether there is cycle $C$ such that  $ V(C) \setminus (T \cap V^\infty) = \emptyset$
by simply checking whether $G[T \cap V^\infty]$ is a forest.
Assume that $G[T \cap V^\infty]$ is a forest, otherwise we are done and output a trivial \no-instance.
First, we partition $V(G[T \cap V^\infty]) = Q_1 \uplus \dots \uplus Q_c$ such that $Q_i$ is a maximal connected component of $G[T \cap V^\infty]$. Clearly, this can be done in linear time.
For each connected component $Q_i$ of $G[T \cap V^\infty]$
we first unmark all vertices in $V\setminus (T \cap V^\infty)$. 
Then we iterate over all vertices $v \in Q_i$ and mark all vertices in $N_G(v) \cap (V^\infty \setminus T)$.
If we find a vertex $w \in Q_i$ such that there is a vertex $u \in N_G(w) \cap (V^\infty \setminus T)$ which is already marked,
then the path from $v$ to $w$ in $Q_i$ together with $u$ is a cycle $C$ where $V(C) \setminus (T \cap V^\infty) \subseteq \{u\} \subseteq V^\infty$. Hence, we output a trivial \no-instance in this case.
Moreover, if there is some simple cycle $C$ where $V(C) \setminus (T \cap V^\infty) \subseteq \{u\} \subseteq V^\infty$,
then all vertices $V(C) \setminus \{u\}$ belong to the same connected component of $G[T \cap V^\infty]$.
Thus, the above described procedure will find $C$.
A simple application of the Handshaking Lemma shows that this procedure ends after linear time.
\end{proof}

The purpose of the next data reduction rule is to merge undeletable terminal vertices which do share an edge.
\begin{rrule}
		\label{rr:merge}
		Let $I=(G,V^\infty,T,k)$ be an instance of \textsc{Subset Feedback Vertex Set with Undeletable Vertices}
		such that there is $\{v,w\} \in E$ with $\{v,w\} \subseteq T \cap V^\infty$ and $N_G(v) \cap N_G(w) \cap V^\infty = \emptyset$.
		Then set $\widehat G = (V(G - w),E(G-w) \cup \{ \{v,u\} \mid \{w,u\} \in E \})$
		and $G' :=\widehat G - (N_G(v) \cap N_G(w))$.
		Output $I'=(G',V^\infty \cap V(G'), T\cap V(G'),k - |N_G(v)\cap N_G(w)|)$.
\end{rrule}
Note that for each edge in $G[T \cap V^\infty]$, either \cref{rr:undeletable-cycle} or \cref{rr:merge} is applicable.
While it is easy to apply \cref{rr:merge} exhaustively in polynomial time, 
more effort is required to do the same in linear time.
To this end, we will construct in linear time an equivalent instance such that \cref{rr:undeletable-cycle,rr:merge} are not applicable.

\begin{lemma}
		\label{lem:rr:merge}
Given an instance $I$ of \textsc{Subset Feedback Vertex Set with Undeletable Vertices}, we can compute an equivalent instance $I'$ of \textsc{Subset Feedback Vertex Set with Undeletable Vertices} in linear time such that \cref{rr:undeletable-cycle,rr:merge} are not applicable to $I'$. 
\end{lemma}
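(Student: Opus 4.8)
The plan is to decouple the two rules: first handle \cref{rr:undeletable-cycle} (whose only effect is to certify a \no-instance), and then apply \cref{rr:merge} to \emph{all} of $G[T\cap V^\infty]$ in a single batch rather than edge by edge. First I would run the linear-time procedure of \cref{lem:undeletable-cycle} to test whether \cref{rr:undeletable-cycle} applies; if it does, output a trivial \no-instance (chosen so that, vacuously, neither rule applies to it) and stop. Otherwise, the analysis in the proof of \cref{lem:undeletable-cycle} tells us that $G[T\cap V^\infty]$ is a forest with connected components $Q_1,\dots,Q_c$, that no vertex $u\in V^\infty\setminus T$ has two neighbours in a common $Q_i$, and that no vertex of $(T\cap V^\infty)\setminus Q_i$ has a neighbour in $Q_i$. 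In particular, every vertex $u\notin\bigcup_i Q_i$ with $|N_G(u)\cap Q_i|\ge 2$ for some $i$ is deletable, i.e.\ $u\in V\setminus V^\infty$.

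Next I would build $G'$ from $G$ by contracting each $Q_i$ to a single vertex $q_i^{*}$ (keeping $q_i^{*}$ in $T$ and in $V^\infty$, and discarding the resulting multi-edges and loops) and then deleting every vertex that has at least two neighbours in a common $Q_i$; let $d$ be the number of deleted vertices and set $I':=(G',\,V^\infty\cap V(G'),\,T\cap V(G'),\,k-d)$, replacing it by a trivial \no-instance if $k-d<0$. I would then show that $I'$ is exactly the instance obtained from $I$ by applying \cref{rr:merge} exhaustively to the edges of $G[T\cap V^\infty]$. Contracting an edge $\{v,w\}$ of some $Q_i$ merges two ``blobs'' of the partially contracted tree and deletes precisely the common neighbours of these two blobs; a blob-merging argument on the tree $Q_i$ then shows that, over the whole contraction, the set of vertices ever deleted is exactly $\{u\notin Q_i : |N_G(u)\cap Q_i|\ge 2\}$, with each such $u$ deleted exactly once (so the budget drops by precisely $d$, matching). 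Since each application of \cref{rr:merge} is safe -- the common neighbours of $v$ and $w$ are all deletable by Step 1, and every solution must delete all of them, because the triangles $v$--$u$--$w$ cannot be broken at the undeletable vertices $v,w$, while contracting the edge $\{v,w\}$ between two undeletable terminals never affects the answer -- the instance $I'$ is equivalent to $I$.

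It remains to check that neither rule applies to $I'$ and that the whole computation is linear. In $G'$ the vertices of $T'\cap V^{\infty\prime}$ are exactly $q_1^{*},\dots,q_c^{*}$, and they are pairwise non-adjacent (an edge between images of distinct $Q_i,Q_j$ would put those components together). Hence $G'[T'\cap V^{\infty\prime}]$ is edgeless, so \cref{rr:merge} is not applicable; and by the characterisation of applicability of \cref{rr:undeletable-cycle} extracted from \cref{lem:undeletable-cycle} (a cycle inside $G[T\cap V^\infty]$, or a vertex with two neighbours in one component of it), \cref{rr:undeletable-cycle} is not applicable either, since every component of $G'[T'\cap V^{\infty\prime}]$ is a single vertex. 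For the running time: Step~1 is linear by \cref{lem:undeletable-cycle}; the components $Q_i$ are found by one depth-first search; the $d$ vertices to delete are found in a single pass over all adjacency lists, maintaining for each vertex a ``last component seen'' marker and using sparse sets (as in the proof of \cref{lem:rep}, cf.~\cite{BL93}) so that the markers need not be reinitialised; and $G'$ is assembled in one further pass, the total being linear by the Handshaking Lemma.

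I expect the delicate part to be the equivalence proof -- establishing that the batched ``contract-and-delete-high-multiplicity-common-neighbours'' operation really does coincide with the exhaustive application of \cref{rr:merge} (the blob-merging analysis, ruling out that a vertex is charged to the budget twice and that an undeletable vertex is ever forced into the deletion set) -- together with implementing the whole thing in linear rather than the naive quadratic time one gets by performing the merges one edge at a time.
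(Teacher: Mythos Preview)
Your proposal is correct and follows essentially the same construction as the paper: first dispose of \cref{rr:undeletable-cycle} via \cref{lem:undeletable-cycle}, then contract each connected component $Q_i$ of $G[T\cap V^\infty]$ to a single vertex, delete every vertex with at least two neighbours in a common $Q_i$, and decrease $k$ by the number of deletions. The only difference is in how equivalence is argued: the paper proves directly that $I$ and $I'$ are equivalent (one forward and one backward argument, replacing $q_i$ by a path in $Q_i$ and vice versa), whereas you take the more indirect route of first establishing that the batched operation coincides with an exhaustive application of \cref{rr:merge} and then invoking the safeness of each single step. Your detour through the ``blob-merging'' analysis is unnecessary---the paper's direct equivalence proof is shorter and avoids having to argue that the order of merges is immaterial---but it is not wrong.
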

\begin{proof}
		We first check in linear time by \cref{lem:undeletable-cycle} whether \cref{rr:undeletable-cycle} is applicable. 
		Assume that \cref{rr:undeletable-cycle} is not applicable, otherwise we are done.
		Hence, $G[T \cap V^\infty]$ is a forest.
		We now aim to apply \cref{rr:merge} for all edges in $G[T \cap V^\infty]$ at once.
		First, we partition $V(G[T \cap V^\infty]) = Q_1 \uplus \dots \uplus Q_c$ 
		such that $Q_i$ is a maximal connected component of $G[T \cap V^\infty]$. 
		Then we replace $Q_i$ with a fresh vertex $q_i$.
		To this end, we construct the graph~$G' := (V', E')$ where
		\begin{align*}
				V' := &(V \setminus (T \cap V^\infty)) \cup \{ q_i \mid i \in [c] \} \text{ and}\\
				E' := &\{ \{a,b\} \in E \mid \{a,b\} \subseteq V' \} \cup \{ \{q_i,w\} \mid w \in V' \text{ and } \exists v \in Q_i \colon \{v,w\} \in E \}.
		\end{align*}
		Note that $G'$ and hence $V'^\infty = (V^\infty \cap V') \cup \{ q_i \mid i \in [c]\}$ 
		and $T' = (T \cap V') \cup \{ q_i \mid i \in [c] \}$ can be computed in linear time.
		To compute the remaining budget~$k'$, we set $K = \emptyset$.
		Then, for each connected component $Q_i$ of $G[T \cap V^\infty]$, 
		we first unmark all vertices. 
		Second, we iterate over all vertices $v \in Q_i$ and mark all vertices in $N_G(v) \setminus (T \cap V^\infty)$.
		If we find a vertex $w \in Q_i$ such that a vertex $u \in N_G(w) \setminus (T \cap V^\infty)$ which is already marked,
		then we add $u$ to~$K$, because $G[Q_i \cup \{ u\}]$ contains a cycle intersecting $T$ where $u$ is the only deletable vertex.
		Recall that $u \not\in V^\infty$, since \cref{rr:undeletable-cycle} was not applicable.
		A simple application of the Handshaking Lemma show that this procedure ends after linear time.
		If $k < |K|$, then we return a trivial \no-instance, 
		because for each vertex $v \in K$ there is a simple cycle 
		intersecting $T$ where $v$ is the unique vertex not in $V^\infty \cap T$.
		Otherwise, we output $I' = (G' - K,V'^\infty,T' \setminus K,k' = k - |K|)$.
		It is easy to verify that \cref{rr:merge} is not applicable in $I'$.
		We now claim that $I$ is a \yes-instance
		if and only if
		$I'$ is a \yes-instance.

		\RArrow
		Assume that $X$ is a solution for $I$.
		Note that for each vertex $v \in K$, there is a simple cycle 
		intersecting $T$ where $v$ is the unique vertex not in $V^\infty \cap T$.
		Hence, $K \subseteq X$.
		Set $X' := X \setminus K$ and observe that $X' \subseteq V(G')$. %
		The set $X'$ is a solution for $I'$, because it is of size at most $k'$ 
		and for each cycle in $G'-K$ which intersects to $T' \setminus K$,
		we can construct a cycle in $G$ which intersects to $T$ 
		by replacing a vertex $q_j \in \{ q_i \mid i \in [c] \}$ with a path in $G[Q_j]$.

		\LArrow
		Assume that $X'$ is a solution for $I'$.
		We set $X := X' \cup K$ and note that $X$ is of size at most $k$.
		We may assume towards a contradiction that $X$ is not a solution for $I$.
		Hence there is a simple cycle $C$ in $G - X$ which contains a vertex of $T$.
		We now construct a closed walk of $C'$ in $G'- T$ from $C$ by replacing a 
		subpath on vertices in $Q_i$ with $q_i$, for all $i \in [c]$.
		Since we only replaced vertices from $T \cap V^\infty$ of $I$ with vertices in $T' \cap V'^\infty$,
		the closed walk $C'$ contains a simple cycle in $G'-X'$ containing a vertex from $T'$---a contradiction.
\end{proof}

We now show an algorithm to dispose all vertices undeletable vertices in $T$
such that the running time dependence only linearly on the size of the graph.
\begin{lemma}
		\label{lem:remove-non-terminals}
		Let $I=(G,V^\infty,T,k)$ be an instance of \textsc{Subset Feedback Vertex Set with Undeletable Vertices}.
		We can construct in $O(k(|V|+|E|))$ time 
		an instance $I'=(G',V'^\infty,T',k')$ of \textsc{Subset Feedback Vertex Set with Undeletable Vertices}
		such that 
		\begin{enumerate}
			\item $k'\leq k$, 
			\item $V'^\infty \setminus T' = \emptyset$,
			\item $I$ is a \yes-instance if and only if $I'$ is a \yes-instance.
		\end{enumerate}
\end{lemma}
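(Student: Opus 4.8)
The target is to eliminate every undeletable non-terminal vertex, i.e.\ to arrive at $V'^\infty\subseteq T'$. My plan has three stages, each of which keeps $k$ unchanged: (1) reject, in linear time, any instance in which $G[V^\infty]$ contains a cycle through a vertex of $T$ (this is trivially a \no-instance, and it is the mild strengthening of \cref{rr:undeletable-cycle} that makes stage~(2) safe); (2) contract each connected component of $G[V^\infty\setminus T]$ into a single undeletable non-terminal vertex, so that afterwards $G[V^\infty\setminus T]$ is edgeless; (3) replace each (now pairwise non-adjacent) vertex $v\in V^\infty\setminus T$ by $k+1$ ``twins'' $v^1,\dots,v^{k+1}$ which are deletable non-terminals, each made adjacent to exactly $N_G(v)$. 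We then set $k':=k$, $T':=T\cap V(G')$ and $V'^\infty:=V^\infty\cap T\subseteq T'$, so the structural requirement $V'^\infty\setminus T'=\emptyset$ holds by construction.

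The correctness of the twinning stage~(3) rests on the collapsing map $\phi$ sending each twin $v^i$ to $v$ and fixing all other vertices, which is legitimate since after stage~(2) the neighbourhood of every twinned vertex consists only of non-twinned vertices. If $X\subseteq V\setminus V^\infty$ is a solution before stage~(3) then $X\subseteq V(G')$, and any simple cycle of $G'-X$ through a terminal $t$ maps under $\phi$ to a closed walk of $G-X$ through $t$; since $t$ is not a twin it occurs exactly once on this walk, so repeatedly cutting out detours yields a simple cycle through $t$ in $G-X$ --- a contradiction. Conversely, given a solution $X'$ of the twinned instance, for every $v\in V^\infty\setminus T$ at least one twin $v^{\sigma(v)}$ is missing from $X'$ because $|X'|\le k<k+1$; setting $X:=\{x\in X':x\text{ is not a twin}\}\subseteq V\setminus V^\infty$ and lifting any simple cycle of $G-X$ through a terminal via $\sigma$ on twinned vertices (and the identity elsewhere) produces a simple cycle of $G'-X'$ through the same terminal --- again a contradiction. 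Intuitively, the $k+1$ copies make $v$ effectively undeletable, and since $v$ was undeletable anyway nothing is lost.

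Stage~(2) is where most care is needed. Contracting a connected set $C\subseteq V^\infty\setminus T$ of undeletable non-terminals to one undeletable non-terminal $w_C$ preserves \yes/\no-answers: a solution of $G$ misses all of $C$ and stays a solution of $G/C$ (a cycle through $T$ in $(G/C)-X$ using $w_C$ once re-expands to a cycle through $T$ in $G-X$ via a path inside $C$, and this expansion is a genuine simple cycle because $G/C$ is simple), and a solution $X'$ of $G/C$ misses $w_C$, so taking $X:=X'$ and lifting any cycle through $T$ in $G-X$ --- contracting each of its sub-paths inside $C$ and then extracting a simple cycle through the un-contracted terminal it must contain --- gives a cycle through $T$ in $(G/C)-X'$; the one case in which this last step could fail, namely a cycle consisting of a single terminal and a path inside $C$, is exactly what stage~(1) has already removed (there the terminal would be undeletable, hence the cycle unbreakable). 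After stage~(2), $G[V^\infty\setminus T]$ being edgeless, stage~(3) adds only $\sum_{v\in V^\infty\setminus T}(k+1)\deg_G(v)=O(k|E|)$ edges and $O(k|V|)$ vertices, and all three stages run in $O(k(|V|+|E|))$ time (stage~(1) by a linear-time forest/cycle test on $G[V^\infty]$, stage~(2) by a standard component contraction). The main obstacle, then, is not any single stage but their interplay: one must simultaneously keep the construction small enough for the claimed near-linear running time --- which forces the contraction in~(2) and hence the pre-check in~(1) --- while verifying that every stage is answer-preserving, for which the recurring tool is the elementary fact that a closed walk through an un-duplicated vertex contains a simple cycle through that vertex.
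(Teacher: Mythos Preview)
Your three-stage plan is essentially the paper's argument, decomposed: the paper also replaces each connected component of $G[V^\infty\setminus T]$ by $k+1$ non-adjacent twins (your stages~(2) and~(3) in one step), after a pre-check. There is, however, a genuine gap in stage~(2) that stage~(1) does not close. You assert that the only way extracting a simple $T$-cycle from the contracted walk can fail is for the original cycle to be ``a single terminal and a path inside $C$'', and that in that case ``the terminal would be undeletable''. The second claim is unjustified: nothing forces the terminal into $V^\infty$. If a \emph{deletable} terminal $t\in T\setminus V^\infty$ has two neighbours in one component $C\subseteq V^\infty\setminus T$, then $G$ contains the cycle $t-(\text{path in }C)-t$; every solution of $I$ must contain $t$, yet your simple-graph contraction collapses this cycle to the length-two walk $t-w_C-t$, and stage~(1) does not fire because $t\notin V^\infty$. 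Concretely, the triangle on $\{t,a,b\}$ with $T=\{t\}$, $V^\infty=\{a,b\}$, $k=0$ is a \no-instance, but your construction outputs the single edge $\{t,w_C\}$ (and then a star after stage~(3)), a trivial \yes-instance. The ``elementary fact'' you invoke --- that a closed walk through an un-duplicated vertex contains a simple cycle through it --- fails on exactly such length-two walks.

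What is missing is an additional pre-processing step: for every terminal $t$ with at least two neighbours in a single component of $G[V^\infty\setminus T]$, reject if $t\in V^\infty$ and otherwise delete $t$ and decrement $k$. After this, in any surviving $T$-cycle the two neighbours of $t$ in the contracted walk $D'$ are always distinct, and a simple cycle through $t$ is recovered as $t$ together with a shortest path between those neighbours inside $D'-t$. For what it is worth, the paper's own proof shares this oversight --- its preprocessing via \cref{rr:undeletable-cycle} and \cref{rr:merge} addresses $T\cap V^\infty$, not this situation, and does not rule out the triangle above either --- although in the paper's intended application every component of $G[V^\infty\setminus T]$ is a singleton, so the issue is vacuous there.
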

\begin{proof}
		Let $I=(G,V^\infty,T,k)$ be an instance of \textsc{Subset Feedback Vertex Set with Undeletable Vertices}.
		Since we aim for a running time of $O(k(|V|+|E|))$, 
		we can assume that \cref{rr:undeletable-cycle,rr:merge} are not applicable on $I$, see \cref{lem:rr:merge}.

		The goal now is to duplicate each vertex in $V^\infty \setminus T$ $k+1$ times, 
		such that we cannot delete all of them even if there are not in $V^\infty$.
		Note that we might create many copies of an edge doing this naïvely.
		To avoid this, we observe that it suffices to replace 
		each maximal connected component in $G[V^\infty \setminus T]$ with $k+1$ vertices.
		We partition $V(G[V^\infty \setminus T]) = Q_1 \uplus \dots \uplus Q_c$ 
		such that $Q_i$ is a maximal connected component of $G[V^\infty \setminus T]$
		and we construct~$G' := (V',E')$, where
		\begin{align*}
				V' &:= (V \setminus (V^\infty\setminus T)) \cup \{ q_i^j \mid i \in [c],j \in [k+1] \} \text{ and }\\
				E' &:= E(G[V \setminus (V^\infty\setminus T)]) \cup \{ \{q_i^j,w\} \mid \exists v \in Q_i \colon \{v,w\}\in E\}.
		\end{align*}
		Since $|\{ q_i^j \mid i \in [c],j \in [k+1] \}| \leq |V|(k+1)$ and 
		we copy each edge at most $k+1$ times, $G'$ is constructed after $O(k(|V|+|E|))$ time.

		We now claim that $I$ is a \yes-instance 
		if and only if $I' := (G',T' := V^\infty \cap V(G'),T,k)$ is a \yes-instance 
		of \textsc{Subset Feedback Vertex Set with Undeletable Vertices}.
		
		\RArrow
		Let $X$ be a solution for $I$.
		Then $X$ is also a solution for $I'$, 
		because $X \cap V^\infty = \emptyset$ and for each cycle $C'$ in $G'$ containing a vertex from $T$, 
		we can construct a closed walk in $G$ by replacing a vertex $q_i^j$ by a path in $G[Q_i]$.
		This closed walk induces a simple cycle $C$ in $G$ containing a vertex in $T$. 
		Hence, $C'$ also contains a vertex from $X$.

		\LArrow
		Let $X$ be a solution for $I'$.
		We assume without loss of generality that $X \cap \{ q_i^j \mid i \in [c],j \in [k+1] \} = \emptyset$.
		Suppose towards a contradiction that $X$ is not a solution for $I$.
		Let $C$ be an cycle in $G$ satisfying $V(C) \cap X = \emptyset \not = V(C) \cap T$.
		We construct a closed walk $C'$ in $G'$ from $C$ by replacing each maximal consecutive subpath in $C$
		containing only vertices from $Q_i$ with $q_i^1$, for all $i \in [c]$.
		Note that $C$ contains a simple cycle which intersects $T'$ but not $X$---a contradiction.
\end{proof}

Now we are finally ready to prove \cref{lem:sfvs-uv}.
\begin{proof}[Proof of \cref{lem:sfvs-uv}]
		First, we apply \cref{lem:remove-non-terminals} on $I$ and 
		hence assume that $V^\infty \setminus T = \emptyset$.
		Furthermore, by \cref{lem:rr:merge} we assume that \cref{rr:merge,rr:undeletable-cycle} are not applicable.
		Thus, $G[V^\infty \cap T]$ is an independent set.
		We now create $k+1$ copies of each vertex in $V^\infty \cap T$ such that 
		we cannot remove all of them, even if there are deletable.
		However, we have to be careful what kind of new cycles this creates.
		We do the following.
		Let $V^\infty \cap T = \{v_1,v_2,\dots,v_p\}$ and take a fresh set of 
		vertices $Q_i :=\{ q_i^j \mid j \in [k+1]\}$ for each $i \in [p]$.
		We construct $G' := (V',E')$, where
		\begin{align*}
				V' := &(V \setminus (V^\infty \cap T)) \cup \bigcup_{i=1}^p Q_i  \cup \{ w_i',w_i \mid \{v_i,w\} \in E \}, \text{ and}\\
		E' := &\{ \{v,w\} \in E \mid v,w \in V'\} \cup \\
				&\{ \{w_i,w\},\{w_i,w_i'\}, \{w_i',q_i^j\}  \mid \{v_i,w\} \in E, i \in [p]\} , j\in[k+1] \}.
		\end{align*}
		Output $I'=(G',T',k)$, 
		where $T' := (T \setminus V^\infty) \cup \{ w_i \mid \{v_i,w\} \in E \}$.
		Since $|\{ q_i^j \mid i \in [c],j \in [k+1] \}| \leq |V|(k+1)$ and
		we create for each edge at most $k+2$ new edges, $G'$ is constructed after $O(k(|V|+|E|))$ time.
		Together with the preprocessing of \cref{lem:remove-non-terminals} this gives an overall running time of $O(k^2(|V|+|E|))$.

		We now claim that $I$ is a \yes-instance 
		if and only if $I'$ is a \yes-instance of \textsc{Subset Feedback Vertex Set}.

		\RArrow
		Let $X$ be a solution for $I$.
		Observe that $X \subseteq V'$, because $V \setminus V' \subseteq V^\infty$.
		Assume towards a contradiction that there is a cycle $C'$ in $G'$ such that $V(C') \cap X = \emptyset$.
		Assume without loss of generality that $C'$ is a shortest of the set of cycles satisfying $V(C') \cap X = \emptyset$.
		Hence, for each $i \in [p]$ we have that  $|V(C') \cap Q_i| \leq 1$,
		and since  $G[V^\infty \cap T]$ is an independent set, 
		a vertex $q_i^j \in  V(C') \cap Q_i$ has two neighbors $w'_i,u'_i$, where $w,u \in V$.
		Thus, we can construct a cycle $C$ in $G$ by replacing each subpath $w'_i,q_i^j,u'_i$ 
		with the vertex $v_i \in T \cap V^\infty$, where $q_i^j \in  V(C') \cap Q_i$ and $i\in[p],j\in[k+1]$.
		This contradicts $X$ being a solution for~$I$.

		\LArrow
		Let $X'$ be a solution for $I'$.
		For all $i \in [p]$, we assume without loss of generality that
		\begin{description}
				\item[$X' \cap Q_i = \emptyset$:] 
						This can be assumed, because $|Q_i| > k \geq |X'|$ and all vertices in $Q_i$ have the same neighborhood.
				\item[$X' \cap \{ w_i \mid \{v_i,w\} \in E \} = \emptyset$:]
						This can be assumed, because these vertices are of degree two and thus a vertex $w_i$ can be replaced by its origin $w$.
				\item[$X' \cap \{ w_i' \mid \{v_i,w\} \in E \} = \emptyset$:]
						Such a vertex $w_i'$ can be replaced by its origin $w$ as well,
						because for each cycle $C'$ in $G'$ passing through $T'$ which does not include $w$,
						we know that $v_i'q_i^jw_i'q_i^{j'}u_i'$ is a subpath of $C'$ for some $v,u\in V$, $q_i^j,q_i^{j'} \in Q_i$, and $j,j' \in [k+1]$.
						Hence $C'-\{q_i^j,w_i\}$ is also a cycle in $G'$ that contains a vertex from $T'$.
						Thus, there is $V(C')\cap (X\setminus \{w_i'\}) \not= \emptyset$.
		\end{description}
		Hence $X' \subseteq V \setminus V^\infty$.
		Now assume towards a contradiction that there is a cycle $C$ in $G$ which does not contain a vertex in $X'$.
		Hence there $v_i \in V(C) \cap (T \cap V^\infty)$, otherwise $C$ is also a cycle in $G'$.
		We construct a cycle $C'$ in $G'$ from $C$ by replacing each subpath $u,v_i,w$ in $C$ with
		$u,u_i,u'_i,v_i,w'_i,w_i,w$, for all $v_i \in V(C)\cap (T \cap V^\infty)$.
		This contradicts $X'$ being a solution because $u_i \in T'$.
\end{proof}
By using \cref{lem:tfvs-to-sfvs,lem:sfvs-uv} we now can compute a minimum \tfvs by any known \textsc{Subset Feedback Vertex Set} algorithms.
Thus, \cref{lem:tfvs-to-sfvs,lem:sfvs-uv,prop:tfvsapprox} together with the
algorithm of Iwata et al.~\cite{IWY16} imply \cref{prop:tfvs-all}.

\section{Conclusion}\label{sec:path:conclusion}
We have analyzed the (parameterized) computational complexity of
\probRestlessPath, a canonical variant of the problem of finding temporal
paths, where the waiting time at every vertex is restricted. 
Unlike its non-restless counterpart or the ``walk-version'', this problem turns out to be computationally hard, 
even in quite restricted cases. 
On the positive side, we give an efficient algorithm to find short restless temporal paths 
and we could identify structural parameters of the underlying graph and of the temporal graph itself 
that allow for fixed-parameter algorithms.

\subsection*{Acknowledgements.}
We thank the referees for their careful reading and constructive comments which
significantly improved the presentation of these results.

\bibliographystyle{plainurl}%

\bibliography{literature}

\end{document}